\newcommand\fs@boxedtop
   \def\@fs@mid{\vspace\abovecaptionskip\relax}%
   \let\@fs@iftopcapt\iftrue
\newcommand{\tsiw}{\tilde{s}_{i,\omega}}
\newcommand{\tmu}{\tilde{\mu}}
\begin{document}
\raggedbottom
\emergencystretch=2em
\title{The Wisdom of the Crowd and Higher-Order Beliefs}
\author[Chen]{Yi-Chun Chen$^\text{A}$}
\address{$^{\text{\MakeLowercase{a}}}$Department of Economics and Risk
Management Institute, National University of Singapore\\
\href{mailto:ecsycc@nus.edu.sg}{ecsycc@nus.edu.sg}}
\author[Mueller-Frank]{Manuel Mueller-Frank$^\text{B}$}
\address{$^\text{\MakeLowercase{b}}$IESE Business School, University of
Navarra\\
\href{mailto:mmuellerfrank@iese.edu}{mmuellerfrank@iese.edu}}
\author[Pai]{Mallesh M. Pai$^\text{C}$}
\address{$^\text{\MakeLowercase{c}}$Department of Economics, Rice University%
\\
\href{mailto:mallesh.pai@rice.edu}{mallesh.pai@rice.edu}}
\address{\today}
\thanks{We are especially grateful to Paul Huang Enze for his exceptional research assistance on this paper, including helping with executing the experimental work. We also thank Samuel Ho, Mingzi Niu, Hang Shao, and Xinhan Zhang for excellent research assistance. We thank Itai Arieli, Rahul Deb, Ben Golub, Tai-Wei Hu, Yanwei Jia, Scott Kominers, Ran Smorodinsky, Tom Wilkening; and seminar/conference audiences at a16z Crypto Research, Caltech, Games 2020, IIM Ahmedabad, Bonn, IDEAL Institute at Northwestern, and the Summer Game Theory Workshop 2025 for helpful comments and discussions.  Chen gratefully acknowledges the financial support from the Singapore Ministry of Education Academic Research Fund Tier 1. Mueller-Frank gratefully
acknowledges the financial support of the Spanish Ministry of Science,
Innovation and Universities (Ref. ECO2015-63711-P: MINECO/FEDER, UE) and the
Department of Economy and Knowledge of the Generalitat de Catalunya (Ref.
2017 SGR 1244). Pai gratefully acknowledges support from the NSF
(CCF-1763349). The paper source files, the experimental data, and the full open-source replication package are publicly available at \url{https://github.com/malleshpai/robust-aggregation}. }

\begin{abstract}
We propose a new simple procedure called \emph{Population-Mean-Based Aggregation} (PMBA) that enables a principal to ``aggregate'' information about an unknown state of the world from agents without understanding the information structure among them. PMBA only requires agents to communicate their beliefs about the state, and some agents to communicate their expectations of the population average belief. In a large population, for any finite number of possible states, and under weak assumptions on the information structure, allowing individual agents' beliefs to be misspecified, we show that PMBA infers the true state (in probability or almost surely under the stated conditions). We show how PMBA can be reinterpreted as a linear regression procedure, and how it can be used to aggregate information from a finite number of agents, allowing us to reuse existing results on inference in linear models. We conduct a novel experiment to show that the real-world performance of our procedure exceeds that of existing methods.

\bigskip

\textsc{Keywords:} aggregating beliefs, higher-order beliefs, wisdom of the crowd. \\
\noindent \textsc{JEL Classification:} D82, 
D83.\end{abstract}

\maketitle

\section{Introduction}
A long-held belief, starting from at least Aristotle, is that of the ``wisdom of the crowd''. This refers to the idea that even though some event may be uncertain, and the information each individual has may be noisy, the ``aggregate'' of individual beliefs is accurate. Such aggregation obviously has large social and private value, especially when it concerns important social or economic events.
As a consequence, there has been substantial effort toward both improving existing institutions (e.g., polling), and designing and implementing new methods (e.g., prediction markets).%
\footnote{There is also a large and influential body of literature that studies the possibility of such aggregation within our current institutions, e.g., voting (see e.g.,~\cite{austen1996information},~\cite{feddersen1997voting}), markets (e.g.,~\cite{grossman1980impossibility}) and social networks (see e.g.,~\cite{golub2010naive},~\cite{acemoglu2011bayesian}).}
At a theoretical level, then, it is important to understand the limits of such an exercise: without specific assumptions on the nature of the information agents have, how can we aggregate it, and how much can we learn from it? In short, how wise is the crowd, and how can we tap into its wisdom?

To fix ideas, imagine a population of interest and an event whose realization is uncertain. Individual agents each see a signal that is informative of the realization. Agents also understand the informational environment, i.e., what this signal implies for their own beliefs and the implied distribution of other agents' beliefs. Even assuming away the difficulties (both logistical and strategic) of eliciting the relevant information from the population, the question we address in this paper is: Does there exist a general procedure to aggregate individuals' information without knowledge of the information structure among them? 

One important insight from the existing literature is that even if the state is binary, and agents' signals are i.i.d.\ conditional on the state, knowing the first-order beliefs of even an infinite set of agents is generally insufficient to learn the state.\footnote{See~\cite{prelec2017solution} and~\cite{arieli2017crowd}.} In particular, even in these idealized conditions, most agents may be wrong, i.e., most agents may place a posterior probability of larger than 50\% on the incorrect state of the world. The main takeaway from~\cite{arieli2017crowd} is that without further strong assumptions about the signal structure, the principal cannot learn the state of the world from solely agents' first-order beliefs, even with an infinite number of agents in the population. In econometric terms, there is an identification problem.

We answer our question in the affirmative by proposing a novel, simple procedure we name \emph{Population-Mean-Based Aggregation} (herein PMBA). Our procedure uses agents' beliefs about the uncertain event as well as some agents' expectations of the population mean belief. Under weak conditions on the information structure generating the agents' signals, this procedure \emph{fully aggregates} the information among agents in a large population. In other words, using information from agents, solely about agents' own beliefs, and their expectations of the average belief in the population (the \emph{population average belief}), but with no other knowledge of the underlying information structure, we can fully learn the underlying state. Importantly, we show that our procedure is \emph{robust}: it allows for aggregation of beliefs even when beliefs are misspecified at both the individual and aggregate levels. That is to say, we start from a ``correctly specified'' model where agents have beliefs and higher-order beliefs from a common prior and show that our PMBA procedure correctly aggregates agents' information without knowledge of the underlying common prior. We then show that this procedure continues to correctly aggregate even if individual beliefs are \emph{arbitrarily} misspecified, as long as aggregate misspecification is not too large (in a sense we make precise below). We show how this procedure can be reinterpreted as a linear regression procedure, and how it can be used to aggregate information from a finite number of agents. The machinery of linear regression then allows us to show various forms of robustness of our procedure. In short, our procedure is a viable candidate to be used in practice.  

The idea of eliciting and using higher-order beliefs is not new. In our context, we note the elegant ``Surprisingly Popular'' (herein, SP) procedure developed by~\cite{prelec2017solution} (henceforth PSM), who pioneered the use of second-order beliefs for the purpose of information aggregation (however, their procedure aggregates information on a narrow set of information structures---we discuss this in more detail in Section~\ref{sec:rellit}). Another close comparison is to the Surprisingly Confident (SC) procedure of~\cite{doi:10.1287/mnsc.2020.3919}, who also leverage the linearity of the correspondence between beliefs and higher-order beliefs, and its implication for the ordering of average beliefs and average second-order beliefs in each state (see also Theorem~1.4 of~\cite{prelec2017solution} and contemporaneous approaches by~\cite{libgober2021hypothetical} and~\cite{prelec2022general}, described in Section~\ref{sec:rellit}). A key differentiator is that SP/SC are both only guaranteed to work under binary states, i.e., two possible states of the world, and require stronger assumptions more generally, as we discuss below.

In line with this literature, it is important to note that we are solely using agents' beliefs and expected higher-order beliefs. Such information has been robustly elicited in laboratory and real-world settings, and therefore, our procedure is amenable to implementation. We show in Section~\ref{sec:experiment} that our procedure performs well in our experimental setting. By contrast, theoretically feasible procedures that ask agents for more detailed information such as state-contingent beliefs (for example, the agent's expectation of the average belief in the population \emph{conditional} on the true state of the world) may not be reliable or even feasible in practice. There is a wealth of evidence that agents struggle to report richer details (such as state-contingent beliefs).\footnote{For example,~\cite{li2017obviously} states: ``More generally, laboratory subjects find it difficult to reason state-by-state about hypothetical scenarios~\citep{charness2009winnerscurse, esponda2014hypothetical, ngangoue2021learning}. This mental process, often called ``contingent reasoning,'' has received little formal treatment in economic theory. In subsequent work,~\cite{vespa2019contingent} investigate axioms governing contingent reasoning in single-agent decision problems.''}
Similarly, we may not be able to elicit the underlying prior; we take seriously the idea that the prior/ex-ante stage is an ``as-if'' assumption.\footnote{See for example the discussion on page 926 in~\cite{gul1998comment}.}

\subsection{Summary of Model and Results}
Our baseline environment (Section~\ref{sec:pmba}) features finitely many states and many agents whose signals are drawn from an information structure that is \emph{unknown to the principal} but understood by the agents. The contributions of the paper can be read as three complementary steps:
\begin{enumerate}
    \item \textbf{Large-population benchmark.} Theorem~\ref{thm:pmba} shows that, under weak regularity conditions, the principal can recover the true state by combining the population mean of first-order beliefs with higher-order reports from at least as many agents as there are states.\footnote{Agents interact with the mechanism at the interim stage, so asking for priors is neither meaningful nor feasible; see also the discussion above on why the ex-ante prior is often not directly elicitable.} The higher-order data resolve the indeterminacy that prevents standard crowd wisdom schemes from working when there are more than two states---an issue emphasised by~\cite{prelec2017solution}.
    \item \textbf{Finite-sample and misspecification robustness.} Section~\ref{sec:linear} reinterprets the inversion step as a linear regression problem, so that classical tools deliver inference (including confidence intervals), diagnostics, and robustness checks, and Section~\ref{sec:experiment} implements an IV-based version when endogeneity in reported beliefs is a concern. This perspective clarifies how much heterogeneity in misspecification can be tolerated before aggregation fails.
    \item \textbf{Empirical performance.} Section~\ref{sec:experiment} reports a new experiment in which subjects forecast NFT price ranges. PMBA dominates majority voting and the Surprisingly Popular procedure in this setting, indicating that the additional elicitation burden is repaid in improved accuracy.
\end{enumerate}

The core intuition is simple. Suppose the principal elicits each agent's first-order belief about the unknown state. By the law of large numbers, the sample mean converges to the expected population belief in the true state. Without knowing the mapping from states to those expectations, the principal cannot yet identify the state. Let $L$ denote the number of states. By additionally querying $L$ agents (one per state) for their forecast of the population mean, the law of iterated expectations furnishes a system of $L$ linear equations whose unique solution is precisely that unknown mapping. The rest of the paper develops the conditions under which this logic survives realistic complications and documents its performance.

\section{Population-Mean-Based Aggregation }\label{sec:pmba}
In this section, we introduce our core model; we describe our Population-Mean-Based Aggregation procedure (herein PMBA) and prove its validity. 

\subsection{General Model and Notation}
There is a population of agents, $N=\{1,2, \ldots \}$, which may be countably infinite (this section), or finite (Section~\ref{sec:linear}), with $i$ denoting a generic agent. There is a finite set of $L$ states of the world $\Omega = \{\omega_1, \omega_2, \ldots, \omega_L \}$, so $L=|\Omega|$ is the number of states. The space of feasible beliefs over $\Omega$ is the $(L-1)$-dimensional simplex $\Delta^L$ (i.e., the set of probability vectors in $\mathbb{R}^L$); we denote it by $\Delta(\Omega)$ when emphasizing the state space.

Each agent $i$ observes a signal in a set $S$ with the associated sigma field $\mathcal{F}$. To allow for correlation, let us denote the common prior over $\Omega \times S^N$ by $P$. Given that an agent observes some signal $s \in S$, their posterior belief that the state is $\omega$ is given by $P(\omega\mid s)$. We let $P_{i,\omega}^S$ denote the induced marginal distribution of $P$ over signals for agent $i$ in state $\omega$. Let us denote the posterior beliefs of agent $i$ by $\mu_i(s) \in \Delta^L$ and $\tmu_i$ as the associated random variable. In what follows, the actual signal seen by the agent will often be irrelevant since we are concerned with their beliefs. We will, therefore, suppress the dependence on the signal in our notation. We define
\begin{align*}
    &m_i(\omega) = \mathbb{E}[\tmu_i|\omega].
\end{align*}
In words, $m_i(\omega)$ is the expected belief of agent $i$ in state $\omega$. 

Finally, we define $\bar{\mu}(\omega) = \lim_{n \to \infty} \frac1n \sum_{i=1}^nm_i(\omega)$.\footnote{Note that without further assumptions, it is possible that this limit might not exist and/or might be different depending on the ordering of agents. We ignore these possibilities. Simple sufficient conditions (e.g., conditional i.i.d.\ signals) rule out these concerns with an infinite population, and they are moot for any arbitrarily large finite population.} That is, $\bar{\mu}(\omega)$ is the limit of the average of agents' conditional expectations $\mathbb{E}[\tilde{\mu}_i\mid\omega]$; in other words, the expected population average belief in state $\omega$.

Given a vector $x \in \Delta(\Omega)$, we will let $x_{\omega}$ refer to the component corresponding to $\omega$; i.e., $\mu_{i,\omega}$ is the belief of agent $i$ that the state is $\omega$, $\tmu_{i,\omega}$ is the random variable of agent $i$'s belief that the state is $\omega$, etc.

To keep notation consistent across sections, we use the following conventions throughout: hats (e.g., $\hat\mu$) denote sample averages, bars (e.g., $\bar\mu(\omega)$) denote state-contingent population limits, tildes denote random objects before conditioning, and bold symbols denote stacked vectors or matrices of these quantities.

\subsection{PMBA Procedure}\label{sec:infinite}
Before we introduce our aggregation procedure, we introduce the following assumption on the information structure.
\begin{assumption}[No Aggregate Uncertainty]\label{ass:1}
    We make the following assumptions on $P$:
    \begin{enumerate}
    \item Imperfectly informed agents: for each agent $i$, we have that $P_{i, \omega}^S$ and $P_{i, \omega'}^S$ are mutually absolutely continuous with respect to each other.
    \item Limited correlation: For every $\epsilon>0$, there exists a finite $n(\epsilon )$ such that conditional on the state, for any agent $i$, all but at most $n(\epsilon )$ agents' signals are  $\epsilon $-independent of agent $i$'s signal. Formally:
    \begin{align*}
    &\forall \epsilon>0, \exists n(\epsilon) \in \mathbb{N} \text{ s.t. }\forall i, \exists N_i \subseteq N, |N_i| \leq n(\epsilon),\\
    &\forall j \in N \backslash N_i, \forall E, E' \in \mathcal{F}, \forall \omega: \bigl|P(\tsiw \in E,\, \tilde{s}_j \in E' \mid \omega) - P_{i,\omega}^S(E)P_{j,\omega}^S(E')\bigr| \leq \epsilon,
    \end{align*}
    \end{enumerate}
Here $\tsiw$ denotes the random variable of agent $i$'s signal (under $P(\cdot\mid\omega)$), and $\tilde{s}_j$ that of agent $j$'s signal; the conditioning is on the realized state $\omega$.
    \end{assumption}

This assumption requires some explanation. The first condition is simply to make the problem interesting/non-trivial: after all, if an agent is perfectly informed, then aggregation is trivial. Next, we need an assumption regarding the correlations between agents' signals. Intuitively, assuming conditionally i.i.d.\ signals is likely overly strong and does not capture the idea that agents' information may be correlated. However, we cannot allow for arbitrary correlation---for example, if all agents' information is perfectly correlated, then effectively, there is only one unique opinion in the population, and no further aggregation is possible. Our notion allows for ``enough independence,'' thereby roughly guaranteeing that for any agent, most other agents' signals are approximately independent conditional on the state. This ensures that the law of large numbers holds, so that the population average belief is a deterministic function of the state, i.e., in the notation of our model, in any state $\omega$ we have that
    \[
    \text{as } n \to \infty,\;\; \frac1n \sum_{i=1}^n \tilde{\mu}_i \to_P \bar{\mu}(\omega).
    \]
We emphasize that while this assumption relaxes the conditional i.i.d.\ assumption that is regularly made, it is a non-trivial assumption. Whether it is sensible or not depends on the context. For instance, in a political context, if all agents are drawing their information from the same few noisy news sources, then there will be correlations in beliefs and our assumption would not be well-motivated for such a setting. Conversely, if we are considering subjective but idiosyncratic product reviews, then the assumption is far more natural. More generally, the clause is intended to capture environments with enough conditional independence for a law of large numbers argument; residual systematic bias/noise is then handled by the misspecification analysis in Section~\ref{sec:misspecified}.

\begin{framedbox}[htbp]
  \caption{Population-Mean-Based Aggregation}\label{proc:pmba2}
  \begin{minipage}{0.95\textwidth}
\begin{enumerate}
    \item Elicit, from each agent $i \in N$, their belief $\mu_i$. Given $n$ elicited agents, compute the empirical mean
    \begin{align*}
        \hat{\mu} = \frac1n \sum_{i=1}^n \mu_i,
    \end{align*}
    which converges in probability to $\bar{\mu}(\omega)$ under Assumption~\ref{ass:1}.
    \item Select $L$ agents $I = \{i_1, i_2, \ldots, i_L \}$. Construct the $L\times L$ matrix $\bm{\mu}$ with rows $\mu_{i_1}^T,\mu_{i_2}^T,\ldots,\mu_{i_L}^T$, and require $\bm{\mu}$ to be full rank.
    \item Let $s_i$ be the signal seen by agent $i$. Elicit from each of these agents their expectation of the average posterior beliefs in the population (where $\tmu_j$ denotes agent $j$'s random posterior-belief vector), i.e., elicit for each $i \in I$:
    \begin{align*}
        \alpha_i = \mathbb{E}\left[\frac1n \sum_{j= 1}^n \tmu_j \bigg| s_i\right],
    \end{align*}
    using the same sample size $n$ as in Step~1.
    Denote $\bm{\alpha}= (\alpha_{i_1}^T, \alpha_{i_2}^T, \ldots, \alpha_{i_L}^T)^T$.
\item Solve
    \begin{align}
    \bar{\bm{\mu}} = \bm{\mu}^{-1} \bm{\alpha}\label{eqn:mainstep2}
    \end{align}
    Note that the right-hand side is elicited from agents directly and therefore $\bar{\bm \mu}$ can be recovered. The state $\omega$ is identified with the row of the recovered $\bar{\bm{\mu}}$ that is closest to the empirical mean $\hat{\mu}$. 
\end{enumerate}   
In practice the principal works with large but finite $n$; Assumptions~\ref{ass:1}--\ref{ass:fullrank} imply that the empirical quantities in Steps~1--3 approach their population counterparts as the sample grows.
  \end{minipage}
\end{framedbox}

Our second assumption ensures point identification. It requires that distinct states lead to different vectors of population-average beliefs, so that the limiting sample mean can be matched to a unique state. The requirement is weak but worth stating explicitly because identification can otherwise fail when expected beliefs coincide.

\begin{assumption}\label{ass:simple}
We assume that $P$ is such that for any two states $\omega$ and $\omega'$, $\bar{\mu}(\omega) \neq \bar{\mu}(\omega')$.
\end{assumption}
Note that while Assumption~\ref{ass:simple} is not directly stated in terms of the primitives of our model ($P$), it should be clear that this is a property that can be easily satisfied.%
\footnote{For example, in the case of finite conditionally i.i.d.\ signals, this property is generically satisfied in the belief space, in the sense of full measure.}
\par
We are now in a position to describe our main procedure, Population-Mean-Based Aggregation (PMBA) (Procedure~\ref{proc:pmba2} below), and state our main result for this setting.

Finally, we need to ensure that the system of equations that constitutes our procedure is well-posed. This is the role of Assumption~\ref{ass:fullrank}.
\begin{assumption}\label{ass:fullrank}
We assume that $P$ is such that for any agent $i$, the support of beliefs $\textrm{supp}(\tmu_i)$ contains $L$ distinct points, such that those $L$ beliefs, viewed as $L$-dimensional vectors, constitute a set of full rank. Alternately and equivalently, we require that the convex hull of the support has a nonempty interior relative to $\Delta(\Omega)$.
\end{assumption}

We should note that this condition is weak in at least two ways. First, in the standard genericity sense: a generically chosen set of feasible posteriors will be full rank (for details, see Subsection~\ref{sec:APMBA}). Second, and perhaps more interestingly, even in settings where there are fewer signals than states (e.g., agents can run only a unique binary-outcome experiment whose outcome determines their posterior), as long as the agents can acquire enough independent signals (e.g., conduct a sufficient number of independent experiments), the set of feasible posteriors is full rank. We formally state this claim below---the proof essentially repurposes a lemma of~\cite{fu2021full}.

\begin{restatable}{claim}{clfullrank}\label{cl:fullrank}
Suppose that signals are conditionally i.i.d.\ draws from a finite set $S$, and Assumption~\ref{ass:simple} is satisfied. If each agent gets $L-1$ independent draws from the same state-dependent distribution $P_{\omega }^{S}$, then Assumption~\ref{ass:fullrank} is satisfied with the compound signal space $%
S^{L-1}$.
\end{restatable}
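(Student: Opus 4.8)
The plan is to reduce Assumption \ref{ass:fullrank} to a statement about the linear independence of products of the conditional signal distributions, and then dispatch that statement with the (repurposed) lemma of \cite{fu2021full}.

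First I would set up the reduction. Write $p_\omega := P^S_\omega \in \Delta(S)$ for the (agent-independent) conditional signal distribution and $\pi_\omega := P(\omega) > 0$ for the prior on state $\omega$. After observing a sequence $s = (s_1,\dots,s_{L-1}) \in S^{L-1}$, an agent's posterior is $\mu_\omega(s) = \pi_\omega \ell_\omega(s)/\sum_{\omega'} \pi_{\omega'}\ell_{\omega'}(s)$, where $\ell_\omega(s) := \prod_{t=1}^{L-1} p_\omega(s_t)$. Assumption \ref{ass:fullrank} asks for $L$ posteriors in $\mathrm{supp}(\tmu_i)$ that are linearly independent in $\Re^L$; since every posterior lies on the simplex (its coordinates sum to one), $L$ of them are linearly independent iff affinely independent iff the $L\times L$ matrix stacking them is nonsingular. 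Scaling each row by its (positive) normalizing constant and each column by the (positive) prior $\pi_\omega$ leaves nonsingularity unchanged, so it suffices to exhibit $L$ sequences $s^1,\dots,s^L \in S^{L-1}$ for which the likelihood matrix $\big[\ell_\omega(s^k)\big]_{k,\omega}$ is nonsingular.

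Such sequences exist iff the full $|S|^{L-1}\times L$ matrix $\big[\ell_\omega(s)\big]_{s,\omega}$ has rank $L$, i.e. iff its columns---exactly the $(L-1)$-fold product powers $\big(\prod_t p_\omega(s_t)\big)_{s\in S^{L-1}}$ of the vectors $p_\omega$---are linearly independent. Two small checks accompany this: any $s^k$ occurring in a nonsingular $L\times L$ submatrix is a nonzero row, so $\sum_\omega \pi_\omega \ell_\omega(s^k) > 0$ and $s^k$ is observed with positive probability (hence its posterior genuinely lies in the support), and the non-proportionality of those rows yields $L$ distinct posteriors. It then remains to prove the product powers are independent. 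I would first use Assumption \ref{ass:simple} to get that the $p_\omega$ are pairwise distinct: if $p_\omega = p_{\omega'}$ then conditional on either state the induced distribution of beliefs is identical, forcing $\bmu(\omega)=\bmu(\omega')$, a contradiction; and distinct probability vectors are automatically non-proportional. The remaining statement---pairwise non-proportional $v_1,\dots,v_L$ have linearly independent $(L-1)$-fold product powers---is the content of the lemma from \cite{fu2021full}, which I would prove by induction on $L$ via a contraction argument. Supposing $\sum_\omega c_\omega \prod_{t=1}^{L-1} p_\omega(s_t)=0$ for all $s$, choose weights $\phi = (\phi_s)_{s\in S}$ with $\langle \phi, p_L\rangle = 0$ but $\langle \phi, p_\omega\rangle \neq 0$ for each $\omega < L$---possible because $p_L$ is non-proportional to each other $p_\omega$, so the forbidden set is a finite union of proper subspaces of the hyperplane $\{\,\phi : \langle \phi, p_L\rangle = 0\,\}$---and then sum the identity against $\phi$ in its last coordinate. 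This kills the $\omega = L$ term and lowers the degree to $L-2$, so the induction hypothesis applied to $p_1,\dots,p_{L-1}$ gives $c_\omega\langle\phi,p_\omega\rangle = 0$, hence $c_\omega = 0$ for $\omega < L$, and then $c_L=0$ as well.

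The reduction steps are routine linear algebra; the only real content is the product-power independence, which is precisely the lemma being repurposed and which explains why exactly $L-1$ extra draws are needed---the exponent is one less than the number of states. I expect the contraction/induction to be the crux (in particular, a naive two-signal Vandermonde argument is not enough, since one can construct distinct $p_\omega$ for which no single pair of signals separates all states), with the pairwise-distinctness deduction from Assumption \ref{ass:simple} and the support/positivity bookkeeping being the two easy-to-overlook details.
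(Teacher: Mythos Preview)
Your proposal is correct and follows essentially the same route as the paper: reduce Assumption \ref{ass:fullrank} to the rank of the $|S|^{L-1}\times L$ likelihood matrix, then invoke the \cite{fu2021full} lemma on tensor powers. You are actually a bit more careful than the paper in two places---you explicitly deduce \emph{pairwise} distinctness of the $p_\omega$ from Assumption \ref{ass:simple} (the paper only records that the span has dimension at least $2$, which by itself would not suffice), and you supply a self-contained induction proof of the lemma rather than citing it---but these are elaborations, not a different argument.
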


\begin{restatable}{theorem}{thmpmba}\label{thm:pmba}
Suppose the common prior $P$ satisfies Assumptions~\ref{ass:1},~\ref{ass:simple}, and~\ref{ass:fullrank}. Then the PMBA procedure (Procedure~\ref{proc:pmba2}) recovers the true unknown state of nature $\omega$ in probability.
\end{restatable}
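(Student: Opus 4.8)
The plan is to reproduce the logic of Theorem \ref{thm:largepopbelief}, now carried out coordinate-by-coordinate in $\mbb{R}^L$ and with the $2\times 2$ inversion of \eqref{eqn:mainstep} replaced by the $L\times L$ inversion \eqref{eqn:mainstep2}. Two identities drive everything. First, writing $\state^*$ for the true state, an appropriate law of large numbers forces $\hat{\mu}=\bmu(\state^*)$ almost surely. Second, the law of iterated expectations forces each elicited $\alpha_i$ to be the $\mu_i$-weighted average of the state-contingent population means, $\alpha_i=\sum_{\ell=1}^L \mu_{i,\state_\ell}\,\bmu(\state_\ell)$, which in matrix form reads $\bm{\alpha}=\bm{\mu}\,\bar{\bm{\mu}}$, where $\bar{\bm{\mu}}$ is the matrix with rows $\bmu(\state_1),\dots,\bmu(\state_L)$. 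Given these, step (4) inverts $\bm{\mu}$ to recover every $\bmu(\state_\ell)$, and matching against $\hat{\mu}$ pins down $\state^*$.

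For the first identity I would apply the argument of Theorem \ref{thm:largepopbelief} coordinatewise: each coordinate $\tmu_{i,\state_\ell}\in[0,1]$ is bounded, and part (2) of Assumption \ref{ass:1} delivers enough conditional independence across agents that a law of large numbers holds for each of the $L$ coordinate averages, so conditional on $\state^*$ we get $\hat{\mu}=\bmu(\state^*)$ a.s. For the second identity I would first justify interchanging the limit and the conditional expectation in the definition of $\alpha_i$: since $\frac1n\sum_{j=1}^n \tmu_j$ is uniformly bounded, bounded convergence gives $\alpha_i=\mbb{E}[\hat{\mu}\mid s_i]$. Substituting $\hat{\mu}=\bmu(\tilde{\state})$ a.s. and conditioning on the state yields $\alpha_i=\sum_\ell P(\state_\ell\mid s_i)\,\bmu(\state_\ell)=\sum_\ell \mu_{i,\state_\ell}\,\bmu(\state_\ell)$, i.e. exactly $\bm{\alpha}=\bm{\mu}\,\bar{\bm{\mu}}$. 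As $\bm{\mu}$ is full rank by construction, \eqref{eqn:mainstep2} returns $\bar{\bm{\mu}}=\bm{\mu}^{-1}\bm{\alpha}$ and hence each $\bmu(\state_\ell)$.

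The step that genuinely requires new work relative to the binary case, and which I expect to be the main obstacle, is verifying that step (2) can actually be executed, i.e. that almost surely there exist $L$ agents whose realized beliefs form a full-rank matrix. Here I would combine Assumption \ref{ass:fullrank} with part (1) of Assumption \ref{ass:1}: the latter makes the support of each agent's beliefs identical across states, and the former makes that support full-dimensional, so its convex hull has nonempty interior relative to $\Delta^L$. Consequently, for any proper affine subspace $H$ of the hyperplane $\{x:\sum_\ell x_\ell=1\}$ and any state, $P(\tmu_i\notin H\mid\state)>0$. I would then select agents sequentially: having found $k<L$ agents with affinely (equivalently, since beliefs sum to one, linearly) independent beliefs spanning a proper subspace, the beliefs of the remaining agents escape that subspace with positive probability, and because there are infinitely many agents that are approximately independent by Assumption \ref{ass:1}(2), almost surely at least one does; iterating $L$ times produces the desired full-rank $L$-tuple a.s. This ``escape the subspace'' argument is the delicate part, since the spanned subspace is itself random and correlated with the agents already chosen, so some care with a Borel--Cantelli-type estimate under the limited-correlation hypothesis is needed.

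Finally I would close exactly as in the binary case. By Assumption \ref{ass:simple} the rows $\bmu(\state_1),\dots,\bmu(\state_L)$ of the recovered $\bar{\bm{\mu}}$ are pairwise distinct, while the first identity gives $\hat{\mu}=\bmu(\state^*)$; hence $\hat{\mu}$ coincides with exactly one row of $\bar{\bm{\mu}}$, namely the one indexed by the true state, and step (4) returns $\state^*$. Since all of the above hold on an almost-sure event, the conclusion follows.
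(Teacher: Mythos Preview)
Your proposal is correct and follows essentially the same four-step verification as the paper's own proof: the law of large numbers via Lemma~\ref{lemma:beliefsconcentrate} for $\hat{\mu}=\bmu(\state^*)$, Assumptions~\ref{ass:1}(1) and~\ref{ass:fullrank} plus approximate independence to find $L$ agents with full-rank beliefs, the law of iterated expectations for $\bm{\alpha}=\bm{\mu}\,\bar{\bm{\mu}}$, and Assumption~\ref{ass:simple} for the final matching. If anything you are more careful than the paper in two places---justifying the limit-expectation interchange via bounded convergence, and actually sketching the iterative ``escape the subspace'' construction that the paper simply asserts.
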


\section{Robust Aggregation: Approximate PMBA and Misspecification}\label{sec:robust}
While the PMBA procedure described in Section~\ref{sec:pmba} forms the heart of our approach, it is too idealized for real-world applications. In practice, exact equality will not hold: the empirical mean $\hat{\mu}$ will not exactly match any of the state-dependent population means $\bar{\mu}(\omega)$, agents' second-order beliefs may be noisy or biased, and the information structure may be misspecified. This section addresses these practical concerns by developing more robust versions of PMBA that can handle these complications.

We proceed in two steps. First, in Subsection~\ref{sec:APMBA}, we introduce the Approximate PMBA procedure, which relaxes the requirement of perfect alignment between observed and theoretical population averages. This procedure requires stronger assumptions on the information structure (specifically, conditional i.i.d.\ signals), but these assumptions help us build the machinery needed to handle misspecification. Second, in Subsection~\ref{sec:misspecified}, we show how the Approximate PMBA procedure can be extended to settings where agents' understanding of the information structure is misspecified---a common concern in practice where agents may have limited or incorrect knowledge about how others form beliefs.

\subsection{Approximate PMBA Procedure}\label{sec:APMBA}
We now introduce an alternative aggregation method, which we call the approximate PMBA procedure. This procedure serves two purposes. First, the approximate PMBA procedure does not rely on perfect alignment between the observed population average and one of the aggregated ones. Thus, it is more amenable to practical applications. Second, the procedure provides the foundation for handling misspecified information, which we address in Subsection~\ref{sec:misspecified}.

We introduce and analyze the procedure for the following class of information structures, which we specify in Assumption~\ref{ass:InfStrAPMBA}.
\par

\begin{assumption}\label{ass:InfStrAPMBA}
Let the joint prior probability measure $P$ satisfy the following properties:
    \begin{enumerate}
    \item Conditional i.i.d.\ signals: Conditional on the state of the world $\omega$, the signal $s_i$ of each agent $i \in N$ is an independent draw from the distribution $P_\omega^S$.
    \item The signal support is finite: $|S|=K\geq L$ 
    \item Imperfectly informed agents: For all $\omega \neq \omega'$, we have that $P_\omega^S$ and $P_{\omega'}^S$ are absolutely continuous with respect to each other yet not equal.
\end{enumerate}
\end{assumption}
\begin{framedbox}[htbp]
  \caption{Approximate Population-Mean-Based Aggregation}\label{proc:lipmba}
  \begin{minipage}{0.95\textwidth}
    \begin{enumerate}
\item Elicit, from each agent $i$, their belief $\mu_i \in \Delta^L$. Calculate the (sample) mean belief $\hat{\mu} \in \Delta^L$ as follows
    \begin{align*}
        \hat{\mu} = \frac1n \sum_{i=1}^n \mu_i.
    \end{align*}
    \item Select one state $\omega_1 \in \Omega$, and elicit from each agent $i \in N$ their interim expectation of the average posterior belief of state $\omega_1$ over all agents---denote this scalar as $\alpha_i$ for agent $i$ (note that unlike the vector $\alpha_i$ in Procedure~\ref{proc:pmba2}, here $\alpha_i$ is a scalar corresponding to a single coordinate):
\begin{align*}
        \alpha_i = \lim_{n \rightarrow \infty} E\left[\frac1n \sum_{j=1}^n \mu_j(\omega_1) \mid s_i\right].
    \end{align*}
    Equivalently, this is one coordinate of the full second-order expectation vector elicited in Procedure~\ref{proc:pmba2}; here we run the procedure coordinate-by-coordinate.
\item Partition the set of agents $N$ into $L$ groups such that the vectors
$\hat{\mu}_{N_1},\ldots,\hat{\mu}_{N_L}$ are linearly independent. For each group $N_k$, let $x_{iN_k}=\mathbf{1}[i \in N_k]$ denote the indicator that agent $i$ belongs to group $N_k$. Index agents within each group as $i=1,2,\dots$ and define group averages using the first $n$ agents in that group (equivalently, by the indicator-weighted expressions below):
\begin{align*}
        \hat{\mu}_{N_k}=\lim_{n \rightarrow \infty}\frac{\sum_{i=1}^n x_{iN_k}\mu_i}{\sum_{i=1}^n x_{iN_k}}.
    \end{align*}
    Further, calculate the empirical average of the $\alpha_i$'s in each group 
    \begin{align*}
        \hat{\alpha}_{N_k}=\lim_{n \rightarrow \infty}\frac{\sum_{i=1}^n x_{iN_k}\alpha_i}{\sum_{i=1}^n x_{iN_k}}.
    \end{align*}
    In practice the principal uses the same expressions with a finite sample of size $n$ (no limit).
    \item Solve
   
    \begin{align}
 &  \left( \begin{array}{c} \bar{\mu}_{\omega_1}(\omega_1)\\ \bar{\mu}_{\omega_1}(\omega_2) \\ \vdots\\ \bar{\mu}_{\omega_1}(\omega_L)\end{array}\right)  = \left( \begin{array}{c} \hat{\mu}_{N_1}\\ \hat{\mu}_{N_2} \\ \vdots\\ \hat{\mu}_{N_L}\end{array}\right)^{-1} 
    \left( \begin{array}{c} \hat{\alpha}_{N_1} \\ \hat{\alpha}_{N_2} \\ \vdots\\ \hat{\alpha}_{N_L}
    \end{array}     \right) \label{eqn:mainstep4}
    \end{align}
    where $\left(\begin{smallmatrix}\hat{\mu}_{N_1}\\ \vdots\\ \hat{\mu}_{N_L}\end{smallmatrix}\right)$ is the $L\times L$ matrix whose $k$-th row is the vector $\hat{\mu}_{N_k}$.
    The state $\omega$ is identified with the index $k$ for which the recovered $\bar{\mu}_{\omega_1}(\omega_k)$ is closest to the empirical mean $\hat{\mu}_{\omega_1}$ (the recovered object is the column vector of these $L$ components). (We index by $\omega_1$ only for expositional convenience; repeating this coordinate-wise recovers the full vector as in Section~\ref{sec:linear}.)
    \end{enumerate}
    \end{minipage}
    \end{framedbox}
The main difference between Procedure~\ref{proc:pmba2} and Procedure~\ref{proc:lipmba} is that aggregation in the former relies only on the first- and second-order reports of $L$ agents, while the latter makes use of the first- and second-order reports of all agents. The $L$ infinite groups $N_1$ through $N_L$ play the roles of the $L$ agents in the standard procedure (Procedure~\ref{proc:pmba2}).
Procedure~\ref{proc:pmba2} elicits the full second-order expectation vector from each queried agent, so no coordinate needs to be fixed there. By contrast, Procedure~\ref{proc:lipmba} is implemented coordinate-by-coordinate using group averages, so we fix one designated coordinate (here $\omega_1$) only for expositional convenience; repeating the same argument for each coordinate recovers the full vector.
Note that conditional independence of the information structure, as assumed in Assumption~\ref{ass:InfStrAPMBA}, ensures that all limit averages are well defined.

Before we introduce a further result, we require some additional notation. For a positive integer $l$, let $\Delta^l$ denote the unit simplex in $\mathbb{R}^l$, i.e., the set $\{x \in \mathbb{R}^l : x_j \geq 0, \sum_j x_j = 1\}$ (a $(l-1)$-dimensional set). For a given state space $\Omega$ and signal space $S$
with cardinalities $L$ and $K$, respectively, it follows from Assumption~\ref{ass:InfStrAPMBA} that the information structure is fully
described by the joint probability measure $p \in \Delta^{KL}$, where
$\Delta^{KL}$ denotes the unit simplex in $\mathbb{R}^{K \times L}$ (i.e., the set of joint distributions over signal--state pairs). Let $\mathcal{L}(\Delta)$ denote the Borel sigma algebra on the simplex and
$\lambda_\Delta$ the corresponding measure which, for $\Delta^l$, is derived from the Lebesgue measure on $\mathbb{R}^{l-1}$.

\par

We can now state the following result.

\begin{theorem}\label{theorem:GenericAggregation}
Consider the set of information structures $p \in \Delta^{KL}$ that satisfy Assumption~\ref{ass:InfStrAPMBA}. For $\lambda_{\Delta^{KL}}$-almost every information structure, \emph{Approximate Population-Mean-Based Aggregation} (Procedure~\ref{proc:lipmba}) almost surely recovers the true state of the world.
\end{theorem}

Thus, within a natural class of information structures (conditional i.i.d.\ signals), the Approximate PMBA procedure correctly identifies the state.\footnote{The assumption of a finite signal support is made to establish the result for measure-theoretic genericity. For a general signal space, the same result can be established for topological genericity.} The generic aspect of the result assures that there exists a partition of the agents such that the set of group averages forms a matrix of full rank. In implementation, one can search over data-driven partitions (e.g., bins in first-order belief space) and use any partition whose sample group-mean matrix is full rank.
\par
To provide intuition for the proof, note that the information structure $p \in \Delta^{KL}$ can be represented by a $K \times L$
matrix where the entry $p_{hj}$ corresponds to the probability of a joint realization of signal $h$
and state $j$. To each signal $s \in S$, there corresponds a posterior belief 
$q(s) \in \Delta^L$ over the state space $\Omega$. For a signal $s_h$, we have 
$q(s_h) = [q_1(s_h), \ldots, q_L(s_h)]$ where

\[
q_j(s_h) = \frac{p_{hj}}{\sum_{l=1}^L p_{hl}} = \frac{p_{hj}}{p_{s_h}}, \forall j = 1, \ldots, L.
\]

For information structure $p$, let $\overline{\mu}^p(\omega)\in \Delta^L$ denote the mean posterior belief vector in state $\omega$. 

For a given information structure $p$, consider the set of posterior beliefs $Q^p$:
\[
  Q^p \;=\; \{\, q(s)\in \Delta^L : s\in S \,\}.
\]
Denote the set of partitions of $Q^p$ with $L$ partition cells as $\mathcal{B}$. Consider a partition $B\in \mathcal{B}$ and a partition cell $b\in B$. The cell $b$ consists of a subset of posterior distributions, $b \subset Q^p$. For state $\omega$ and information structure $p$, let $F_\omega^{Q^p} \in \Delta(Q^p)$ denote the induced distribution over the posterior beliefs $Q^p$.

The proof of Theorem~\ref{theorem:GenericAggregation} relies on the following lemma; see the proof in Appendix~\ref{app:proofofThm1}.

\begin{lemma}\label{lemma:Genericity}
Under Assumption~\ref{ass:InfStrAPMBA}, for $\lambda_{\Delta^{KL}}$-almost every information structure, the following holds.
\begin{enumerate}
    \item In each state $\omega \in \Omega$, there exists a partition of $Q^p$ such that the corresponding set of mean posterior-belief vectors (one per partition cell) is linearly independent.
\item $\overline{\mu}_{\omega_j}^p(\omega) \neq \overline{\mu}_{\omega_j}^p(\omega')$ for all $j=1,\ldots,L$ and $\omega \neq \omega'$.
\end{enumerate}
\end{lemma}
Equipped with Lemma~\ref{lemma:Genericity}, we are ready to prove Theorem~\ref{theorem:GenericAggregation}.
\begin{proof}[Proof of Theorem~\ref{theorem:GenericAggregation}]
    Consider the empirical distribution $E^n$ of first-order beliefs in state $\omega$. Since, conditional on state $\omega$, first-order beliefs are identically and independently distributed by Assumption~\ref{ass:InfStrAPMBA}, it follows by the Glivenko-Cantelli theorem (e.g.,~\cite{vandervaartwellner1996}) that $E^n$ converges almost surely to $F_\omega^{Q^p}$ for every realized state $\omega \in \Omega$. By Lemma~\ref{lemma:Genericity}(2), the state-dependent average first-order belief distribution $\bar{\mu}(\omega)$ is different across $\omega$ in every dimension $1,\ldots,L$. By Lemma~\ref{lemma:Genericity}(1), there exists a partition of $Q^p$ under $F_\omega^{Q^p}$ that induces a full-rank matrix of group average beliefs. Hence, Procedure~\ref{proc:lipmba} satisfies perfect aggregation. Note that since signals are conditionally i.i.d.\ by Assumption~\ref{ass:InfStrAPMBA}, the strong law of large numbers implies that $\lim_{n\rightarrow \infty}\frac1n \sum_{i=1}^n \tilde{\mu}_i$ almost surely exists and equals $\bar{\mu}(\omega)$.
\end{proof}

Thus, Procedure~\ref{proc:lipmba} is applicable and correctly identifies the true state generically within a common class of information structures.

While the assumptions in this subsection are stronger than those in Section~\ref{sec:pmba} (in particular, we require conditional i.i.d.\ signals rather than the more general limited correlation structure), they provide the necessary structure to extend the procedure to settings with misspecified beliefs, which we turn to next.
\par

\subsection{Aggregation in Misspecified Information Settings}\label{sec:misspecified}
In practice, agents may have limited and individually incorrect knowledge about the aggregate information structure: in particular, a standard concern is that while agents may have meaningful information that shapes their \emph{personal} beliefs, they nevertheless may be less informed about the belief of others, or even summary statistics of the beliefs of others like the average belief as elicited by our procedure. Agents might even be subject to a systemic bias in terms of their perception of the summary statistics, making matters more difficult. 

We now analyze how Procedure~\ref{proc:lipmba} can be applied to settings where agents' understanding of the information structure is misspecified. Instead of assuming common knowledge of $P$ among the agents, we assume that the agents' information about $P$ is misspecified. As in Procedure~\ref{proc:lipmba}, fix one designated coordinate---for exposition, state $\omega_1$---and suppose agent $i$ holds misspecified beliefs about the corresponding state-dependent population averages, namely $(\alpha_{i}^{\omega_1},\alpha^{\omega_2}_{i},\ldots,\alpha_{i}^{\omega_L}) \in [0,1]^L$. Note that each agent understands how to interpret their own signal correctly, i.e., the misspecification is purely about knowledge of the population averages. We make the following assumption on the relation between agents' beliefs and the true state-dependent population means; as above, the same argument can be repeated coordinate-by-coordinate to recover the full vector.

\begin{assumption}\label{ass:misspecified}
For each agent $i$, the expected population average belief of state $\omega_1$ given his first-order belief $\mu_i \in \Delta^L$ is equal to
\begin{equation*}
    \alpha_i=\sum_{\omega \in \Omega}\mu_i(\omega) \times \alpha_i^\omega
\end{equation*}
where for each state $\omega \in \Omega $ we have that%
\begin{equation}\label{eq1:2ndordermisspec}
\alpha _{i}^{\omega }=\bar{\mu}_{\omega_1}(\omega )+\zeta _{i}^{\omega }.
\end{equation}%
We assume that conditional on $\omega $ the $%
\zeta _{i}^{\omega }$'s are independent across agents, that $\zeta_i^{\omega}$ is independent of $\mu_i$ for each agent $i$, and that for each $\omega \in \Omega$, the $\zeta_i^{\omega}$'s have a common mean, $E[\zeta _{i}^{\omega }]=E[\zeta ^{\omega }]$ for all $%
i $.

\end{assumption}

Let us briefly discuss Assumption~\ref{ass:misspecified}. Agents form their expected average population belief of state $\omega_1$ based on their misspecified beliefs of the state-dependent averages $(\omega_1,\ldots,\omega_L)$ weighted by their first-order beliefs. The inherent assumption here is that, conditional on the realized state, the population average belief is independent of agent $i$'s first-order belief $\mu_i$. The conditional-independence assumption on the noise terms $\zeta_i^\omega$ across agents is necessary for a strong law of large numbers to hold, i.e., so that the weighted average of noise terms converges to a degenerate random variable almost surely. The assumption that, conditional on state $\omega$, the noise term has the same expectation across agents is a technical condition that we require for the procedure to accurately aggregate the state.

Importantly, note that we do not assume that $\mathbb{E}[\zeta^\omega]=0$: agents' beliefs about the population averages may be biased even on average (as has indeed been observed in practice), and these biases may be different in different states.

We also impose the following condition on the information structure that generates the agents' signals.
\begin{assumption}\label{ass:1Mis}
The private signals $(s_i)_{i\in N}$ are independent and identically distributed across agents, conditional on the realized state $\omega$.
\end{assumption}

We can now state our aggregation result in the misspecified information setting.

\begin{restatable}{theorem}{misspecifiedinfo}\label{thm:misspecifiedknowledge}
Consider an information structure that satisfies Assumption~\ref{ass:1Mis} and for which the \emph{Approximate Population-Mean-Based Aggregation} (Procedure~\ref{proc:lipmba}) almost surely recovers the true state under correctly specified beliefs. Suppose that the agents' knowledge regarding the information structure satisfies Assumption~\ref{ass:misspecified}. Suppose further that for all $\omega \neq \omega'$ we have
\begin{equation}\label{eqn:misspecifiedinfocondn}
\left\vert \mathbb{E}[\zeta ^{\omega}]\right\vert +\left\vert \mathbb{E}[\zeta ^{\omega'}]\right\vert <
\left\vert \bar{\mu}_{\omega_1}\left(\omega\right) -\bar{\mu}_{\omega_1}\left(\omega'\right) \right \vert.
\end{equation}
Then, \emph{Approximate Population-Mean-Based Aggregation} (Procedure~\ref{proc:lipmba}) recovers the true state almost surely under misspecified beliefs.
\end{restatable}
Theorem~\ref{thm:misspecifiedknowledge} establishes that the conceptual idea of second-order elicitation successfully overcomes agents' limited understanding of the private belief distributions of their peers.

Thus, moving to a more realistic assumption of an incomplete and individually incorrect understanding of the information structure necessitates the elicitation of the expected population averages from large groups: we are relying on the crowd for both first- and second-order mean-belief statistics. The main issue is that individual agents' understanding of the population averages is incorrect. However, pooling the first- and second-order reports into infinitely sized groups allows us to overcome this issue. By the law of large numbers we can appropriately average these individual agents' reports over the $L$ groups, $N_1$ through $N_L$. The average first-order belief, and average report of population average, of agents in each group, will serve as the analogs of the $L$ individuals in the baseline PMBA procedure.

The basic intuition is that even though individual agents are misinformed about the distribution of their fellow agents' beliefs, on average these errors are not ``large'' (even though individual agents' errors can be) relative to the true differences between population averages across states (recall that condition~\eqref{eqn:misspecifiedinfocondn} in Theorem~\ref{thm:misspecifiedknowledge} makes this possible). This ensures that the recovered $\bar{\mu}(\omega)$ that is closest to the realized average population distribution $\hat{\mu}$ is indeed the one corresponding to the true state.

\subsubsection{The Proof of Theorem~\ref{thm:misspecifiedknowledge}}
First note that for Procedure~\ref{proc:lipmba} to almost surely recover the true state absent misspecification, for each realized state $\omega \in \Omega$ there exists a partition of the empirical belief distribution into $L$ groups such that the average beliefs of the groups almost surely form a full-rank matrix. Formally, there exists a partition $\{N_1,\ldots,N_L\}$ such that the matrix
\[
\begin{pmatrix}
\hat{\mu}_{N_1} \\
\vdots \\
\hat{\mu}_{N_L}
\end{pmatrix}
\]
almost surely is of full rank. Note that agents are assigned to partition cells based only on their realized belief $\mu_i$.
To prove Theorem~\ref{thm:misspecifiedknowledge} we first need some additional lemmata.
Consider a partition $\{N_1,\ldots,N_L\}$ that belongs to the support of partitions generating a full-rank set of group average beliefs, with each group receiving positive asymptotic mass. For each $n\in\mathbb{N}$ and group $N_k$ define
\begin{equation}\label{eq:2ndordermis}
    \hat{\alpha}_{N_k}^n=\frac{1}{\sum_{i\le n}x_{iN_k}}\sum_{i\leq n} x_{iN_k}\alpha_i, \ \text{and} \ \hat{\alpha}_{N_k}=\lim_{n \rightarrow \infty} \hat{\alpha}_{N_k}^n.
\end{equation}
Analogously, define
\[
\hat{\mu}_{N_k}^n:=\frac{1}{\sum_{i\le n}x_{iN_k}}\sum_{i\leq n} x_{iN_k}\mu_i,
\qquad
\hat{\mu}_{N_k}:=\lim_{n\to\infty}\hat{\mu}_{N_k}^n.
\]
The proof relies crucially on the following two auxiliary results.
\begin{lemma}\label{lem:misspec2}
If the information structure satisfies Assumption~\ref{ass:1Mis}, then for each group $N_k$, $\hat{\mu}_{N_k}^n$ converges almost surely to a deterministic vector, denoted $\mathbb{E}[\hat{\mu} _{N_k}]$.
\end{lemma}
\begin{lemma}\label{lem:misspec}
If the information structure satisfies Assumption~\ref{ass:1Mis} and the agents' knowledge regarding the information structure satisfies Assumption~\ref{ass:misspecified}, then for each group $N_k$, $\hat{\alpha}_{N_k}^n$ almost surely converges to the deterministic scalar
\begin{align*}
\hat{\alpha}_{N_k}=\sum_{\omega \in \Omega} \mathbb{E}[\hat{\mu}_{N_k}(\omega)]\check{\mu}_{\omega_1}(\omega)
\ \ \text{where }&\check{\mu}_{\omega_1}(\omega) =\bar{\mu}_{\omega_1}(\omega)+\mathbb{E}[\zeta^{\omega}]
\end{align*}
\end{lemma}
Please see below for the proof of Theorem~\ref{thm:misspecifiedknowledge}.
\begin{proof}
By Lemmas~\ref{lem:misspec2} and~\ref{lem:misspec}, and the maintained assumptions, conditional on the realized state, Procedure~\ref{proc:lipmba} yields the following uniquely solvable system of deterministic linear equations.
\begin{align}
 \underbrace{\left( \begin{array}{c} \mathbb{E}[\hat{\mu}_{N_1}]\\ \mathbb{E}[\hat{\mu}_{N_2}] \\ \vdots\\ \mathbb{E}[\hat{\mu}_{N_L}]\end{array}\right)}_{\text{M}}  \left( \begin{array}{c} \check{\mu}_{\omega_1}(\omega_1)\\ \check{\mu}_{\omega_1}(\omega_2) \\ \vdots\\ \check{\mu}_{\omega_1}(\omega_L)\end{array}\right)  = 
    \left( \begin{array}{c} \hat{\alpha}_{N_1} \\ \hat{\alpha}_{N_2} \\ \vdots\\ \hat{\alpha}_{N_L}
    \end{array}     \right) \label{eqn:mainstep4misspec}
    \end{align}
Here $M$ is the $L\times L$ matrix whose $k$-th row is the group mean vector $\mathbb{E}[\hat{\mu}_{N_k}]$; since Procedure~\ref{proc:lipmba} satisfies perfect aggregation absent misspecification, the matrix $M$ is of full rank and hence invertible. This system reveals the following noisy state-dependent average beliefs of state $\omega_1$, $(\check{\mu}_{\omega_1}(\omega_1), \check{\mu}_{\omega_1}(\omega_2), \ldots, \check{\mu}_{\omega_1}(\omega_L))$. For approximate aggregation we require that the recovered mean for the true state is closer to the true state-contingent mean than that for any other state; i.e., for each state $\omega$ we have that
\begin{equation*}
    \left\vert \check{\mu}_{\omega_1}(\omega)-\bar{\mu}_{\omega_1}(\omega )\right\vert <\left\vert 
\check{\mu}_{\omega_1}(\omega')-\bar{\mu}_{\omega_1}(\omega )\right\vert 
\end{equation*}
for all $\omega' \neq \omega$.
By Lemma~\ref{lem:misspec}, this condition is equivalent to
\begin{equation*}
   \left\vert \bar{\mu}_{\omega_1}(\omega)+\mathbb{E}[\zeta^{\omega}]-\bar{\mu}_{\omega_1}(\omega )\right\vert <\left\vert 
\bar{\mu}_{\omega_1}(\omega')+\mathbb{E}[\zeta^{\omega'}]-\bar{\mu}_{\omega_1}(\omega )\right\vert 
\end{equation*}
which in turn is equivalent to
\begin{equation}\label{eqn:ineqzeta1intermediate} 
   \left\vert \mathbb{E}[\zeta^{\omega}]\right\vert <\left\vert 
\bar{\mu}_{\omega_1}(\omega')+\mathbb{E}[\zeta^{\omega'}]-\bar{\mu}_{\omega_1}(\omega )\right\vert
\end{equation}
Since 
\begin{equation}\label{eqn:ineqzeta1triangle} 
   \left\vert 
\bar{\mu}_{\omega_1}(\omega')-\bar{\mu}_{\omega_1}(\omega )\right\vert - \left \vert \mathbb{E}[\zeta^{\omega'}] \right \vert \leq \left\vert 
\bar{\mu}_{\omega_1}(\omega')+\mathbb{E}[\zeta^{\omega'}]-\bar{\mu}_{\omega_1}(\omega )\right\vert
\end{equation}
it follows that the following is a sufficient condition for~\eqref{eqn:ineqzeta1intermediate}
\begin{equation*}
    \left\vert \mathbb{E}[\zeta^{\omega}]\right\vert<\left\vert 
\bar{\mu}_{\omega_1}(\omega')-\bar{\mu}_{\omega_1}(\omega )\right\vert - \left \vert \mathbb{E}[\zeta^{\omega'}] \right \vert
\end{equation*}
which is equivalent to 
\begin{equation*}
    \left\vert \mathbb{E}[\zeta^{\omega}]\right\vert + \left \vert \mathbb{E}[\zeta^{\omega'}] \right \vert<\left\vert 
\bar{\mu}_{\omega_1}(\omega')-\bar{\mu}_{\omega_1}(\omega )\right\vert 
\end{equation*}
which coincides with Equation (\ref{eqn:misspecifiedinfocondn}) concluding the proof.
\end{proof}

\section{Linear Regression Interpretation}\label{sec:linear}

We now provide a linear regression interpretation of PMBA\@. This provides three key benefits that make the procedure more practical and robust. First, this reinterpretation immediately gives us access to a century of statistical tools for inference (including confidence intervals), diagnostics, and robustness checks. Second, we develop geometric intuition that helps explain why the procedure works---the regression coefficients encode the unknown state-dependent population means we are trying to recover. Finally, we derive finite-sample error bounds that tell us exactly how many agents we need for reliable aggregation, making the procedure actionable for real-world implementation.

This linear regression perspective is particularly valuable because it transforms PMBA from a specialized matrix inversion procedure into a familiar statistical framework that researchers and practitioners already understand.

\subsection{From PMBA to Linear Regression}

To understand how PMBA relates to linear regression, let us start with the data we collect from each agent. Consider a state space $\Omega=\{\omega_1,\dots,\omega_L\}$ with $L\ge 2$ states. \textbf{Notation.} In this section, we write $p_i$ for agent $i$'s first-order belief vector (denoted $\mu_i$ in Sections~\ref{sec:pmba}--\ref{sec:robust}) and $\alpha_i$ for a \emph{scalar} second-order belief about a single coordinate (whereas $\alpha_i$ is a vector in Procedure~\ref{proc:pmba2}). From each agent $i$, we elicit two pieces of information:

\begin{enumerate}
  \item \textbf{First-order beliefs}: A probability vector $p_i=(p_{i1},\dots,p_{iL})$ where $p_{ik}$ is agent $i$'s belief that state $\omega_k$ is true (these sum to 1).
  \item \textbf{Second-order beliefs}: A scalar $\alpha_i$ representing agent $i$'s expectation of the population average belief for a designated coordinate (e.g., state $\omega_1$), i.e., what agent $i$ thinks the average reported probability on that coordinate will be.
\end{enumerate}

The key insight is that agent $i$'s second-order belief $\alpha_i$ should be a weighted average of the population means in each state, where the weights are her own beliefs. Specifically, if $\bar\mu_k$ represents the average belief about state $\omega_1$ in the population when the true state is $\omega_k$, then agent $i$'s forecast should be:

\begin{equation}
  \alpha_i \;=\; \sum_{k=1}^{L} p_{ik}\,\bar\mu_k + \varepsilon_i,
  \qquad \mathbb E[\varepsilon_i\mid p_i]=0,
  \label{eq:struct_LR_multi}
\end{equation}

where $\varepsilon_i$ captures individual reporting errors, cognitive limitations, or rounding mistakes. 

\begin{remark}
Here, $\alpha_i$ is a scalar representing agent $i$'s expectation of the population average belief about a specific state (say, state $\omega_1$). In the original PMBA procedure (Procedure~\ref{proc:pmba2}), we elicit the full vector of expectations $\alpha_i$ (the expectation of average posterior beliefs); the same coordinate-wise interpretation also applies to Procedure~\ref{proc:lipmba}. For the regression interpretation, we focus on a single coordinate of this vector, making $\alpha_i$ a scalar. This corresponds to asking agent $i$: ``What do you think the average belief in the population will be about state $\omega_1$?'' The agent responds with a single number (e.g., ``I think the average person will assign 60\% probability to state $\omega_1$''). This equation says that agent $i$'s expectation should be a weighted combination of what the population average would be in each possible state, weighted by how likely she thinks each state is.

To recover the full vector of population means $\bar{\bm{\mu}}$, the regression in equation~\eqref{eq:OLS_L} (below) is repeated for each dimension $d \in \{1, \dots, L\}$, where the dependent variable is the agent's expectation of the population mean for state $\omega_d$.
\end{remark}

\subsection{The Regression Setup}

Since belief vectors must sum to 1 ($\sum_k p_{ik}=1$), we have a constraint that makes one coefficient redundant. To convert this into a standard linear regression, we can treat one state (say, $\omega_L$) as the baseline and express all other coefficients relative to it. This gives us:

\begin{equation}
  \alpha_i = \beta_0 + \sum_{k=1}^{L-1} \beta_k\,p_{ik} + \varepsilon_i,
  \quad
  \beta_0:=\bar\mu_L,\quad \beta_k:=\bar\mu_k-\bar\mu_L.
  \label{eq:OLS_L}
\end{equation}

Here is what this means: $\beta_0$ is the population average belief when the baseline state $\omega_L$ is true, and $\beta_k$ captures how much higher (or lower) the population average belief is when state $\omega_k$ is true compared to the baseline state. This is now a standard linear regression where we can estimate the unknown parameters $\beta_0, \beta_1, \ldots, \beta_{L-1}$ using ordinary least squares.
\subsection{Geometric Intuition}

The linear regression has a beautiful geometric interpretation that helps explain why PMBA works. Think of each agent's beliefs as a point in a $(L-1)$-dimensional simplex (since probabilities sum to 1). The regression equation $p_i\mapsto\alpha_i$ maps this simplex onto a hyperplane in the space of second-order beliefs.

The key insight is that the slope of this hyperplane along each coordinate direction tells us exactly what we want to know: how much the population average belief differs between states. Specifically, the slope along the $k$-th direction equals $\beta_k = \bar\mu_k - \bar\mu_L$, which is exactly the difference in population averages between state $\omega_k$ and the baseline state $\omega_L$. This is also why PMBA extends naturally beyond binary states, unlike SP-style comparisons that rely on a binary ordering.

Figure~\ref{fig:hyperplane} illustrates this for $L=3$ states. The triangle represents all possible belief vectors, the colored dots show the true population means in each state, and the arrows show the slope directions that the regression estimates. Once we know these slopes, we can reconstruct the full set of population means and identify the true state.

\begin{figure}[H]
    \centering
    \begin{tikzpicture}[scale=3.8]
      \coordinate (A) at (0,0);
      \coordinate (B) at (1,0);
      \coordinate (C) at (0,1);
      \draw[very thick] (A)--(B)--(C)--cycle;
  
      \node[below left]  at (A) {$\omega_2$};
      \node[below]       at (B) {$\omega_1$};
      \node[left]        at (C) {$\omega_3$};
  
      \foreach \i in {0.2,0.4,0.6,0.8}{
        \draw[thin,opacity=0.25] (\i,0)--(0,\i);
        \draw[thin,opacity=0.25] (1-\i,\i)--(\i,0);
        \draw[thin,opacity=0.25] (0,1-\i)--(1-\i,0);
      }
  
      \fill[black] (0.333,0.333) circle (0.5pt) node[above right] {$\hat\mu$};
  
      \fill[red]            (0.70,0.15) circle (0.6pt) node[below right=-2pt] {$\bar\mu_1$};
      \fill[green!60!black] (0.10,0.15) circle (0.6pt) node[below left=-2pt] {$\bar\mu_2$};
      \fill[blue]           (0.20,0.60) circle (0.6pt) node[left=-2pt] {$\bar\mu_3$};
  
      \draw[->,thick,red]            (0.333,0.333)--(0.533,0.283);
      \draw[->,thick,green!60!black] (0.333,0.333)--(0.233,0.283);
      \draw[->,thick,blue]           (0.333,0.333)--(0.233,0.533);
    \end{tikzpicture}
    \caption{Geometric view for $L=3$: the belief simplex (triangle), true
    state-dependent population means (coloured dots), the empirical mean $\hat\mu$
    (black dot), and slope directions $\beta_k$ (arrows).}\label{fig:hyperplane}
  \end{figure}

\subsection{Statistical Assumptions}

To make the regression approach rigorous, we need some statistical assumptions. Let's stack our data into an $n\times L$ design matrix $X=[\mathbf 1\;p_{i1}\;\dots\;p_{i,L-1}]$ and response vector $\alpha=(\alpha_1,\dots,\alpha_n)^{\top}$.

\begin{assumption}[Independent Sub-Gaussian disturbances]\label{ass:linear:subG}
The reporting errors $\varepsilon_i$ are mutually independent across agents and each $\varepsilon_i$ is $\sigma^2$-sub-Gaussian conditioned on $p_i$, meaning the error distribution has light tails that decay at least as fast as a Gaussian with variance $\sigma^2$.
\end{assumption}

This assumption is quite reasonable in practice. While agents' first-order beliefs may be correlated (because they share information), their reporting errors are typically independent idiosyncratic factors like:
\begin{itemize}
\item Individual cognitive limitations in calculating expected population averages
\item Rounding errors when reporting probabilities
\item Personal biases in interpreting survey questions
\item Random mistakes
\end{itemize}

The sub-Gaussian condition simply means that extreme errors are rare, which is natural when dealing with probability reports bounded between 0 and 1.

\begin{assumption}[Well-conditioned Gram matrix]\label{ass:linear:LCLN}
The design matrix is well-conditioned: there exists $\underline\lambda>0$ such that
\[
\lambda_{\min}\!\left(n^{-1}X^{\top}X\right)\ge \underline\lambda.
\]
\end{assumption}

This assumption requires that agents have sufficiently diverse beliefs. If all agents had identical beliefs, we could not distinguish between different states because everyone would report the same thing. More generally, if agents' belief vectors are too similar (nearly collinear), the regression becomes unstable and we cannot reliably estimate the coefficients.

In practice, this assumption is satisfied when agents have access to different information or interpret signals differently, which is exactly what we want in an information aggregation setting. The constant $\underline\lambda$ measures how ``spread out'' the agents' beliefs are---larger values mean more diverse beliefs and better estimation.

\begin{remark}[Minimal-information inversion]\label{rem:minimal}
Procedure~\ref{proc:pmba2} inverts a single $L\times L$ matrix built
from $L$ \emph{group} averages. Algebraically this is equivalent to
OLS with exactly $L$ observations whose regressor matrix happens to be
non-singular (Assumption~\ref{ass:linear:LCLN}). The regression
view therefore subsumes the procedure while making standard errors and
model diagnostics immediately available.
\end{remark}

\subsection{Finite-Sample Performance}

With these assumptions in place, we can derive concrete performance guarantees. Using the standard OLS estimator $\widehat{\beta}=(X^{\top}X)^{-1}X^{\top}\alpha$, we get:

\begin{lemma}[Non-asymptotic error bound]\label{lem:linear:finite}
For any confidence level $\delta\in(0,1)$, the estimation error is bounded by:
\begin{equation}
  \Pr\Bigl(\|\widehat{\beta}-\beta\|_2 \le
          \frac{\sqrt{L}\sigma}{\underline\lambda}\sqrt{\tfrac{2\log(2L/\delta)}{n}}
  \Bigr) \ge 1-\delta.
  \label{eq:error_bound}
\end{equation}
\end{lemma}

This bound tells us exactly how precise our estimates are with finite data. The error decreases at the standard $1/\sqrt{n}$ rate as we collect more agents, but also depends on three key factors:

\begin{itemize}
\item \textbf{Sample size} $n$: More agents give better estimates
\item \textbf{Noise level} $\sigma$: Less noisy reports lead to more precise estimates
\item \textbf{Belief diversity} $\underline\lambda$: More diverse beliefs improve estimation quality
\item \textbf{Number of states} $L$: More states require more agents, with a $\sqrt{L\log L}$ dependence in this bound
\end{itemize}

The dependence on $L$ remains moderate (order $\sqrt{L\log L}$), so the procedure still scales reasonably with the number of states.

\begin{proof}
    Write $Z:=n^{-1}X^{\top}\varepsilon$ so that
    $\widehat{\beta}-\beta = (X^{\top}X)^{-1} X^{\top}\varepsilon = n\,(X^{\top}X)^{-1} Z$.
    By definition of $\underline\lambda$, the operator norm satisfies
    $\|(X^{\top}X)^{-1}\|_2\le(n\,\underline\lambda)^{-1}$. Hence
    \[\|\widehat{\beta}-\beta\|_2\;\le\;\underline\lambda^{-1}\,\|Z\|_2.\]
    Each coordinate of $Z$ is a sum of $n$ independent $\sigma$-sub-Gaussian
    random variables scaled by $1/n$. By Assumption~\ref{ass:linear:subG} and standard Hoeffding inequalities (see, e.g.,~\cite[Ch.~2]{vershynin2018highdim}),
    \[\Pr(|Z_j|\!>\!t)\le2\exp(-n t^2/2\sigma^2).\] 
    A union bound over the $L$
    coordinates and setting $t=\sigma\sqrt{\tfrac{2\log(2L/\delta)}{n}}$ gives $|Z_j|\le t$ for all $j$ with probability $1-\delta$, which implies
    \[\|Z\|_2\le\sqrt{L}\sigma\sqrt{2\log(2L/\delta)/n},\; \text{w.p.} \;1-\delta.\]
    Plugging into the display above proves~\eqref{eq:error_bound}.
    \end{proof}
Equation~\eqref{eq:error_bound} shows that precision improves at the
canonical $n^{-1/2}$ rate but deteriorates as $\sqrt{L\log L}$ in the
number of states (due to the $\sqrt{L}$ factor from the vector norm and the logarithmic term from the union bound). The term $\underline\lambda$ captures how
``spread out'' the posterior vectors are.

\subsection{State classification rule}
For each coordinate $d\in\{1,\ldots,L\}$, run regression~\eqref{eq:OLS_L} with dependent variable equal to the elicited second-order report for coordinate $d$, and denote the resulting coefficients by $\widehat{\beta}^{(d)}$. Define
\[
\widehat{\bar\mu}^{(d)}_k :=\widehat{\beta}^{(d)}_0+\widehat{\beta}^{(d)}_k
\quad (k=1,\ldots,L-1),\qquad
\widehat{\bar\mu}^{(d)}_L:=\widehat{\beta}^{(d)}_0.
\]
Stack across coordinates to form fitted state-mean vectors
\[
\widehat{\bar\mu}_k:=\big(\widehat{\bar\mu}^{(1)}_k,\ldots,\widehat{\bar\mu}^{(L)}_k\big)\in\mathbb{R}^L,
\]
so that $\widehat{\bar\mu}_L$ corresponds to the omitted baseline state $\omega_L$ in each coordinate regression.
Let $\hat\mu:=\tfrac1n\sum_{i=1}^{n}p_i$ be the empirical mean of
first-order beliefs. Analogous to Procedure~\ref{proc:pmba2}, we select as true state:
\begin{equation}
  \widehat\omega = \arg\min_{k\le L}
      \|\widehat{\bar\mu}_k-\hat\mu\|_2.
  \label{eq:state_rule}
\end{equation}
Here $\widehat{\bar\mu}_k$ is the fitted \emph{vector} of state-contingent means for state $k$ after stacking all coordinate-wise regressions.
For the rule to succeed, the true state means must be separated.

\begin{assumption}[Separation of state means]\label{ass:linear:sep}
$\displaystyle
  \Delta := \min_{k\neq\ell}\|\bar\mu_k-\bar\mu_\ell\|_2 > 0$.
\end{assumption}

\begin{assumption}[First-order sampling for finite-sample analysis]\label{ass:linear:fo}
Conditional on the realized state $\omega$, the first-order belief vectors $\{p_i\}_{i=1}^n$ are independent across agents.
\end{assumption}

\begin{proposition}[Finite-$n$ identification guarantee]\label{prop:linear:finite}
Under
Assumptions~\ref{ass:linear:subG}--\ref{ass:linear:fo}, define
\[
r_n^\beta:=\frac{\sqrt{L}\sigma}{\underline\lambda}\sqrt{\frac{2\log(4L^2/\delta)}{n}},
\qquad
r_n^\mu:=\sqrt{\frac{L\log(4L/\delta)}{2n}},
\qquad
r_n:=\max\{r_n^\beta,r_n^\mu\}.
\]
If
\begin{equation}
  n>\max\!\left\{
  \frac{72L^2\,\sigma^2\log(4L^2/\delta)}{\underline\lambda^2\,\Delta^2},
  \frac{18L^2\log(4L/\delta)}{\Delta^2}
  \right\},
  \label{eq:n_threshold}
\end{equation}
then $\Pr(\widehat\omega = \omega) \ge 1-\delta$.
\end{proposition}

\begin{proof}
    The proof proceeds in several steps. First, we establish concentration of the coordinate-wise regression coefficients. Second, we show that this implies concentration of the fitted state-contingent means. Third, we derive concentration of the empirical first-order mean. Finally, we combine these bounds with the separation condition.

    \textbf{Step 1: Coefficient estimation error.}
    By Lemma~\ref{lem:linear:finite}, for each coordinate $d\le L$ and any failure probability $\eta\in(0,1)$,
    \[
    \Pr\!\left(\|\widehat{\beta}^{(d)}-\beta^{(d)}\|_2
    \le
    \frac{\sqrt{L}\sigma}{\underline\lambda}\sqrt{\frac{2\log(2L/\eta)}{n}}
    \right)\ge 1-\eta.
    \]
    Set $\eta=\delta/(2L)$ and apply a union bound over $d=1,\ldots,L$. Then with probability at least $1-\delta/2$,
    \[
    \max_{d\le L}\|\widehat{\beta}^{(d)}-\beta^{(d)}\|_2\le r_n^\beta\le r_n.
    \]

    \textbf{Step 2: Population mean estimation error.}
    Fix any state $k\le L$ and coordinate $d\le L$. Under Step 1,
    \[
    |\widehat{\bar\mu}^{(d)}_k-\bar\mu^{(d)}_k|
    \le 2r_n
    \]
    (the baseline case $k=L$ has error at most $r_n$). Hence,
    \[
    \|\widehat{\bar\mu}_k-\bar\mu_k\|_2
    \le \sqrt{L}\,\max_{d\le L}|\widehat{\bar\mu}^{(d)}_k-\bar\mu^{(d)}_k|
    \le 2\sqrt{L}\,r_n
    \]
    for all $k\le L$.

    \textbf{Step 3: Concentration of first-order mean.}
    Under Assumption~\ref{ass:linear:fo}, for each coordinate $d\le L$, conditional on the realized state $\omega$, the variables $\{p_{id}\}_{i=1}^n$ are independent and lie in $[0,1]$. Hoeffding's inequality implies
    \[
    \Pr\!\left(\left|\hat\mu_d-\bar\mu_{\omega,d}\right|>t\ \middle|\ \omega\right)
    \le 2e^{-2nt^2}.
    \]
    Taking $t=\sqrt{\log(4L/\delta)/(2n)}$ and a union bound over $d=1,\ldots,L$ yields, with probability at least $1-\delta/2$,
    \[
    \|\hat\mu-\bar\mu_\omega\|_2
    \le
    \sqrt{L}\,\max_{d\le L}\left|\hat\mu_d-\bar\mu_{\omega,d}\right|
    \le
    \sqrt{\frac{L\log(4L/\delta)}{2n}}
    =r_n^\mu\le r_n.
    \]

    \textbf{Step 4: Distance to true state.}
    The classification rule selects $\widehat\omega = \arg\min_{k\le L} \|\widehat{\bar\mu}_k-\hat\mu\|_2$.
    Let $\omega$ be the true state.
    For the true state $\omega$, we have:
    \[
      \|\widehat{\bar\mu}_\omega-\hat\mu\|_2 \,\le\,
      \|\widehat{\bar\mu}_\omega-\bar\mu_\omega\|_2+\|\hat\mu-\bar\mu_\omega\|_2
      \,\le\, 2\sqrt{L}\,r_n + r_n \le 3\sqrt{L}\,r_n.
    \]
    
    \textbf{Step 5: Distance to other states.}
    For any other state $k\neq\omega$, by the triangle inequality and Assumption~\ref{ass:linear:sep}:
    \[
      \|\widehat{\bar\mu}_k-\hat\mu\|_2 \,\ge\,
      \|\bar\mu_k-\bar\mu_\omega\|_2-\|\widehat{\bar\mu}_k-\bar\mu_k\|_2-
      \|\hat\mu-\bar\mu_\omega\|_2 \,\ge\, \Delta-2\sqrt{L}\,r_n-r_n \ge \Delta-3\sqrt{L}\,r_n.
    \]
    
    \textbf{Step 6: Success condition.}
    Correct classification follows if
    \[
    3\sqrt{L}\,r_n < \Delta-3\sqrt{L}\,r_n,
    \quad\text{equivalently}\quad
    6\sqrt{L}\,r_n<\Delta.
    \]
    Since $r_n=\max\{r_n^\beta,r_n^\mu\}$, it is enough that both
    \[
    6\sqrt{L}\,r_n^\beta<\Delta
    \qquad\text{and}\qquad
    6\sqrt{L}\,r_n^\mu<\Delta,
    \]
    which are exactly the two terms in~\eqref{eq:n_threshold}. The coefficient-concentration and first-order-concentration events each hold with probability at least $1-\delta/2$; by a union bound their intersection has probability at least $1-\delta$. On that intersection, the inequalities above imply $\widehat\omega=\omega$.
    \end{proof}
The multi-state aggregation procedure can be
read as nothing more than an over-identified linear regression whose
coefficients---gaps in population means---underpin identification. Doing so pays two dividends: it demystifies the matrix inversion in Procedure~\ref{proc:pmba2} and it imports a century's worth of inferential tools into the robust-aggregation setting.

\section{Experimental Evidence}\label{sec:experiment}

To test the theoretical predictions of our PMBA procedure, we conducted an online experiment involving belief elicitation about NFT prices. The experiment was designed to evaluate whether PMBA outperforms alternative aggregation methods in a real-world setting with genuine uncertainty.

NFTs (Non-Fungible Tokens) are unique digital assets that represent ownership of specific digital content, such as artwork, music, or collectibles, recorded on a blockchain. Unlike cryptocurrencies, each NFT is unique and cannot be exchanged on a one-to-one basis. The NFT market is diverse, with prices ranging from negligible amounts to millions of dollars, making price prediction inherently challenging.

This prediction task is particularly well-suited for testing our aggregation methodology for several reasons. First, we have access to ground-truth data---actual market prices at the time of the experiment---allowing us to evaluate the accuracy of different aggregation methods objectively. Second, while market prices are publicly available, individual participants likely have limited and heterogeneous access to recent price data, creating the information asymmetry that our robust aggregation procedure is designed to handle. Third, NFT prices in our sample span all four ranges with substantial cross-item dispersion, so misclassification is informative and allows meaningful comparison across aggregation methods.

\subsection{Experimental Design}

\subsubsection{Participants and Procedure}
Participants from the National University of Singapore (NUS) were recruited through the NUS Behavioral Lab ORSEE recruitment system. A total of 199 eligible responses were retained for analysis after excluding incomplete entries, duplicates, and responses for which identity information could not be verified. Each participant received a fixed compensation of SGD 7 for completing the experiment.

\subsubsection{Belief Elicitation Task}
The main task consisted of evaluating the price range of digital artworks (NFTs). For each participant, 10 NFTs from a pool of 20 were randomly chosen and presented visually. Participants were required to allocate 100\% probability across four clearly defined price intervals: SGD 0--250, SGD 250--7,500, SGD 7,500--21,500, and SGD 21,500 and above. The complete list of NFTs used in the experiment, along with their actual prices and assigned ranges, is provided in Table~\ref{tab:nft_list} in the appendix.

After providing their own belief assessment, participants subsequently provided a separate estimation regarding the average probability distribution they believed other participants would assign to the same price intervals. This dual assessment is crucial for PMBA: the first-order beliefs (personal assessments) and second-order beliefs (expectations about others) provide the two key inputs that PMBA uses to aggregate information and identify the true state.

\subsubsection{Knowledge Assessment}
Prior to belief elicitation, participants completed a brief knowledge assessment comprising six multiple-choice questions about cryptocurrencies and NFTs. These questions were designed to assess baseline familiarity without revealing information about specific NFT prices or market distributions. This knowledge measure provides exogenous variation that helps us separate participants' true beliefs from reporting noise and cognitive limitations.

\subsection{PMBA Implementation}

To implement PMBA in our experimental setting, we need to address a key challenge: participants' reported beliefs may be noisy or biased due to cognitive limitations, misunderstanding of the task, or strategic misreporting behavior. To obtain cleaner estimates of participants' true beliefs, we use a two-stage approach.

\textbf{Why we need instrumental variables:} Participants' quiz performance provides an exogenous measure of their knowledge about cryptocurrencies and NFTs. This knowledge should be correlated with their ability to form accurate beliefs about NFT prices, but the quiz questions do not contain specific price information that would directly affect their price predictions other than through their first-order beliefs. This makes quiz performance a suitable instrument for cleaning up noisy belief reports.

We implement this via two-stage least squares (2SLS). Individual knowledge, measured by six dummy variables indicating correctness on the quiz questions, serves as instrumental variables for first-order beliefs. The identifying assumption is standard: quiz performance is correlated with the endogenous regressor ($\mu$), while affecting second-order reports ($\alpha$) only through $\mu$.

\textbf{Stage 1: Cleaning up noisy beliefs.} We use participants' quiz performance to obtain cleaner estimates of their true beliefs. For each participant $i$ and price range $k$, we estimate:
\[
\mu_i^k = \sum_{j=1}^6 \theta_{j}^k x_{ij} + c^k + \epsilon_i^k,
\]
where $x_{ij} \in \{0,1\}$ indicates whether participant $i$ correctly answered knowledge question $j$. This stage essentially asks: ``Given a participant's knowledge level, what should their beliefs about each price range be?''

\textbf{Stage 2: The PMBA aggregation model.} Now we apply the core PMBA logic: we relate each participant's second-order belief (their expectation of what others believe) to their cleaned first-order beliefs:
\[
\alpha^l_i = \sum_{k=1}^3 \beta^{l}_k \widehat{\mu}_i^k + \beta_0^l + \epsilon^l_i,
\]
where $\widehat{\mu}_i^k$ are the fitted values from Stage 1. This equation captures the PMBA insight: a participant's expectation of the population average should be a weighted combination of what the population average would be in each possible state, weighted by their beliefs about which state is true. We omit Range 4 beliefs to avoid multicollinearity since the four beliefs sum to 100\%; the implied mean for Range 4 is recovered from the constraint that the four fitted means sum to one (or from the intercept in the omitted-category specification).

\textbf{Generating predictions.} The regression estimates give us the estimated population averages for each price range. We then predict the true state by finding which range's estimated population average is closest to the observed sample average:
\[
\widehat{r}
= \operatorname*{arg\,min}_{r\in\{1,2,3,4\}}
\left\| \hat{\mu} - \bar{\mu}_{r} \right\|_2.
\]
Here $\widehat{r}$ is the PMBA prediction, and $\bar{\mu}_r$ denotes the fitted vector of state-contingent means for range $r$, obtained by stacking the coordinate-wise second-stage regressions; for Range 4, the fitted mean is the one implied by the omitted-category regression (or by the four means summing to one).

\subsection{Results}

\subsubsection{PMBA versus Majority Voting}
Table~\ref{tab:prediction_errors} compares PMBA predictions with majority voting based on both first-order beliefs ($\mu$) and second-order beliefs ($\alpha$). PMBA yields the lowest mean absolute error (1.25), compared with 1.50 for majority voting on first-order beliefs and 1.40 for majority voting on second-order beliefs.

Statistical evidence is strongest against majority voting on first-order beliefs ($\mu$): the paired $t$-test rejects at the 5\% level. Relative to majority voting on second-order beliefs ($\alpha$), PMBA remains directionally better in mean absolute error, but the difference is not statistically significant at conventional levels (including 10\%).

\begin{table}[htbp]
\footnotesize
\centering
\begin{threeparttable}
\caption{Prediction Accuracy: PMBA vs Majority Voting for NFT Price Ranges}\label{tab:prediction_errors}
\begin{tabular}{c@{\hspace{0.5em}}c@{\hspace{0.3em}}c@{\hspace{0.3em}}c@{\hspace{0.3em}}c@{\hspace{0.5em}}c@{\hspace{0.3em}}c@{\hspace{0.3em}}c}
\toprule
\textbf{NFT} & \textbf{True State} & \multicolumn{3}{c}{\textbf{Prediction}} & \multicolumn{3}{c}{\textbf{Absolute Error}} \\
\cmidrule(lr){3-5} \cmidrule(lr){6-8}
& & \textit{PMBA} & \textit{Majority} ($\mu$) & \textit{Majority} ($\alpha$) & \textit{PMBA} & \textit{Majority} ($\mu$) & \textit{Majority} ($\alpha$) \\
\midrule
2  & 1 & 1  & 1 & 1&0  & 0 & 0 \\
11 & 1 &  1 & 1 & 1&0  & 0 &0 \\
17 & 1 &  1 & 1 &1 &0  & 0 &0 \\
18 & 1 &  2 & 3 &3 &1  & 2 &2 \\
7  & 2 & 1  & 1 &1 &1  & 1 &1 \\
8  & 2 & 3  & 1 &2 &1  & 1 &0 \\
9  & 2 & 1  & 1 &1 &1  & 1 &1 \\
10 & 2 &  4 & 1 &1 &2  & 1 &1 \\
13 & 2 &  2 & 1 &1 &0  & 1 &1 \\
15 & 2 &  1 & 1 &1 &1  & 1 &1 \\
19 & 2 &  1 & 1 &1 &1  & 1 &1 \\
1  & 3 & 3  & 1 &1 &0  & 2 & 2\\
4  & 3 & 1  & 1 &1 &2  & 2 &2 \\
6  & 3 & 4  & 1 &1 &1  & 2 &2 \\
12 & 3 &  4 & 2 &3 &1  & 1 &0  \\
14 & 3 &  1 & 1 &1 &2  & 2 &2\\
3  & 4 & 2  & 1 &1 &2  & 3 &3 \\
5  & 4 & 1  & 1 & 1&3  & 3 &3 \\
16 & 4 &  1 & 1 &1 &3  & 3 &3 \\
20 & 4 &  1 & 1 &1 &3  & 3 & 3\\
\midrule
\multicolumn{5}{r}{\textit{Mean absolute error}} & 1.25 & 1.50 & 1.40 \\
\midrule
\multicolumn{5}{r}{Wilcoxon test ($H_a:Median [Error_{PMBA}-Error_{Majority}]<0$)}&$p$-value: & 0.0938    & 0.2812 \\
\multicolumn{5}{r}{Paired $t$-test ($H_a:Error_{PMBA}<Error_{Majority}$)}&$p$-value: & 0.0481     &  0.1896   \\
\bottomrule
\end{tabular}

\begin{tablenotes}[flushleft]
\item \textit{Notes:} This table shows how well each method predicts the true price range of 20 NFTs. \textbf{Majority prediction} means we pick the price range with the highest average belief across participants. \textbf{Absolute error} measures how far off each prediction is: 0 means perfect, 1 means off by one range, 2 means off by two ranges, and so on. PMBA has lower mean absolute error than both majority benchmarks; statistical significance is stronger against majority voting based on first-order beliefs than against majority voting based on second-order beliefs.
\end{tablenotes}
\end{threeparttable}
\end{table}

\subsubsection{PMBA versus Surprisingly Popular}
We also compare PMBA with the Surprisingly Popular (SP) procedure. SP works by making binary predictions: it compares the average of individual beliefs with the observed population average to determine which of two categories is ``surprisingly popular''. However, SP faces limitations in our four-state setting because it can only make binary distinctions.

In our implementation, we first map the four ranges into two groups: low (Ranges 1\&2) and high (Ranges 3\&4). For each NFT, let
\[
\hat{\mu}_{12}=\hat{\mu}_{1}+\hat{\mu}_{2},\qquad
\hat{\alpha}_{12}=\hat{\alpha}_{1}+\hat{\alpha}_{2}.
\]
SP predicts low (1\&2) if $\hat{\mu}_{12}>\hat{\alpha}_{12}$ and high (3\&4) otherwise; ties are assigned to high (3\&4), matching our replication code.

In our data, we find that most SP predictions are not statistically significant: with our sample size, the null hypothesis of equality cannot be rejected at the 10\% level in 16 out of 20 NFTs. This suggests that SP struggles to distinguish between states when there are more than two possibilities.

Table~\ref{tab:score_comparison} shows that PMBA continues to outperform SP overall. PMBA's advantage is most visible in lower price ranges (Ranges 1, 2, and 3), where paired comparisons remain statistically significant ($p$-value = 0.0277 in the paired $t$-test), while full-sample binary-score differences are not statistically significant.

\begin{table}[htbp]
\centering
\tiny
\begin{threeparttable}
\caption{PMBA vs Surprisingly Popular: Fine-Grained vs Binary Predictions}\label{tab:score_comparison}
\begin{tabular}{c@{\hspace{0.3em}}c@{\hspace{0.3em}}c@{\hspace{0.3em}}c@{\hspace{0.3em}}c@{\hspace{0.3em}}c@{\hspace{0.3em}}c@{\hspace{0.3em}}c}
\toprule
\textbf{NFT} & \textbf{True State} & \multicolumn{2}{c}{\textbf{Prediction}} & \multicolumn{2}{c}{\textbf{Score}}  \\
\cmidrule(lr){3-4} \cmidrule(lr){5-6} 
& & PMBA & SP & PMBA & SP  \\
\midrule
2  & 1\&2 (1) & 1\&2 (1) & 3\&4 & 1  & -1 \\
11 & 1\&2 (1) & 1\&2 (1) & 1\&2 & 1  & 1 \\
17 & 1\&2 (1) & 1\&2 (1) & 1\&2 & 1  & 1 \\
18 & 1\&2 (1) & 1\&2 (2) & 1\&2 & 1  & 1 \\
7  & 1\&2 (2) & 1\&2 (1) & 3\&4 & 1  & -1 \\
8  & 1\&2 (2) & 3\&4 (3) & 3\&4 & -1 & -1 \\
9  & 1\&2 (2) & 1\&2 (1) & 1\&2 & 1  & 1 \\
10 & 1\&2 (2) & 3\&4 (4) & 3\&4 & -1 & -1 \\
13 & 1\&2 (2) & 1\&2 (2) & 3\&4 & 1  & -1 \\
15 & 1\&2 (2) & 1\&2 (1) & 1\&2 & 1  & 1 \\
19 & 1\&2 (2) & 1\&2 (1) & 1\&2 & 1  & 1 \\
1  & 3\&4 (3) & 3\&4 (3) & 1\&2 & 1  & -1 \\
4  & 3\&4 (3) & 1\&2 (1) & 3\&4 & -1 & 1 \\
6  & 3\&4 (3) & 3\&4 (4) & 1\&2 & 1  & -1 \\
12 & 3\&4 (3) & 3\&4 (4) & 1\&2 & 1  & -1 \\
14 & 3\&4 (3) & 1\&2 (1) & 1\&2 & -1 & -1 \\
3  & 3\&4 (4) & 1\&2 (2) & 3\&4 & -1 & 1 \\
5  & 3\&4 (4) & 1\&2 (1) & 3\&4 & -1 & 1 \\
16 & 3\&4 (4) & 1\&2 (1) & 3\&4 & -1 & 1 \\
20 & 3\&4 (4) & 1\&2 (1) & 1\&2 & -1 & -1 \\
\midrule
\textit{Average Score} \\
Range 1 & & & & 1.00 & 0.50 \\
Range 2 & & & & 0.43 & -0.14 \\
Range 3 & & & & 0.20 & -0.60 \\
Range 4 & & & & -1.00 & 0.50 \\
\midrule
\textbf{Total} & & & & \textbf{4} & \textbf{0} \\
\midrule
\textit{NFTs from Range 1 \& 2} \\
\multicolumn{4}{r}{McNemar test ($H_a:\,\text{Score}_{PMBA} \neq \text{Score}_{SP}$) (Exact) $p$-value:} & \multicolumn{2}{c}{0.2500} \\
\multicolumn{4}{r}{Wilcoxon signed-rank test ($H_a:\,\mathrm{Median}[\text{Score}_{PMBA}-\text{Score}_{SP}]>0$) (Exact) $p$-value:} & \multicolumn{2}{c}{0.1250} & \\
\multicolumn{4}{r}{Paired $t$-test ($H_a:\,\text{Score}_{PMBA}>\text{Score}_{SP}$) $p$-value:} & \multicolumn{2}{c}{0.0408} & \\
\midrule
\textit{NFTs from Range 1, 2 \& 3} \\
\multicolumn{4}{r}{McNemar test ($H_a:\,\text{Score}_{PMBA} \neq \text{Score}_{SP}$) (Exact) $p$-value:} & \multicolumn{2}{c}{0.1250} & \\
\multicolumn{4}{r}{Wilcoxon signed-rank test ($H_a:\,\mathrm{Median}[\text{Score}_{PMBA}-\text{Score}_{SP}]>0$) (Exact) $p$-value:} & \multicolumn{2}{c}{0.0625} & \\
\multicolumn{4}{r}{Paired $t$-test ($H_a:\,\text{Score}_{PMBA}>\text{Score}_{SP}$) $p$-value:} & \multicolumn{2}{c}{0.0277} & \\
\midrule
\textit{All NFTs} \\
\multicolumn{4}{r}{McNemar test ($H_a:\,\text{Score}_{PMBA} \neq \text{Score}_{SP}$) (Exact) $p$-value:} & \multicolumn{2}{c}{0.7539} & \\
\multicolumn{4}{r}{Wilcoxon signed-rank test ($H_a:\,\mathrm{Median}[\text{Score}_{PMBA}-\text{Score}_{SP}]>0$) (Exact) $p$-value:} & \multicolumn{2}{c}{0.3770} & \\
\multicolumn{4}{r}{Paired $t$-test ($H_a:\,\text{Score}_{PMBA}>\text{Score}_{SP}$) $p$-value:} & \multicolumn{2}{c}{0.2704} & \\

\bottomrule
\end{tabular}

\begin{tablenotes}[flushleft]
\footnotesize
\item \textit{Notes:} This table compares PMBA (which can distinguish between 4 price ranges) with SP (which can only distinguish between 2 broad categories: low prices vs.\ high prices). \textbf{Score} gives +1 for correct predictions and -1 for incorrect predictions. The statistical tests show that PMBA's ability to make finer distinctions leads to better performance, especially for NFTs in the lower price ranges.
\end{tablenotes}
\end{threeparttable}
\end{table}

\subsection{Discussion}

Our experimental results provide support for the theoretical predictions of PMBA\@. The procedure continues to outperform both majority voting and the Surprisingly Popular method in mean-error terms. Inference is mixed across comparisons: gains are statistically stronger against first-order majority voting and in lower-range PMBA-vs-SP comparisons, while some other pairwise comparisons are not statistically significant at conventional levels. Overall, the evidence remains consistent with PMBA effectively leveraging higher-order beliefs in practice.

The additional robustness exercises also preserve the qualitative conclusions. In particular, the OLS-IV implementations with and without intercept produce identical state predictions in this sample, and the no-IV variants remain directionally consistent though statistically weaker.

The superior performance of PMBA is particularly notable given that it operates in a challenging environment with genuine uncertainty about NFT valuations. The fact that participants' knowledge about cryptocurrencies and NFTs serves as an effective instrumental variable for their first-order beliefs validates our theoretical framework's assumption about the relationship between information and beliefs.

\section{Extensions}\label{sec:extensions}
In this section, we show how to extend the basic PMBA to other settings. In Subsection~\ref{sec:guesses}, we recognize that, in practice, eliciting beliefs from agents is difficult---we show that a variant of the procedure can aggregate information if we only elicit ``simple'' information (e.g., report which state is thought more likely) from most agents, and elicit more fine-grained information from only a small number of agents. In Subsection~\ref{sec:IC}, we discuss how to incentivize truthful reporting. For expositional clarity, both extensions are displayed for the case of two states of the world.

\subsection{Aggregating Guesses instead of Beliefs}\label{sec:guesses}

In this section, we discuss how our PMBA procedure can be adjusted to environments where finite guesses, rather than full beliefs, are elicited (e.g., which state is more likely rather than the probability of state $\omega=1$; this is also the formulation in~\cite{prelec2017solution}). Consider a binary state space and an information structure that satisfies Assumptions~\ref{ass:1} and~\ref{ass:simple}. The action-based PMBA procedure is defined in Procedure~\ref{proc:ABPMBA}.

\begin{framedbox}[htbp]
  \caption{Action-based PMBA, 2 states}\label{proc:ABPMBA}
  \begin{minipage}{0.95\textwidth}
\begin{enumerate}
    \item Elicit from each agent $i \in N$, the state in $\{0,1\}$ they consider more likely, $a_i \in \{0,1\}$. If their belief equals $0.5$, have them report $a_i=1$ (note that this tie-breaking cannot itself be enforced by an incentive-compatible elicitation rule). Given $n$ elicited reports, compute
    \begin{align*}
        \hat{\alpha} = \frac1n \sum_{i=1}^n a_i,
    \end{align*}
    which converges in probability to $\bar{\alpha}(\omega)$ under Assumption~\ref{ass:1}.
    \item Select $2$ agents, A and B, such that $a_A \neq a_B$. Elicit from each of these agents $i=A,B$ their belief $\mu_i$ and their expectation of the average report in the population, i.e., denoting $s_{i}$ as the signal seen by agent $i$, elicit for each $i =A,B$:
    \begin{align*}
        \alpha_i = \mathbb{E}\left[\frac1n \sum_{j= 1}^n a_j \bigg| s_i\right],
\end{align*}
    with $n$ matching the sample size in Step~1.
    \item Calculate $\bar{\alpha}(\omega_1), \bar{\alpha}(\omega_2)$ as:
\begin{align}
 &  \left( \begin{array}{c} \bar{\alpha}(\omega_1)\\ \bar{\alpha}(\omega_2) \end{array}\right)  = \left( \begin{array}{cc} \mu_{A} & 1-\mu_{A}\\ \mu_{B} & 1-\mu_{B}  \end{array} \right)^{-1} 
    \left( \begin{array}{c} \alpha_A \\ \alpha_{B}
    \end{array}     \right) \label{eqn:mainstep3}
    \end{align}
Note that the right-hand side is elicited from agents directly; and since $\mu_A \neq \mu_B$, the middle matrix is invertible.
\item Recover the state by comparing $\hat{\alpha}$ calculated in Step~1 with $\bar{\alpha}(\cdot)$ calculated above. The state of the world is the unique $\omega$ such that $\hat{\alpha} \approx \bar{\alpha}(\omega)$.
\end{enumerate}
With sufficiently large samples, Assumption~\ref{ass:noherding} and the law of large numbers ensure that the empirical objects in Steps~1--2 concentrate around their population counterparts so that the final comparison identifies the true state.
  \end{minipage}
\end{framedbox}

For the Action-based PMBA procedure to be viable, we require the existence of two agents $A,B$ with differing first-order guesses, $a_A \neq a_B$. Assumptions~\ref{ass:1} and~\ref{ass:simple} are not sufficient to ensure this. The following assumption, however, guarantees the existence of two such agents.

\begin{assumption}[No Herding]\label{ass:noherding}
We assume that there exists $\delta>0$ such that, for any agent $i$ and any state $\omega_j \in \{\omega_1,\omega_2\}$, the probability that agent $i$ assigns more than one-half to state $\omega_j$ conditional on their signal $s_i$ is at least $\delta$.
\end{assumption}

\begin{assumption}[Action-rate separation]\label{ass:absep}
The state-contingent population action averages are distinct:
\[
\bar\alpha(\omega_1)\neq \bar\alpha(\omega_2).
\]
\end{assumption}

We can now present our result on aggregation in this environment.

\begin{restatable}{proposition}{BinaryGuesses}\label{prop:BG}
Consider a binary state space. Suppose that the information structure satisfies Assumptions~\ref{ass:1},~\ref{ass:simple},~\ref{ass:noherding}, and~\ref{ass:absep}. Then the \emph{Action-based PMBA} procedure (Procedure~\ref{proc:ABPMBA}) correctly recovers the true state of the world in probability.
\end{restatable}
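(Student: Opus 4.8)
The plan is to mirror the logic of Theorem \ref{thm:largepopbelief}, replacing first-order beliefs $\mu_i$ with the binary guesses $a_i$ as the object being averaged across the population, while still using the two selected agents' \emph{beliefs} $\mu_A, \mu_B$ (not their guesses) as the weights in the linear system. First I would define, analogously to $\bmu(\state)$, the expected population average guess in state $\state$, namely $\bar{\alpha}(\state) = \lim_{n\to\infty}\frac1n\sum_{i=1}^n \mathbb{E}[a_i \mid \state]$. The key observation is that under Assumption \ref{ass:1} part (2), the same limited-correlation argument that gave a law of large numbers for beliefs applies verbatim to the bounded random variables $a_i \in \{0,1\}$: conditional on $\state$, the $\epsilon$-independence of all but finitely many agents' signals forces $\hat{\alpha} = \lim_n \frac1n \sum_{i=1}^n a_i = \bar{\alpha}(\state)$ almost surely. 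So Step 1 of the procedure recovers $\bar{\alpha}(\state)$ for the true state.

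Next I would verify the inversion step. For agent $i$ with belief $\mu_i$ that the state is $\state_1$, the law of iterated expectations gives
\begin{align*}
\alpha_i = \mathbb{E}\!\left[\lim_n \tfrac1n \sum_j a_j \,\Big|\, s_i\right] = \mu_i\, \bar{\alpha}(\state_1) + (1-\mu_i)\,\bar{\alpha}(\state_2),
\end{align*}
because conditional on each state the population average guess is the deterministic constant $\bar{\alpha}(\state)$, and agent $i$ simply weights these two constants by their posterior over the states. This is exactly the linear relationship encoded in \eqref{eqn:mainstep3}, so solving the $2\times 2$ system recovers $(\bar{\alpha}(\state_1),\bar{\alpha}(\state_2))^T$ correctly. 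Here the relevant matrix is invertible precisely when $\mu_A \neq \mu_B$; note that the procedure's text says the matrix is invertible ``since $\alpha_A \neq \alpha_B$,'' but the substantive requirement is $\mu_A \neq \mu_B$, which follows because two agents making different guesses must (given the tie-breaking rule) lie on opposite sides of belief $\tfrac12$ and hence have distinct beliefs.

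The role of the new assumption, and the main obstacle, is ensuring the procedure is \emph{well-posed} and \emph{informative}. Two separate things must hold. First, Step 2 requires the a.s. existence of two agents $A,B$ with $a_A \neq a_B$: this is exactly what Assumption \ref{ass:noherding} (No Herding) delivers. Since each agent assigns probability above $\tfrac12$ to each of the two states with probability at least $\delta > 0$, and by the limited-correlation structure there are infinitely many approximately independent agents, almost surely infinitely many agents guess $a_i = 1$ and infinitely many guess $a_i = 0$; a Borel--Cantelli-type argument against the $\epsilon$-independent subpopulation makes this rigorous. Second, the recovery in Step 4 requires $\bar{\alpha}(\state_1) \neq \bar{\alpha}(\state_2)$, so that $\hat{\alpha}$ identifies a unique state. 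I expect this to be the genuinely delicate point: unlike in Theorem \ref{thm:largepopbelief}, where Assumption \ref{ass:2} was invoked to separate the state-dependent \emph{belief} averages, here one must show the state-dependent \emph{guess} averages separate. I would argue this from No Herding combined with informativeness: Assumption \ref{ass:noherding} guarantees a nontrivial mass of agents strictly prefer each state in each state of the world, and the monotone-likelihood structure implicit in mutual absolute continuity (Assumption \ref{ass:1} part (1)) ensures the fraction of agents guessing $\state_1$ is strictly higher in $\state_1$ than in $\state_2$, giving $\bar{\alpha}(\state_1) > \bar{\alpha}(\state_2)$. Pinning down this last inequality cleanly from the stated primitives is where the real work lies; the remaining steps are direct transcriptions of the proof of Theorem \ref{thm:largepopbelief}.
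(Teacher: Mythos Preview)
Your proposal is correct and matches the paper's approach: the paper gives no detailed proof of this proposition, simply stating ``The result follows along the same lines as Theorem~\ref{thm:largepopbelief},'' and your outline fills in exactly the modifications one would expect---replacing $\mu_i$ by $a_i$ in the population-average step, keeping beliefs $\mu_A,\mu_B$ as the weights in the linear system, and invoking Assumption~\ref{ass:noherding} for the existence of agents with opposite guesses. Your observation that the invertibility condition in the procedure should read $\mu_A\neq\mu_B$ rather than $\alpha_A\neq\alpha_B$, and that this follows automatically from $a_A\neq a_B$ via the tie-breaking rule, is a correct reading; and your identification of $\bar\alpha(\state_1)\neq\bar\alpha(\state_2)$ as the step requiring genuine care (via the FOSD structure of Lemma~\ref{lem:FOSD} together with Assumption~\ref{ass:noherding} placing $\tfrac12$ in the interior of each agent's belief support, uniformly) goes beyond what the paper itself spells out.
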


\begin{proof}
Under Assumption~\ref{ass:1}, Step~1 of Procedure~\ref{proc:ABPMBA} yields
\[
\hat\alpha=\frac1n\sum_{i=1}^n a_i \to_P \bar\alpha(\omega),
\]
where $\bar\alpha(\omega)$ is the state-contingent population average action. Assumption~\ref{ass:noherding} implies that with probability approaching one (as the population grows) there exist agents $A,B$ with different actions and hence different beliefs; thus, the $2\times2$ matrix in Step~3 is invertible. By the law of iterated expectations, the two equations in Step~3 recover $(\bar\alpha(\omega_1),\bar\alpha(\omega_2))$. Assumption~\ref{ass:absep} then guarantees that Step~4 separates states uniquely as $n$ grows. Therefore the procedure identifies the true state in probability.
\end{proof}

\subsection{Incentivizing Truthful Reporting}\label{sec:IC}
In our analysis so far we abstracted away from incentive issues, assuming truth-telling. We now discuss how truthful elicitation and aggregation can simultaneously be achieved via a slightly adjusted mechanism that includes payments which ensure truthful reporting is a (strict) Bayesian Nash equilibrium. 

Recall that when there is a verifiable state that is or will be known to the principal, the principal can truthfully elicit agents' beliefs about the state by paying them using a proper scoring rule. A proper scoring rule pays agents as a function of the reported beliefs and the true state so that for any agent, truthful reporting of beliefs is a dominant strategy. For example, classic scoring rules such as the Brier score~\citep{brier1950verification} or the logarithmic scoring rule make truthful reporting a dominant strategy for expected utility maximizing agents.\footnote{There is a large literature proposing more robust scoring rules for various settings, see e.g.,~\cite{karni2009mechanism} for an example.}

When the state is not verifiable/observable to the principal, proper scoring rules are not applicable. However, the principal may nevertheless be interested in ensuring that agents are incentivized to report their beliefs truthfully. A leading example from~\cite{prelec2004bayesian} is the case of subjective surveys/evaluations of a new product. In their example, the principal (e.g., a marketer) wishes to incentivize truthful reporting by agents, but there is no ``ground truth'' on which to base the rewards. The basic idea of Bayesian Truth Serum of~\cite{prelec2004bayesian} is that agents' beliefs also have implications on their second-order beliefs (i.e., how they think the population will evaluate them). The population averages are indeed verifiable, and therefore can be used to incentivize agents. A similar idea works here, we describe it loosely for completeness:
\begin{enumerate}
    \item Elicit beliefs and second-order beliefs from agents as prescribed in PMBA to learn the true state $\omega$ (which variant of PMBA is used, and what is elicited, depends on the setting). 
    \item For the agents who were asked for second-order beliefs, pay them based on a proper scoring rule as a function of their expected population average beliefs, and the true calculated population average beliefs.
    \item For all other agents, pay them using a proper scoring rule as a function of their beliefs and the state $\omega$ recovered by PMBA. 
\end{enumerate}
Observe that under the maintained assumptions of correctness of the corresponding PMBA procedure, truthful reporting is a perfect Bayesian equilibrium among agents. Since the population is large, average beliefs are effectively observed: a single agent's misreport has negligible effect on that average. Therefore, rewarding the agents who report second-order beliefs with a proper scoring rule based on expected and realized population averages is feasible and incentive compatible. Those agents have no direct payoff from misreporting first-order beliefs. Given truthful second-order reports, the state can be recovered; a proper scoring rule against this recovered state then incentivizes truthful first-order reports for all other agents. Finally, for agents who report both objects, there is no profitable ``compound'' deviation: the second-order incentive is already truthful, and a first-order misreport can only jeopardize state recovery and thus expected payment.

\section{Related Literature and Conclusions}\label{sec:rellit}
In what follows, we discuss the related literature and then conclude by summarizing our contributions.

The literature on the aggregation of information in a population is perhaps far too vast to cite comprehensively. Multiple strands attempt to understand, in a variety of settings, when and how existing institutions aggregate dispersed information in strategic settings. These include studies on information aggregation in voting, a vast literature with its roots in Condorcet, but more recently studied formally starting from~\cite{feddersen1997voting}; aggregation in auctions (see, e.g.,~\cite{milgrom1982theory} and subsequent literature); and information aggregation in markets (see, e.g.,~\cite{grossman1980impossibility}).~\cite{ostrovsky2009information} proposes a market scoring rule mechanism that achieves information aggregation for some classes of securities in a dynamic market, but the payment rule there requires the state to be ex-post observable.

We limit attention here explicitly to work that studies the design/possibility of aggregation procedures/algorithms, or institutions that solely exist to aggregate information.

A key idea in the former space (indeed, in a sense an idea that is also at the heart of~\cite{prelec2017solution}) is the Bayesian Truth Serum~\citep{prelec2004bayesian}, a procedure to truthfully elicit subjective information from agents by rewarding them as a function of others' reports. This paper pioneered the idea that agents' higher-order beliefs could be used to incentivize them successfully even if the planner/designer did not know the prior among agents. A literature studying this followed.~\cite{prelec2006algorithm} showed how to modify agents' reports to aggregate them better (akin to the prediction-normalized voting rule we describe in Example~\ref{ex} in Appendix~\ref{app:counterex}). Several extensions to the BTS have been proposed---see, e.g.,~\cite{cvitanic2019honesty},~\cite{witkowski2012robust},~\cite{radanovic2013robust}, and~\cite{radanovic2014incentives}. The latter three papers attempt to make the basic Bayesian Truth Serum robust, i.e., to ensure that it has good properties even in finite populations and to ensure that participation is individually rational, etc. More recent work has also expanded the scope of the question to continuous spaces, but with more structure---see, e.g.,~\cite{kong2020information}.

These ideas have also been successfully employed in real-world applications: see, e.g.,~\cite{shaw2011designing} for an application on Mechanical Turk, or~\cite{rigol2017targeting}, who use the Robust BTS of~\cite{witkowski2012robust} to identify high-ability micro-entrepreneurs by surveying peers. There is also a growing literature attempting to elicit and use higher-order beliefs to improve aggregation---see, e.g., recent works such as~\cite{palley2019extracting} and~\cite{palley2020boosting}. Some of these works consider modifications of the SP algorithm that improve performance in experiments, e.g.,~\cite{doi:10.1287/mnsc.2020.3919} or~\cite{10.1371/journal.pone.0232058}. For example,~\cite{doi:10.1287/mnsc.2020.3919} show that, in contrast to SP, SC weights forecasters with more informative private signals more than forecasters with less informative ones. This explains why SC is favorable relative to SP when both guess/vote and posterior data are available. Similarly,~\cite{doi:10.1287/mnsc.2023.4955} use Bayesian hierarchical methods to improve performance relative to SP.

On the latter, there is an important literature that has attempted to understand the performance of prediction markets: see, e.g.,~\cite{wolfers2004prediction} for an early piece summarizing the issues and~\cite{baillon2017bayesian} for an alternate design.~\cite{wolfers2006prediction} study when the prediction market price can be interpreted as the average of traders' beliefs, while~\cite{ottaviani2015price} study when the market price under-reacts to new information.~\cite{dai2020wisdom} show how to infer the state of the world from prediction-market trading data using these theoretical ideas. There is also a large literature on pari-mutuel markets, and on adapting them to aggregate information---see, e.g.,~\cite{pennock2004dynamic}.

As discussed in the Introduction, our approach has some similarities with Theorem~1.4 of~\cite{prelec2017solution}. This approach requires elicitation of $p(s_i\mid s_j)$, i.e., the conditional probability that a different agent has seen signal $s_i$ given that you have seen $s_j$. In particular, this requires that the subjects understand the meaning of signals (H and T in their example). In contrast, PMBA questions are all in terms of primitives (beliefs and higher-order beliefs) of an agent. Our position is that this distinction is important because if the subjects really understand the model/prior, including the meaning of signals, the easiest question is to simply ask them for the state-contingent population signal distribution (i.e., the bias of the various possible coins in the terminology of PSM).

A different approach to ours for aggregating information with more than two states can be found in~\cite{libgober2021hypothetical}. The principal does not know the agent's prior, but can elicit, after an unknown Blackwell experiment, their posterior and their contingent hypothetical beliefs (their posteriors contingent on seeing any signal). He shows that the prior and experiment can be recovered from these contingent hypothetical beliefs, via a linear-regression method, even if some signals are not sampled (i.e., $\mu$ need not be a square matrix). In contrast,~\cite{prelec2022general}, following~\cite{samet1998iterated}, propose identifying contingent hypothetical beliefs with a transition matrix on signals. An invariant distribution of the matrix corresponds to the ex-ante population signal distribution and thereby delivers the unknown information structure. At a technical level, our ``expected population average beliefs'' serve a similar role to the ``contingent hypothetical beliefs'' in these papers. However, the contingent hypothetical beliefs include information about higher-order beliefs well beyond the expected population mean; moreover, both~\cite{libgober2021hypothetical} and~\cite{prelec2022general} require that the agents share a common prior and thereby rule out misspecified information settings (Procedure~\ref{proc:lipmba}).


\bibliographystyle{econometrica}
\bibliography{tci}

\appendix

\section{Proofs from Section~\ref{sec:pmba}}\label{app:proofofThm1}

First we prove a general property:
\begin{lemma}\label{lemma:beliefsconcentrate}
Suppose Part (2) of Assumption~\ref{ass:1} is satisfied. Then for any state $\omega$,
\[
\lim_{n \rightarrow \infty} \frac1n \sum_{i=1}^n \tmu_i \to_P \bar{\mu}(\omega).
\]
\end{lemma}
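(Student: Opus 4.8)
The plan is to read the claim as a conditional weak law of large numbers: fix a state $\state$, and show that $\frac1n\sum_{i=1}^n\tmu_i$ concentrates in probability around its conditional mean $\frac1n\sum_{i=1}^n\bmu_i(\state)$, which by the definition of $\bmu(\state)$ converges to $\bmu(\state)$. Since only convergence in probability is asserted, a second-moment (Chebyshev) argument suffices, and the entire proof reduces to showing that the conditional variance of the average vanishes. Throughout I treat $\tmu_i$ as the scalar belief in $[0,1]$ (the binary case); for general $L$ the identical argument is applied coordinate by coordinate.

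First I would write
$$\mathrm{Var}\!\left(\tfrac1n\sum_{i=1}^n\tmu_i\,\Big|\,\state\right)=\frac1{n^2}\sum_{i=1}^n\sum_{j=1}^n\mathrm{Cov}(\tmu_i,\tmu_j\mid\state),$$
and split the double sum into three groups: the $n$ diagonal terms, the ``nearly dependent'' off-diagonal pairs, and the ``$\epsilon$-independent'' off-diagonal pairs. Because $\tmu_i\in[0,1]$, each diagonal covariance is at most $\tfrac14$, contributing at most $\tfrac1{4n}$. For each fixed agent $i$, Part (2) of Assumption \ref{ass:1} guarantees that at most $n(\epsilon)$ agents $j$ fail the $\epsilon$-independence bound; bounding each such covariance by $\tfrac14$ gives a contribution of at most $\tfrac{n(\epsilon)}{4n}$.

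The crux is the third group, where I must convert the rectangle bound $|P(\tsiw\in E,\tilde{s}_{j,\state}\in E'\mid\state)-P^S_{i,\state}(E)P^S_{j,\state}(E')|\le\epsilon$ into a covariance bound for the bounded functions $\tmu_i,\tmu_j$ of the respective signals. The clean tool is the layer-cake representation: writing $\tmu_i=\int_0^1\mathbbm{1}\{\tmu_i>t\}\,dt$ and similarly for $\tmu_j$, Fubini gives
$$\mathrm{Cov}(\tmu_i,\tmu_j\mid\state)=\int_0^1\!\!\int_0^1\big[P(\tmu_i>t,\tmu_j>u\mid\state)-P(\tmu_i>t\mid\state)\,P(\tmu_j>u\mid\state)\big]\,dt\,du.$$
Since each super-level set $\{\tmu_i>t\}$ is an event determined by the signal of $i$ (and $\{\tmu_j>u\}$ by that of $j$), the bracketed integrand is exactly a difference of the form controlled by the rectangle bound, hence at most $\epsilon$ in absolute value; integrating over $(t,u)\in[0,1]^2$ yields $|\mathrm{Cov}(\tmu_i,\tmu_j\mid\state)|\le\epsilon$. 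As there are at most $n^2$ such pairs, this group contributes at most $\epsilon$.

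Combining the three bounds gives $\mathrm{Var}(\tfrac1n\sum_{i=1}^n\tmu_i\mid\state)\le\tfrac1{4n}+\tfrac{n(\epsilon)}{4n}+\epsilon$. Holding $\epsilon$ fixed and sending $n\to\infty$ kills the first two terms, so $\limsup_n\mathrm{Var}\le\epsilon$; as $\epsilon>0$ is arbitrary, the variance tends to $0$. Chebyshev's inequality then yields $\frac1n\sum_{i=1}^n\tmu_i-\frac1n\sum_{i=1}^n\bmu_i(\state)\to_P 0$, and combining this with $\frac1n\sum_{i=1}^n\bmu_i(\state)\to\bmu(\state)$ gives the claim. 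The main obstacle is precisely the covariance bound in the third group — turning a uniform bound on \emph{rectangles} into a bound on the correlation of arbitrary bounded functions of the signals — which the layer-cake decomposition resolves cleanly, without any need to control the full total-variation distance between the joint conditional law and the product of its marginals.
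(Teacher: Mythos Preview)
Your argument is correct and follows the same two-step skeleton as the paper---(i) convert the rectangle bound on signals into a bound $|\mathrm{Cov}(\tmu_i,\tmu_j\mid\state)|\le\epsilon$ for $\epsilon$-independent pairs, then (ii) apply a weak law of large numbers---but the implementation differs in two places. For (i), the paper discretizes $[0,1]$ into $n$ bins, forms upper Riemann-type sums $U_i^n, U_j^n, U_{ij}^n$, and argues $|U_{ij}^n-U_i^nU_j^n|\le\epsilon$ before letting $n\to\infty$; your layer-cake identity is the continuous version of the same idea and arguably cleaner, since it directly expresses the covariance as a double integral of rectangle discrepancies, each already bounded by $\epsilon$. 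For (ii), the paper invokes Bernstein's WLLN after asserting that covariances vanish as $|i-j|\to\infty$. Your direct Chebyshev decomposition---diagonal, at most $n(\epsilon)$ ``bad'' partners per $i$, and the remaining $\epsilon$-bounded pairs---is more elementary and also tracks the assumption more faithfully: Part~(2) of Assumption~\ref{ass:1} only bounds the \emph{size} of the exceptional set $N_i$, not its location in the index order, so the ``$|i-j|\to\infty$'' formulation used in the paper is not literally what the hypothesis guarantees, whereas your counting argument uses exactly the cardinality bound $|N_i|\le n(\epsilon)$. Both approaches yield the same conclusion; yours is slightly more self-contained.
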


\begin{proof}
Fix a realized state $\omega$ and coordinate $\omega' \in \Omega$. Define
\[
A_n:=\frac1n\sum_{i=1}^n \tmu_{i,\omega'},
\qquad
m_n:=\frac1n\sum_{i=1}^n \mu_{i,\omega'}(\omega).
\]
By definition of $\mu_{i,\omega'}(\omega)=\mathbb E[\tmu_{i,\omega'}\mid \omega]$, we have
$\mathbb E[A_n\mid \omega]=m_n$. It is therefore enough to show
$A_n-m_n\to_P 0$.

We have:
\begin{align*}
\text{Var}(A_n\mid \omega) &= \frac{1}{n^2} \sum_{i=1}^n \sum_{j=1}^n \text{Cov}(\tmu_{i,\omega'}, \tmu_{j,\omega'}|\omega).
\end{align*}

By Assumption~\ref{ass:1}(2), for each agent $i$ and any $\epsilon>0$, there exists a set $N_i \subseteq N$ with $|N_i| \leq n(\epsilon)$ such that for all $j \notin N_i$, the signals of agents $i$ and $j$ are $\epsilon$-independent conditional on $\omega$. Since beliefs are bounded measurable functions of signals, this event-wise bound implies (by approximation of bounded functions by simple functions) that for $j \notin N_i$,
\[
|\text{Cov}(\tmu_{i,\omega'}, \tmu_{j,\omega'}\mid\omega)| \leq C\epsilon
\]
for a universal constant $C$.

Split the double sum into pairs where $j \in N_i$ (correlated) and $j \notin N_i$ ($\epsilon$-independent):
\begin{align*}
\text{Var}(A_n\mid \omega) &\leq \frac{1}{n^2} \left[ \sum_{i=1}^n \sum_{j \in N_i} |\text{Cov}(\tmu_{i,\omega'}, \tmu_{j,\omega'}|\omega)| + \sum_{i=1}^n \sum_{j \notin N_i} |\text{Cov}(\tmu_{i,\omega'}, \tmu_{j,\omega'}|\omega)| \right]\\
&\leq \frac{1}{n^2} \left[ n \cdot n(\epsilon) \cdot 1 + n^2 \cdot C\epsilon \right]\\
&= \frac{n(\epsilon)}{n} + C\epsilon.
\end{align*}

As $n \to \infty$, the first term $\frac{n(\epsilon)}{n} \to 0$ since $n(\epsilon)$ is finite. Since $\epsilon$ is arbitrary, this implies $\text{Var}(A_n\mid \omega)\to 0$. By Chebyshev's inequality,
\[
A_n-m_n \to_P 0.
\]
By definition of $\bar\mu_{\omega'}(\omega)$ (Section~\ref{sec:pmba}), $m_n\to \bar\mu_{\omega'}(\omega)$, hence
\[
\frac1n\sum_{i=1}^n \tmu_{i,\omega'} \to_P \bar\mu_{\omega'}(\omega).
\]
Since $|\Omega|<\infty$, the result extends componentwise to the full vector:
\[
\frac1n\sum_{i=1}^n \tmu_i \to_P \bar\mu(\omega).
\]
\end{proof}

\paragraph{Claim~\ref{cl:fullrank} (restated).}
Suppose that signals are conditionally i.i.d.\ draws from a finite set $S$, and Assumption~\ref{ass:simple} is satisfied. If each agent gets $L-1$ independent draws from the distribution $P_{\omega }^{S}$, then Assumption~\ref{ass:fullrank} is satisfied with the signal space $S^{L-1}$.

\begin{proof}
At each $\omega$, the distribution of the $L-1$ signals can be summarized by
\[
\big(P_{\omega}^{S}\big)^{\otimes (L-1)}.
\]
Since Assumption~\ref{ass:simple} is satisfied, the dimension of the linear space spanned by $\left\{
P_{\omega }^{S}\right\} _{\omega \in \Omega }$ is at least $2$. Hence, by
Lemma 4 in Fu et al.~\citeyearpar{fu2021full}, $\left\{(P_{\omega }^{S})^{\otimes (L-1)}\right\}_{\omega \in \Omega }$ are
linearly independent and hence $\left\{ (P_{\omega}^{S})^{\otimes (L-1)}P\left( \omega \right) \right\}_{\omega \in \Omega }$
are also linearly independent ($P\left( \omega \right) $ denotes the prior
probability of $\omega $). Identify $\left\{ (P_{\omega}^{S})^{\otimes (L-1)}P\left( \omega \right) \right\} _{\omega \in \Omega }$
with a $\left\vert S\right\vert ^{L-1}\times L$ matrix which we denote by $M$%
. Hence, rank$\left( M\right) =L$. For $\mathbf{s}=\left(
s^{1},\ldots,s^{L-1}\right) $ and $\omega \in \Omega $, observe that $M_{%
\mathbf{s},\omega}=P\left( \mathbf{s},\omega \right) $. Since rank$\left(
M\right) =L$, the conditional probability matrix $\left\{ P\left( \omega |%
\mathbf{s}\right) \right\}$ also has rank $L$. Hence, Assumption~\ref{ass:fullrank} is satisfied with
the signal space $S^{L-1}$.
\end{proof}

\subsection{Proof of Theorem~\ref{thm:pmba}}

\begin{proof}
The proof follows from the structure of the PMBA procedure and the established lemmas. We show that each step of Procedure~\ref{proc:pmba2} works as intended under the given assumptions.

\textbf{Step 1:} Under Assumption~\ref{ass:1}, the law of large numbers ensures that $\hat{\mu} = \frac1n \sum_{i=1}^n \mu_i$ converges in probability to $\bar{\mu}(\omega)$ as $n \to \infty$.
By our assumption, the limit $\bar{\mu}(\omega)$ exists almost surely (see Footnote~6 on page~5). Therefore $\hat{\mu}$ converges to $\bar{\mu}(\omega)$ almost surely.

\textbf{Step 2:} By Assumption~\ref{ass:fullrank}, we can select $L$ agents whose belief vectors form a full-rank matrix $\bm{\mu}$. This is guaranteed to exist since the assumption requires that the convex hull of the support of beliefs has an interior relative to $\Delta^L$.

\textbf{Step 3:} For each selected agent $i$, define the finite-$n$ second-order object
\[
\alpha_i^{(n)}:= \mathbb{E}\!\left[\frac1n \sum_{j=1}^n \tmu_j \,\bigg|\, s_i\right].
\]
By the law of iterated expectations, for each fixed $n$:
\begin{align*}
\alpha_i^{(n)} &= \mathbb{E}\left[\frac1n \sum_{j=1}^n \tmu_j \bigg| s_i\right] \\
&= \sum_{\omega' \in \Omega} \mathbb{E}\left[\frac1n \sum_{j=1}^n \tmu_j \bigg| s_i, \omega'\right] P(\omega' \mid s_i).
\end{align*}
For a fixed $\omega'$, one can decompose the inner average into agents in the finite dependence neighborhood of $i$ and all remaining agents; the first part is $O(1/n)$, and the second part converges (by Lemma~\ref{lemma:beliefsconcentrate}) to $\bar\mu(\omega')$. Hence, conditional on each $\omega'$, the inner expectation converges to $\bar\mu(\omega')$, and therefore
\[
\alpha_i^{(n)} \to \alpha_i^\ast:=\sum_{\omega' \in \Omega}\bar{\mu}(\omega')\,\mu_{i,\omega'}
\]
in probability as $n\to\infty$.

\textbf{Step 4:} In the limit, the system of equations $\bm{\alpha}^{\ast} = \bm{\mu} \bar{\bm{\mu}}$ has a unique solution $\bar{\bm{\mu}} = \bm{\mu}^{-1} \bm{\alpha}^{\ast}$ because $\bm{\mu}$ is invertible by Step 2. For large finite $n$, the elicited $\bm{\alpha}^{(n)}$ is close to $\bm{\alpha}^{\ast}$.

\textbf{State Identification:} By Assumption~\ref{ass:simple}, distinct states lead to different population average beliefs, so $\bar{\mu}(\omega) \neq \bar{\mu}(\omega')$ for any $\omega \neq \omega'$. Since $\hat{\mu}$ converges to $\bar{\mu}(\omega)$ in probability, the true state $\omega$ is identified as the row of $\bar{\bm{\mu}}$ that minimizes the distance to $\hat{\mu}$.

The procedure recovers the true state $\omega$ in probability because all the convergence results hold in probability.
\end{proof}
 
\subsection{Proof of Lemma~\ref{lemma:Genericity}}
\begin{proof} We first establish Lemma~\ref{lemma:Genericity}(1):
\noindent The probability mass of $q(s)$ is equal to
\begin{equation*}
    \Pr[q(s)|\omega]=p_{s\omega},
\end{equation*}
and the conditional probability of $q(s)\in b$ in state $\omega$ conditional on partition cell $b$ is equal to
\begin{equation*}
    \frac{p_{s\omega}}{\displaystyle \sum_{s':q(s')\in b}p_{s'\omega}}
\end{equation*}
This implies that, in state $\omega$, the conditional probability-weighted average of posterior beliefs in partition cell $b \in B$ is equal to
\begin{equation*}
    \sum_{s:q(s) \in b}\frac{p_{s\omega}}{\displaystyle \sum_{s':q(s')\in b}p_{s'\omega}}\left( \frac{p_{s\omega_1}}{\sum_{j=1}^Lp_{s\omega_j}},\ldots, \frac{p_{s\omega_L}}{\sum_{j=1}^Lp_{s\omega_j}}\right)
\end{equation*}
Fix a coordinate index $j\in\{1,\ldots,L\}$ and write $p_s:=\sum_{\ell=1}^L p_{s\omega_\ell}$ for the unconditional probability of signal $s$. Then the $j$-th coordinate of the previous vector expression is equivalent to
\begin{equation}\label{eq:I}
  \left(
    \frac{1}{\displaystyle \sum_{s'\;:\; q(s')\in b} p_{s'\omega}}
  \right)
  \left(
    \frac{1}{\displaystyle \prod_{s\;:\; q(s)\in b} p_s}
  \right)
  \sum_{s\;:\; q(s)\in b}
  p_{s\omega}\, p_{sj}\!
  \prod_{\hat{s}\neq s} p_{\hat{s}}.
  \tag{I}
\end{equation}
\noindent
For any realized state $\omega\in\Omega$ and information structure $p$, we can now construct an
$L\times L$ matrix that corresponds to a given partition $B$, where row $i$ corresponds to the
conditional probability-weighted average of posterior beliefs in partition cell $i$. Denote this matrix as
$\tilde q^{\,\omega,B}(p)$. Equation~\eqref{eq:I} corresponds to the entry
$\tilde q^{\,\omega,B}_{ij}(p)$ of the matrix, and note that all entries in row $i$ have the same
multiplying factor
\[
  \left(
    \frac{1}{\displaystyle \sum_{s\;:\; q(s)\in b} p_{s\omega}}
  \right)
  \left(\frac{1}{\displaystyle \prod_{s\;:\; q(s)\in b} p_s}
  \right).
\]
The matrix $\tilde q^{\,\omega,B}(p)$ has full rank if its determinant is not equal to zero.
Consider the $L\times L$ matrix $\check q^{\,\omega B}(p)$ where each entry $\check q^{\,\omega B}_{ij}(p)$ is constructed as follows
\[
  \check q^{\,\omega B}_{ij}(p)
  \;=\;
  \tilde q^{\,\omega B}_{ij}(p)\,
  \left(\sum_{s \,:\, q(s)\in b} p_{s\omega}\right)
  \left(\prod_{s \,:\, q(s)\in b} p_s\right).
\]
\noindent
The matrix $\check q^{\,\omega B}(p)$ has a determinant unequal to zero if and only if the matrix
$\tilde q^{\,\omega B}(p)$ also does. Note that each entry of the matrix $\check q^{\,\omega B}(p)$
is a polynomial function of the joint probability measure $p$. As the determinant of a matrix is
also a polynomial of its entries, the determinant of $\check q^{\,\omega B}(p)$ is a polynomial
function. The set of joint probability measures in $\Delta^{K\times L}$ for which the matrix
$\check q^{\,\omega B}(p)$ does not have full rank is given by the zero set $Z^{\omega B}$ of its
determinant:
\[
  Z^{\omega B} \;=\; \{\, p \in \Delta^{KL} : \det\nolimits^{\,\omega B}\!\big(\check q^{\,\omega B}(p)\big)=0 \,\}.
\]
\noindent
In order to establish the claim of the lemma, we need to show that the set of information
structures $p$, for which no partition induces a matrix of conditional expected means with full
rank, has Lebesgue measure zero. In other words, we need to establish that the zero set of the
following function has Lebesgue measure zero in $\Delta^{KL}$:
\[
  f^{\omega}(p)
  \;=\;
  \sum_{B \in \mathcal{B}}
  \Big(\det\nolimits^{\,\omega B}\!\big(\check q^{\,\omega B}(p)\big)\Big)^{2}.
\]
\noindent
Note that $f^{\omega} : \Delta^{KL} \to \mathbb{R}$ is a non-trivial polynomial function (for example, at information structures where signals are highly state revealing, one can choose a partition with nonzero determinant, so $f^\omega(p)>0$).
It then follows from Lemma~2 in~\cite{MMFJET2014} that its zero set has
$\lambda_{\Delta^{KL}}$ measure zero. Since the state space $\Omega$ is finite and the above holds for each $\omega \in \Omega$, it follows that for generic information structures $p$ one can always construct a partition of the posterior beliefs $Q^p$ such that the matrix of probability-weighted average beliefs for each partition cell has full rank.
\end{proof}
\begin{proof} We now establish Lemma~\ref{lemma:Genericity}(2):
\noindent We have that
\[
\bar{\mu}^{\,p}_j(\omega') \;=\; \sum_{h=1}^{K} \frac{p_{h\omega'}\, p_{hj}}{p_{\omega'}\, p_{s_h}}.
\]
For two states $\omega',\omega''$ the mean beliefs of state $j$ coincide if
\[
\sum_{h=1,\ldots,K} \frac{p_{h\omega'}\,p_{hj}}{p_{\omega'}\,p_{s_h}}
\;-\;
\sum_{h=1,\ldots,K} \frac{p_{h\omega''}\,p_{hj}}{p_{\omega''}\,p_{s_h}}
=0
\]
\[
\Longleftrightarrow\qquad
\sum_{h=1,\ldots,K} \Big(\frac{p_{h\omega'}}{p_{\omega'}}-\frac{p_{h\omega''}}{p_{\omega''}}\Big)\,
\frac{p_{hj}}{p_{s_h}} = 0
\]
\[
\Longleftrightarrow\qquad
\frac{1}{p_{\omega'}p_{\omega''}}
\sum_{h=1,\ldots,K}
\big(p_{h\omega'}p_{\omega''}-p_{h\omega''}p_{\omega'}\big)\,
\frac{p_{hj}}{p_{s_h}} = 0
\]
\[
\Longleftrightarrow\qquad
\frac{1}{p_{\omega'}p_{\omega''}} \frac{1}{\displaystyle \prod_{h=1,\ldots,K} p_{s_h}}
\\
\sum_{h=1,\ldots,K}
\big(p_{h\omega'}p_{\omega''}-p_{h\omega''}p_{\omega'}\big)\,
p_{hj}\!
\left(\prod_{g\ne h} p_{s_g}\right)=0.
\]

This expression is equivalent to
\[
f^{\omega'\omega'' j}(p)
\;=\;
\sum_{h=1,\ldots,K}
\big(p_{h\omega'}p_{\omega''}-p_{h\omega''}p_{\omega'}\big)\,p_{hj}\!
\left(\prod_{g\ne h} p_{s_g}\right)=0.
\]

In order to show that for $\lambda_{\Delta^{KL}}$-almost every information structure the mean belief of state $j$
is different in states $\omega'$ and $\omega''$, it is sufficient to show that the zero set of
$f^{\omega'\omega'' j}:\Delta^{KL}\to\mathbb{R}$ has
$\lambda_{\Delta^{K\times L}}$ measure zero. Note that $f^{\omega'\omega'' j}$ is a nontrivial
polynomial function (e.g., if state-conditioned signal distributions differ on at least one signal, the expression is nonzero for some $j$). Again invoking Lemma~2 from~\cite{MMFJET2014}, it follows that the
zero set of $f^{\omega'\omega'' j}$ has $\lambda_{\Delta^{KL}}$ measure zero. Let
$Z^{\omega'\omega'' j}$ denote the zero set of $f^{\omega'\omega'' j}$. Since each set
$Z^{\omega'\omega'' j}$ has Lebesgue measure zero and since there are only finitely many states,
it follows that
\[
\bigcup_{j=1,\ldots,L}\ \ \bigcup_{(\omega',\omega'')\in\Omega\times\Omega}
Z^{\omega'\omega'' j}
\]
has Lebesgue measure zero, concluding the proof.
\end{proof}

\subsection{Proofs of Section~\ref{sec:misspecified}}
See below for the proof of Lemma~\ref{lem:misspec2}.
\begin{proof}
Let $x_{iN_k}$ denote the indicator that agent $i$ belongs to group $N_k$. Write
\[
\hat{\mu}_{N_k}^n=\frac{\sum_{i\le n}x_{iN_k}\mu_i}{\sum_{i\le n}x_{iN_k}}
=\frac{K_n^k}{L_n^k},
\quad
L_n^k:=\frac1n\sum_{i\le n}x_{iN_k},
\quad
K_n^k:=\frac1n\sum_{i\le n}x_{iN_k}\mu_i.
\]
Under Assumption~\ref{ass:1Mis}, the sequences $\{x_{iN_k}\}_i$ and $\{x_{iN_k}\mu_i\}_i$ are conditionally i.i.d.\ (given the realized state), so by the strong law,
\[
L_n^k \xrightarrow{a.s.} L^k:=\mathbb{E}[x_{1N_k}],
\qquad
K_n^k \xrightarrow{a.s.} K^k:=\mathbb{E}[x_{1N_k}\mu_1].
\]
By construction we restrict attention to partitions with positive asymptotic group mass, so $L^k>0$. Hence, by the Continuous Mapping Theorem,
\[
\hat{\mu}_{N_k}^n=\frac{K_n^k}{L_n^k}\xrightarrow{a.s.}\frac{K^k}{L^k}
=\mathbb{E}\!\left[\mu_1\mid 1\in N_k\right].
\]
This limit is deterministic conditional on the state, proving the claim.
\end{proof}
See below for the proof of Lemma~\ref{lem:misspec}.
\begin{proof}
    Since private signals are conditional i.i.d.\ by Assumption~\ref{ass:1Mis}, a strong law of large numbers holds for the population mean beliefs. It follows that the second-order report $\alpha_i$ of each agent $i\in N$ is given by
\begin{equation*}
    \alpha_i=\sum_{\omega \in \Omega}\mu_i(\omega) \alpha _{i}^{\omega } =\sum_{\omega \in \Omega}\mu_i(\omega) \left(\bar{\mu}_{\omega_1}(\omega )+\zeta _{i}^{\omega } \right).
\end{equation*} 
For each group $N_k$, combining~\eqref{eq1:2ndordermisspec} with the display above yields
 \begin{equation*}
        \hat{\alpha}_{N_k}^n=\frac{1}{\displaystyle\sum_{i\leq n}x_{iN_k}}\sum_{i\leq n}x_{iN_k} \left( \sum_{\omega \in \Omega}\mu_i(\omega) \left(\bar{\mu}_{\omega_1}(\omega )+\zeta_{i}^{\omega } \right)\right)
    \end{equation*}
    which is equivalent to
\begin{equation*}
        \hat{\alpha}_{N_k}^n=\left(\sum_{\omega \in \Omega}\bar{\mu}_{\omega_1}(\omega )\left(\frac{1}{\displaystyle\sum_{i \leq n}x_{iN_k}}\sum_{i \leq  n}x_{iN_k}\mu_i(\omega) \right)\right)+\sum_{\omega \in \Omega}\left(\frac{1}{\displaystyle\sum_{i\leq n}x_{iN_k}}\sum_{i \leq n}x_{iN_k}\mu_i(\omega)\zeta_{i}^{\omega } \right)
    \end{equation*}
We proceed in two steps.
\begin{enumerate}
    \item The first summand is $\displaystyle\sum_{\omega \in \Omega}\hat{\mu}_{N_k}^n(\omega)\,\bar{\mu}_{\omega_1}(\omega )$. By Lemma~\ref{lem:misspec2}, $\hat{\mu}_{N_k}^n(\omega)$ converges almost surely to a deterministic limit for each $\omega$, so this summand converges almost surely to $\sum_{\omega\in\Omega}\mathbb{E}[\hat{\mu}_{N_k}(\omega)]\,\bar{\mu}_{\omega_1}(\omega)$.
    \item Consider the second summand
    \begin{equation*}
        \left(\frac{1}{\displaystyle\sum_{i\leq n}x_{iN_k}}\sum_{\omega \in \Omega}\sum_{i\leq n}x_{iN_k}\mu_i(\omega)\zeta_{i}^{\omega } \right)
    \end{equation*}
    which is equivalent to
    \begin{equation*}
        \left(\frac{1}{\frac1n\displaystyle\sum_{i\leq n}x_{iN_k}}\sum_{\omega \in \Omega}\frac1n \sum_{i\leq n}x_{iN_k}\mu_i(\omega)\zeta_{i}^{\omega } \right)
    \end{equation*}
    For each fixed state index $\bar\omega\in\Omega$, define
    \[
    L_n^k:=\frac1n\sum_{i\le n}x_{iN_k},
    \qquad
    K_{n}^{k,\bar\omega}:=\frac1n\sum_{i\le n}x_{iN_k}\mu_i(\bar\omega)\zeta_i^{\bar\omega}.
    \]
    By Assumptions~\ref{ass:misspecified} and~\ref{ass:1Mis}, conditional on the realized state, both sequences satisfy a strong law:
    \[
    L_n^k \xrightarrow{a.s.} L^k:=\mathbb E[x_{1N_k}]>0,
    \qquad
    K_n^{k,\bar\omega}\xrightarrow{a.s.}K^{k,\bar\omega}:=\mathbb E[x_{1N_k}\mu_1(\bar\omega)\zeta_1^{\bar\omega}].
    \]
    Using the conditional independence of $\zeta_i^{\bar\omega}$ and $\mu_i$ in Assumption~\ref{ass:misspecified},
    \[
    K^{k,\bar\omega}
    =\mathbb E[\zeta^{\bar\omega}]\,\mathbb E[x_{1N_k}\mu_1(\bar\omega)].
    \]
    Hence, by continuous mapping,
    \[
    \frac{K_n^{k,\bar\omega}}{L_n^k}
    \xrightarrow{a.s.}
    \mathbb E[\zeta^{\bar\omega}]
    \frac{\mathbb E[x_{1N_k}\mu_1(\bar\omega)]}{\mathbb E[x_{1N_k}]}
    =
    \mathbb E[\zeta^{\bar\omega}]\,\mathbb E[\hat\mu_{N_k}(\bar\omega)],
    \]
    where the last equality uses Lemma~\ref{lem:misspec2}. Summing over $\bar\omega\in\Omega$ gives
    \[
    \sum_{\omega\in\Omega}\left(\frac{1}{\sum_{i\le n}x_{iN_k}}\sum_{i\le n}x_{iN_k}\mu_i(\omega)\zeta_i^\omega\right)
    \xrightarrow{a.s.}
    \sum_{\omega\in\Omega}\mathbb E[\hat\mu_{N_k}(\omega)]\,\mathbb E[\zeta^\omega],
    \]
    concluding the proof.
    \end{enumerate} 
    \end{proof}

\newpage

\begin{center}
\vspace*{3cm}
{\LARGE\bfseries SUPPLEMENTARY ONLINE APPENDIX}
\vspace*{2cm}
\end{center}

\section{Bayesian Aggregation with Fixed Finite Populations}\label{sec:finite}
So far, we have discussed the possibility of aggregating information with an infinite or arbitrarily large population. In practice, of course, populations are finite. The procedures given above will correctly identify the true state with high probability in large populations. For instance, in our baseline model, even with a large but finite population, the average belief in the population will concentrate around the expected belief conditional on the true state. As long as the expected beliefs conditional on state are sufficiently different across the states, an appropriately modified procedure will recover the true state of the world with high probability.

Nevertheless, at a theoretical level, one may wonder whether, with a finite population, there is any value to eliciting higher-order beliefs, ignoring the difficulties of such elicitation in practice. To make this potentially valuable, we need a more exacting benchmark, since, as we argued above, PMBA will already aggregate information with appropriately high probability. The benchmark we use therefore is one of a ``full information posterior'', i.e., can we, without knowledge of the underlying information structure $P$, and eliciting solely agents' beliefs (and higher-order beliefs), nevertheless reach the same posterior beliefs as an omniscient agent who knew $P$ and directly observed all the agents' signals? Our previous results answered this in the positive and showed that for the case of an infinite population (under the maintained assumptions), the output of PMBA could identify the degenerate belief on the true state.

In this section, we answer these questions primarily in the negative. First we show that with a finite population, and with elicitation of the entire hierarchy of beliefs, an agnostic procedure can learn the prior and signals of each agent. This result is essentially a straightforward corollary of the results of~\cite{mertens2015repeated}. However, elicitation of the entire hierarchy is obviously impractical in real-world applications. 

By contrast, we show that, for elicitation up to any finite-order of beliefs, there is an identification problem: there exist information structures where the exact same finite hierarchy can be realized among agents in both states.

Consider a finite set of agents $\left\{ 1,2,\ldots,N\right\} $ with a common
prior $P$ defined on a finite set $\Omega \times S$ (where $S=\times
_{i=1}^{N}S_{i}$ is the set of their signal profiles). The aggregator does not
know $P$ but can ask the agents to report their higher-order beliefs. In this section, we show that the aggregator can effectively elicit the full information posterior, provided that they can ask the agents to report their entire hierarchy of beliefs and that each hierarchy of beliefs uniquely identifies a signal of an agent. 

To this end, we recall the standard formulation of higher-order beliefs by~\cite{mertens1985formulation}. Denote by $s_{i}^{k}$ the $k$th-order belief of agent $i$ over $\Omega$.
For instance, $s_{i}^{1}=$marg$_{\Omega }P \left( \cdot |s_{i}\right) $, 
\[s_{i}^{2}\left( \omega ,s_{-i}^{1}\right) =P \left( \left\{ \left( \omega
^{\prime },s_{-i}^{\prime }\right) :\omega ^{\prime }=\omega \text{ and }%
s_{-i}^{\prime 1}=s_{-i}^{1}\right\} |s_{i}\right) \text{,}
\]%
and so on. Denote by $\tilde{s}_{i}$ the hierarchy of beliefs induced by
signal $s_{i}$, i.e., $\tilde{s}_{i}=\left( s_{i}^{1},s_{i}^{2},\ldots\right) $%
. Let $\tilde{S}_{i}$ be the set of hierarchies of beliefs induced from $%
S_{i}$ and $\tilde{P}$ the distribution induced by $P$ on $\Omega
\times \tilde{S}$. Moreover, by~\cite{mertens1985formulation}, each $\tilde{s}_{i}
$ induces a belief $\tilde{\pi}_{i}\left( \tilde{s}_{i}\right) $ over $%
\Omega \times \tilde{S}_{-i}$, where $\tilde{S}_{-i}=\times_{j\neq i}\tilde{S}_{j}$.

\begin{restatable}{theorem}{thminfinite}\label{thm:infinite}
Let $\tilde{\mu}$ denote the induced probability measure on $\Omega \times \tilde{S}$. Suppose that $\tilde{\mu}$ has finite support and that the agents report $\tilde{s}=(\tilde{s}_i)_i$. Then there exists a procedure that recovers the ``pooled information'' posterior on the states, i.e., $\tilde{P}(\cdot \mid \tilde{s})$.
\end{restatable}

\begin{proof}
Assume that $\text{marg}_{\tilde{S}}\tilde{\mu}\left( \tilde{s}\right) >0$. It follows from Theorem
III.2.7 of~\cite{mertens2015repeated} that the aggregator can derive:
\begin{enumerate}
    \item  The set $E\left( \tilde{s}\right) $
which is the smallest set $Y\subseteq \Omega \times \tilde{S}$ of
state-hierarchy profiles satisfying
\begin{enumerate}
    \item  $\left( \omega ,\tilde{s}\right) \in
Y$ for some $\omega \in \Omega$, and,
    \item  for each $\left( \omega ,\tilde{t}\right) \in
Y $, we have $\left\{ \tilde{t}_{i}\right\} \times $supp $\tilde{\pi}%
_{i}\left( \tilde{t}_{i}\right) \subseteq Y$.
\end{enumerate}
\item The unique consistent
probability $\tilde{\mu}$ on $E\left( \tilde{s}\right) $.
\end{enumerate}
With (1) and (2),
the aggregator can compute $\tilde{\mu}\left( \cdot |\tilde{s}\right) $. We
recap and illustrate both (1) and (2) for the ease of reference.

To construct (1), define $C_{i}^{1}\left( \omega ,\tilde{t}\right) =\left\{ \tilde{t}%
_{i}\right\} \times $supp$\tilde{\pi}_{i}\left( \tilde{t}_{i}\right) $ for
each $\left( \omega ,\tilde{t}\right) \in \Omega \times \tilde{S}$; and
inductively, for every $l\geq 1$ and $\tilde{t}\in \tilde{S}$, define 
\[
C_{i}^{l+1}\left( \omega ,\tilde{t}\right) =C_{i}^{l}\left( \omega ,\tilde{t}%
\right) \bigcup \bigcup\limits_{\left( \omega ^{\prime },\tilde{t}^{\prime
}\right) \in C_{i}^{l}\left( \omega ,\tilde{t}\right)
}\bigcup\limits_{j=1}^{N}C_{j}^{1}\left( \omega ^{\prime },\tilde{t}^{\prime
}\right) \text{.} 
\]%
Then, let%
\[
C_{i}\left( \omega ,\tilde{t}\right) =\bigcup\limits_{l=1}^{\infty
}C_{i}^{l}\left( \omega ,\tilde{t}\right) \text{.} 
\]%
These include $\left( \omega ,\tilde{t}\right) $, the state-hierarchy
profiles which agent $i$ regards as possible (i.e., $C_{i}^{1}\left( \omega ,%
\tilde{t}\right) $) at $\left( \omega ,\tilde{t}\right) $, the
state-hierarchy profiles which some agent regards at some state-hierarchy
profile in $C_{i}^{1}\left( \omega ,\tilde{t}\right) $ (i.e., $%
C_{i}^{2}\left( \omega ,\tilde{t}\right) $), and so on. 

Consider any $\omega ^{\ast }$ such that $\tilde{\mu}\left( \omega ^{\ast },\tilde{s}%
\right) >0$. Note that $%
C_{i}\left( \omega ^{\ast },\tilde{s}\right) $ satisfies properties (a) and
(b) above by construction and hence $C_{i}\left( \omega ^{\ast },\tilde{s}%
\right) \supseteq E\left( \tilde{s}\right) $.\footnote{%
Property (b) is by construction and property (a) follows from the assumption
that $\tilde{\mu}\left( \omega ^{\ast },\tilde{s}\right) >0$.} Also we can
argue that $C_{i}\left( \omega^* ,\tilde{s}\right) \subseteq E\left( \tilde{s}%
\right) $ inductively.

For (2), it follows from Bayes' rule that for every $\left( \omega ^{\prime
},\tilde{t}^{\prime }\right) \in C_{i}^{1}\left( \omega ,\tilde{t}\right) $
and $\tilde{\mu}\left( \omega ,\tilde{t}\right) >0$, we must have $\tilde{t}%
_{i}=\tilde{t}_{i}^{\prime }$ and hence 
\[
\frac{\tilde{\mu}\left( \omega ^{\prime },\tilde{t}^{\prime }\right) }{%
\tilde{\mu}\left( \omega ,\tilde{t}\right) }=\frac{\tilde{\pi}_{i}\left( 
\tilde{t}_{i}^{\prime }\right) \left( \omega ^{\prime },\tilde{t}^{\prime
}\right) \tilde{\mu}_{i}\left( \tilde{t}_{i}^{\prime }\right) }{\tilde{\pi}%
_{i}\left( \tilde{t}_{i}\right) \left( \omega ,\tilde{t}_{-i}\right) \tilde{%
\mu}_{i}\left( \tilde{t}_{i}\right) }=\frac{\tilde{\pi}_{i}\left( \tilde{t}%
_{i}^{\prime }\right) \left( \omega ^{\prime },\tilde{t}^{\prime }\right) }{%
\tilde{\pi}_{i}\left( \tilde{t}_{i}\right) \left( \omega ,\tilde{t}%
_{-i}\right) }>0\text{.} 
\]%
The aggregator knows $\tilde{\pi}_{i}\left( \tilde{t}_{i}\right) $ and hence
can express $\tilde{\mu}\left( \omega ^{\prime },\tilde{t}^{\prime }\right) $
as a multiple of $\tilde{\mu}\left( \omega ,\tilde{t}\right) $. Inductively,
for every $\left( \omega ^{\prime },\tilde{t}^{\prime }\right) \in
C_{i}^{l}\left( \omega ,\tilde{t}\right) $, $\tilde{\mu}\left( \omega
^{\prime },\tilde{t}^{\prime }\right) $ can also be expressed as a multiple
of $\tilde{\mu}\left( \omega ,\tilde{t}\right) $. 

Hence, $\tilde{\mu}$ is uniquely determined on $E\left( \tilde{s}\right)=C_{i}\left( \omega^* ,\tilde{s}\right)$ since $\tilde{\mu}\left( \omega ^{\ast },\tilde{s}\right) >0$. $\tilde{\mu}(\cdot|\tilde{s})$ can thus be calculated.
\end{proof}

One might ask whether such a result can be obtained while only asking agents for their higher-order beliefs up to some finite order (i.e., as opposed to the full infinite hierarchy). It is possible to show that in general, without further assumptions, this is not the case: the result of Theorem~\ref{thm:infinite} cannot be achieved if the aggregator only knows the reported beliefs up to order $m$. Our argument, in the following section, is adapted from the leading example of~\cite{lipman2003finite}.

\subsection{Eliciting Higher-Order Beliefs up to a Finite Order \texorpdfstring{$m$}{m}}\label{sec:lipman}

We adopt his notation here for ease of comparison. Suppose there are two players and two states of nature $\left\{ \omega _{1},\omega _{2}\right\} $ as in our paper. The construction is easier to explain in terms of the standard partitional model of knowledge.

The model below considers $8$ extended states: $\{\left(
\sigma_{l},k\right): l \in \{1,2\} \text{ and } k \in \{1,2,3,4\}\}$. Interpret state $(\sigma_l,k)$ in this model as corresponding to $\omega _{l}$ realized for $l=1,2$
and any $k$. In this example, Lipman considers the hierarchy of belief
induced by common knowledge that player $1$ assigns probability $2/3$ to $%
\omega _{1}$ and player $2$ assigns probability $1/3$ to $\omega _{1}$. Such
a type does not admit a common prior. However, Lipman shows that the model below, which has a uniform common prior, admits a state $\left(
\sigma_{1},1\right) $ where the players have the same belief as that of the common knowledge type described above, up to any finite order $m$.

The prior for the players is the uniform distribution over the extended states, i.e.: 
\[
\begin{tabular}{@{}lcccccccc@{}}
\toprule
& $\left( \sigma_{1},4\right) $ & $\left( \sigma_{1},3\right) $ & $\left(
\sigma_{2},2\right) $ & $\left( \sigma_{1},1\right) $ & $\left( \sigma_{2},1\right) $ & $%
\left( \sigma_{1},2\right) $ & $\left( \sigma_{2},3\right) $ & $\left( \sigma_{2},4\right) $
\\
\midrule
$\mu $ & $\frac{1}{8}$ & $\frac{1}{8}$ & $\frac{1}{8}$ & $\frac{1}{8}$ & $%
\frac{1}{8}$ & $\frac{1}{8}$ & $\frac{1}{8}$ & $\frac{1}{8}$ \\
\bottomrule
\end{tabular}%
\]
Agents' information is identified with the following partitions:
\begin{eqnarray*}
\Pi _{1} &=&\left\{ \left\{ \left( \sigma_{1},4\right) ,\left( \sigma_{1},3\right)
,\left( \sigma_{2},2\right) \right\} \left\{ \left( \sigma_{1},1\right) ,\left(
\sigma_{2},1\right) ,\left( \sigma_{1},2\right) \right\} \text{ }\left\{ \left(
\sigma_{2},3\right) ,\left( \sigma_{2},4\right) \right\} \right\} \\
\Pi _{2} &=&\left\{ \left\{ \left( \sigma_{1},4\right) ,\left( \sigma_{1},3\right)
\right\} \left\{ \left( \sigma_{2},2\right) ,\left( \sigma_{1},1\right) ,\left(
\sigma_{2},1\right) \right\} \left\{ \left( \sigma_{1},2\right) ,\left( \sigma_{2},3\right)
,\left( \sigma_{2},4\right) \right\} \right\} \text{.}
\end{eqnarray*}%

Note that each $\Pi _{i}\left( \sigma',k\right) $ is identified with a signal of
player $i$. Observe that player 1 assigns probability one to $\omega _{2}$
at $\Pi _{1}\left( \sigma_{2},4\right) $ and player $2$ assigns probability one
to $\omega _{1}$ at $\Pi _{2}\left( \sigma_{1},4\right) $. Hence, each partition
cell of each player induces a different second-order belief. In particular, $%
\theta:=\left( \Pi _{1}\left( \sigma_{1},1\right) ,\Pi _{2}\left( \sigma_{1},1\right) \right) $
is the only partition profile at which the second-order beliefs of both players are identical to $t$. Conditional on the reported second-order belief at $\theta$ (denoted $\theta^2$), we have
\[
\mu \left( \omega _{1}|\theta^2 \right) =\frac{1}{2}\text{.} 
\]

Now consider another model modified from the previous one by adding one additional state with the following prior:
\[
\begin{tabular}{@{}lccccccccc@{}}
\toprule
& $\left( \sigma_{1},4\right) ^{\prime }$ & $\left( \sigma_{1},3\right) ^{\prime }$ & $%
\left( \sigma_{2},2\right) ^{\prime }$ & $\left( \sigma_{1},1\right) ^{\prime }$ & $%
\left( \sigma_{1},1\right) $ & $\left( \sigma_{2},1\right) $ & $\left( \sigma_{1},2\right) $
& $\left( \sigma_{2},3\right) $ & $\left( \sigma_{2},4\right) $ \\
\midrule
$\mu ^{\prime }$ & $\frac{1}{20}$ & $\frac{1}{20}$ & $\frac{1}{10}$ & $\frac{%
1}{10}$ & $0$ & $\frac{1}{10}$ & $\frac{1}{5}$ & $\frac{1}{5}$ & $\frac{1}{5}
$ \\
\bottomrule
\end{tabular}%
\]%
where $^{\prime }$ is to indicate that a state is to the left of $\left(
\sigma_{1},1\right) $ and it will be useful in generalizing the idea to eliciting $m$ orders of beliefs, for $m\geq 3$ in what follows. Agents' information is now given by the partitions:
\begin{eqnarray*}
\Pi _{1}^{\prime } &=&\left\{ \left\{ \left( \sigma_{1},4\right) ^{\prime
},\left( \sigma_{1},3\right) ^{\prime },\left( \sigma_{2},2\right) ^{\prime },\left(
\sigma_{1},1\right) ^{\prime }\right\} ,\left\{ \left( \sigma_{1},1\right) ,\left(
\sigma_{2},1\right) ,\left( \sigma_{1},2\right) \right\} \text{ }\left\{ \left(
\sigma_{2},3\right) ,\left( \sigma_{2},4\right) \right\} \right\} \text{;} \\
\Pi _{2}^{\prime } &=&\left\{ \left\{ \left( \sigma_{1},4\right) ^{\prime
},\left( \sigma_{1},3\right) ^{\prime }\right\} ,\left\{ \left( \sigma_{2},2\right)
^{\prime },\left( \sigma_{1},1\right) ^{\prime },\left( \sigma_{1},1\right) ,\left(
\sigma_{2},1\right) \right\} ,\left\{ \left( \sigma_{1},2\right) ,\left(
\sigma_{2},3\right) ,\left( \sigma_{2},4\right) \right\} \right\} \text{.}
\end{eqnarray*}%

Define $\theta' := \left( \Pi
_{1}^{\prime }\left( \sigma_{1},1\right) ,\Pi _{2}^{\prime }\left( \sigma_{1},1\right)
\right)$. Conditional on the reported second-order belief at $\theta'$ (denoted $(\theta')^2$), we have
\[
\mu ^{\prime }\left( \omega _{1}|(\theta')^2\right)
=0\text{.}
\]%
Second, observe that 
\[
(\theta')^{2}=\theta^{2}\text{.}
\]
This means that the reported second-order beliefs at $\theta'$ under $\mu^{\prime}$ are the same as at $\theta$ under $\mu$. The
basic idea is that in $\mu ^{\prime }$, we \textquotedblleft
shift\textquotedblright\ the probability which $\mu $ assigns to $\left(
\sigma_{1},1\right) $ to the additional state $\left( \sigma_{1},1\right) ^{\prime }$.
This additional state helps us preserve the first-order belief of player $2$
at $\Pi _{2}^{\prime }\left( \sigma_{1},1\right) $ at $1/3$, while we
decrease the probability $\left( \sigma_{1},4\right) $ to $0$ to preserve the
first-order belief of player $1$ at $\Pi _{1}^{\prime }\left( \left(
\sigma_{1},4\right) ^{\prime }\right) $. This takes care of the states to the
left of $\left( \sigma_{1},1\right) $. For states to the right, we double the
probability of $\left( \sigma_{2},2\right) $, $\left( \sigma_{2},3\right) $, and $%
\left( \sigma_{2},4\right) $, so that the first-order belief of player $1$ at $\Pi
_{1}^{\prime }\left( \sigma_{1},1\right) $ and the first-order belief of player $2
$ at $\Pi_{2}^{\prime }\left( \sigma_{1},1\right) $ are also preserved. Therefore, just eliciting the first two orders of beliefs, we cannot distinguish between the model corresponding to $\mu$ (under which the posterior would be that both states are equally likely), and $\mu'$ (under which the posterior would be that the state is $\omega_2$ for sure). 

The construction for $m>2$ is similar but more involved, see below.

\subsection{Construction for \texorpdfstring{$m>2$}{m>2}}\label{app:higher-order}

Similarly, we can generalize the construction in Section~\ref{sec:lipman} to the case with $m\geq
3 $ as follows: 
\begingroup\small
\begin{eqnarray*}
\Pi _{1}=\left\{ \left\{ \left\{ \left( \sigma_{1},2k-1\right) ,\left(
\sigma_{1},2k\right) ,\left( \sigma_{2},k\right) \right\} :k=1,\ldots,2^{m-1}\right\}
,\left\{ \left( \sigma_{2},2^{m-1}+1\right) ,\ldots,\left( \sigma_{2},2^{m}\right)
\right\} \right\} \text{;} \\
\Pi _{2}=\left\{ \left\{ \left\{ \left( \sigma_{2},2k-1\right) ,\left(
\sigma_{2},2k\right) ,\left( \sigma_{1},k\right) \right\} :k=1,\ldots,2^{m-1}\right\}
,\left\{ \left( \sigma_{1},2^{m-1}+1\right) ,\ldots,\left( \sigma_{1},2^{m}\right)
\right\} \right\} \text{.}
\end{eqnarray*}%
\[
\mu \left( \sigma_{l},k\right) =\frac{1}{2^{m+1}},\forall l=1,2,\forall
k=1,2,\ldots,2^{m}\text{.} 
\]%
Without loss of generality, assume that $m\geq 3$ is odd. Construct the new
model as follows: 
\[
\Pi'_{1}=\left\{ 
\begin{array}{c}
\left\{ \left( \sigma_{1},1\right) ,\left( \sigma_{2},1\right) ,\left( \sigma_{1},2\right)
\right\} ,\left\{ \left( \sigma_{1},1\right) ^{\prime },\left( \sigma_{2},2\right)
^{\prime },\left( \sigma_{1},3\right) ^{\prime },\left( \sigma_{1},4\right) ^{\prime
}\right\} , \\ 
\left\{ \left\{ \left( \sigma_{1},2k-1\right) ^{\prime },\left( \sigma_{1},2k\right)
^{\prime },\left( \sigma_{2},k\right) ^{\prime }\right\} :k=2^{n-1}+1,\ldots,2^{n}%
\text{ and }n=3,5,\ldots,m-2\right\} , \\ 
\left\{ \left\{ \left( \sigma_{1},2k-1\right) ,\left( \sigma_{1},2k\right) ,\left(
\sigma_{2},k\right) \right\} :k=2^{n-1}+1,\ldots,2^{n}\text{ and }%
n=2,4,\ldots,m-1\right\} , \\ 
\left\{ \left( \sigma_{2},2^{m-1}+1\right)' ,\ldots,\left( \sigma_{2},2^{m}\right)'
\right\}%
\end{array}%
\right\} \text{;} 
\]%
\[
\Pi' _{2}=\left\{ 
\begin{array}{c}
\left\{ \left( \sigma_{1},1\right) ,\left( \sigma_{2},1\right) ,\left( \sigma_{1},1\right)
^{\prime },\left( \sigma_{2},2\right) ^{\prime }\right\} , \\ 
\left\{ \left\{ \left( \sigma_{2},2k-1\right) ^{\prime },\left( \sigma_{2},2k\right)
^{\prime },\left( \sigma_{1},k\right) ^{\prime }\right\} :k=2^{n-1}+1,\ldots,2^{n}%
\text{ and }n=2,4,\ldots,m-1\right\} , \\ 
\left\{ \left\{ \left( \sigma_{2},2k-1\right) ,\left( \sigma_{2},2k\right) ,\left(
\sigma_{1},k\right) \right\} :k=2^{n-1}+1,\ldots,2^{n}\text{ and }%
n=1,3,\ldots,m-2\right\} , \\ 
\left\{ \left( \sigma_{1},2^{m-1}+1\right),\ldots,\left(
\sigma_{1},2^{m}\right) \right\}%
\end{array}%
\right\} \text{.} 
\]
\endgroup

As in the case of $m=2$, we start from the partition cells $\left\{ \left(
\sigma_{1},1\right) ,\left( \sigma_{2},1\right) ,\left( \sigma_{1},2\right) \right\} $ and $%
\left\{ \left( \sigma_{1},1\right) ,\left( \sigma_{2},1\right) ,\left( \sigma_{1},1\right)
^{\prime },\left( \sigma_{2},2\right) ^{\prime }\right\} $ which contain $\left(
\sigma_{1},1\right) $. The states with $^{\prime }$ are those \textquotedblleft
to the left\textquotedblright\ of $\left( \sigma_{1},1\right) $, whereas the
states without $^{\prime }$ are like those \textquotedblleft to the
right\textquotedblright\ of $\left( \sigma_{1},1\right) $. We can then mimic the
idea for $m=2$ to solve for a prior $\mu ^{\prime }$ with the desired properties as follows:
\begin{enumerate}
\item Again, for $x>0$, set 
\[
\mu ^{\prime }\left( \sigma_{1},1\right) =0\text{ and }\mu ^{\prime }\left(
\sigma_{2},1\right) =\mu ^{\prime }\left( \sigma_{1},1\right) ^{\prime }= \mu
^{\prime }\left( \sigma_{2},2\right) ^{\prime }=x\text{.} 
\]

\item The number of states with $^{\prime }$, excluding $\left(
\sigma_{1},1\right) ^{\prime }$ and $\left( \sigma_{2},2\right) ^{\prime }$, is:
\begin{align*}
&y \equiv 3\times \left( 2^{1}+2^{3}+\cdots +2^{m-2}\right),\\    
\text{i.e., }& y = 2^m -2.
\end{align*}

We assign probability $\tfrac{x}{2}$ to each of these \textquotedblleft
left\textquotedblright\ states.

\item The number of states without $^{\prime }$ excluding $\left(
\sigma_{1},1\right) $ and $\left( \sigma_{2},1\right) $ (i.e., $\left( \sigma_{2},2\right) $
and states in $\Pi _{1}\left( \sigma_{2},k\right) $ where $k=2^{n-1}+1,\ldots,2^{n}$
and $n=2,4,\ldots,m-1$) is: 
\[
1+3\times \left( 2^{1}+2^{3}+\cdots +2^{m-2}\right)= y +1 \text{.} 
\]
We assign probability $2x$ to each of the \textquotedblleft
right\textquotedblright\ states.

\item Hence for the total probability to sum to $1$, we must have:
\begin{align*}
&3x + y \frac{x}{2} + (y+1) 2x = 1,\\
\implies & x = \frac{2}{10 + 5y} ,\\
\implies &x=\frac{1}{5\times2^{m}}\text{.} 
\end{align*}
\end{enumerate}
 In summary, observe that as desired, we have that
\begin{eqnarray*}
\mu \left( \omega _{1}|\left( \Pi _{1}\left( \sigma_{1},1\right) ,\Pi _{2}\left(
\sigma_{1},1\right) \right) ^{m}\right) &=&\frac{1}{2}\text{.} \\
\mu ^{\prime }\left( \omega _{1}|\left( \Pi _{1}^{\prime }\left(
\sigma_{1},1\right) ,\Pi _{2}^{\prime }\left( \sigma_{1},1\right) \right) ^{m}\right)
&=&0\text{.} \\
\left( \Pi _{1}^{\prime }\left( \sigma_{1},1\right) ,\Pi _{2}^{\prime }\left(
\sigma_{1},1\right) \right) ^{m} &=&\left( \Pi _{1}\left( \sigma_{1},1\right) ,\Pi
_{2}\left( \sigma_{1},1\right) \right) ^{m}\text{.}
\end{eqnarray*}

Clearly, we can flip left and right to construct another model\ $\Pi
_{1}^{\prime \prime }$ and $\Pi _{2}^{\prime \prime }$ with 
\begin{eqnarray*}
\mu ^{\prime }\left( \omega _{1}|\left( \Pi _{1}^{\prime \prime }\left(
\sigma_{1},1\right) ,\Pi _{2}^{\prime \prime }\left( \sigma_{1},1\right) \right)
^{m}\right) &=&1\text{;} \\
\left( \Pi _{1}^{\prime \prime }\left( \sigma_{1},1\right) ,\Pi _{2}^{\prime
\prime }\left( \sigma_{1},1\right) \right) ^{m} &=&\left( \Pi _{1}\left(
\sigma_{1},1\right) ,\Pi _{2}\left( \sigma_{1},1\right) \right) ^{m}\text{.}
\end{eqnarray*}

The construction here makes use of the feature that all higher-order
beliefs of the common knowledge hierarchy $t$ are degenerate. More
precisely, this feature ensures that in \textquotedblleft
shifting\textquotedblright\ out the probability which $\mu $ assigns to $%
\left( \sigma_{1},1\right) $, we can preserve the higher-order beliefs of $\Pi
_{1}\left( \sigma_{1},1\right) $ and $\Pi _{2}\left( \sigma_{1},1\right) $ as long as
we can preserve the first-order belief at every partition cell except for $%
\Pi _{1}\left( \sigma_{2},2^{m}\right) $ and $\Pi _{2}\left( \sigma_{1},2^{m}\right)
^{\prime }$. 

\section{Counterexamples to SP/ SC for multiple states}\label{app:counterex}

\begin{example}\label{ex}
Suppose there are three states $\Omega = \{\omega_1, \omega_2, \omega_3\}$, and three signals $S = \{s_1, s_2, s_3\}$. All agents have the initial uniform prior. Agents have a uniform prior over signals and receive conditionally i.i.d.\ signals, conditional on the state, so that their posteriors can be described as:
\begin{align*}
    \begin{array}{c|ccc}
         & P(\omega_1|\cdot) & P(\omega_2|\cdot) & P(\omega_3|\cdot) \\\hline
        s_1  & 0.4 & 0.21 & 0.39 \\
        s_2 & 0.45 & 0.54 & 0.01 \\
        s_3 & 0.44 & 0.06 & 0.5
    \end{array}
\end{align*}
i.e., the $i^{\text{th}}$ row and $j^{\text{th}}$ column is the posterior the agent places on $\omega_j$ upon signal $s_i$, $P(\omega_j\mid s_i)$. Note that this violates the assumption of ``diagonal dominance''---$P(\omega_1\mid s_2) > P(\omega_1\mid s_1)$, i.e., an agent who sees $s_2$ (or, indeed, $s_3$) places a higher posterior on $\omega_1$ than an agent who sees signal $s_1$. However, note that an agent who sees signal $s_i$ believes that $\omega_i$ is the most likely state.

To be clear, note that this can be achieved by the following distribution of signals conditional on the state (numbers rounded):
 \begin{align*}
     \begin{array}{c|ccc}
         & P(\cdot|\omega_1) & P(\cdot|\omega_2) & P(\cdot|\omega_3) \\\hline
        s_1  & 0.31 & 0.259 & 0.433 \\
        s_2 & 0.349 & 0.667 & 0.011 \\
        s_3 & 0.341 & 0.074 & 0.556
    \end{array}
 \end{align*}   
This results, by mechanical calculation, in the following average beliefs in the population:
 \begin{align*}
     \begin{array}{c|ccc}
         & \bar{\mu}(\omega_1) & \bar{\mu}(\omega_2) & \bar{\mu}(\omega_3) \\\hline
        \omega_1  & 0.431 & 0.436 & 0.422 \\
        \omega_2 & 0.274 & 0.419 & 0.130 \\
        \omega_3 & 0.295 & 0.145 & 0.447
    \end{array}
 \end{align*}  
Here the $i^\text{th}$ row and $j^\text{th}$ column represent the average belief in the population that the state is $\omega_i$ when the true state is $\omega_j$.

Note that the resulting expected population average belief is:
\begin{align*}
     \begin{array}{c|ccc}
         & \alpha(s_1) & \alpha(s_2) & \alpha(s_3) \\\hline
        \omega_1  & 0.429 & 0.434 & 0.427 \\
        \omega_2 & 0.248 & 0.351 & 0.211 \\
        \omega_3 & 0.323 & 0.215 & 0.362
    \end{array}
 \end{align*}  
i.e., the column labeled $\alpha(s_i)$ is the expected population average belief of an agent seeing signal $s_i$.

More generally, we can summarize the set of ``surprisingly popular'' state(s) in each case as:
\begin{align*}
    \begin{array}{c|ccc}
         & \omega_1 & \omega_2 & \omega_3 \\\hline
        \alpha(s_1)  & \omega_1, \underline{\omega_2} & \omega_1, \underline{\omega_2} & \omega_3 \\
        \alpha(s_2) & \omega_3 & \omega_1, \underline{\omega_2} & \omega_3 \\
        \alpha(s_3) & \omega_1, \underline{\omega_2} & \omega_1, \underline{\omega_2} & \omega_3
    \end{array}
\end{align*}
i.e., the entry in row $i$ corresponding to $\alpha(s_i)$ and column $j$ corresponding to state $\omega_j$ is the set of states that are surprisingly popular, in true state $\omega_j$, relative to the expected population average beliefs of an agent who received signal $s_i$. The underlined entry, if there are multiple, is the one that is surprising by the largest magnitude. For example, relative to an agent seeing $s_1$ in true state $\omega_1$, both $\omega_1$ and $\omega_2$ are surprisingly popular, but $\omega_2$ is the most surprisingly popular: $(0.274 - 0.248 = 0.028 > 0.02 = 0.431 - 0.429)$.

Note that in our example, the ``most surprisingly popular'' procedure fails to identify the true state when the state is $\omega_1$, regardless of the signal of who is polled for their expectation about the population average beliefs. Indeed, if an agent who saw signal $s_2$ is polled, the true state is not even in the set of surprisingly popular states. Note also the failure of first-order stochastic dominance of population average beliefs in this 3-state example---as we argued earlier, that was key to why SP works with binary states, but this can be violated with three or more states. 

Another closely related approach proposed is that of ``prediction normalized votes'' (see Section~1.3 of the Supplementary Appendix of~\cite{prelec2017solution}): each agent votes for the state they believe is more likely (which, in this example, was the signal they saw, i.e., an agent votes for state $\omega_i$ if they see signal $s_i$). The fraction of votes each state $\omega_j$ receives is normalized by the sum, over all states $\omega_k$, of the ratio of predicted vote fraction for $\omega_k$ by an $\omega_j$ voter to the predicted vote fraction for $\omega_j$ by an $\omega_k$ voter. A simple calculation shows that, given the specific numerical assumption we constructed, $\omega_2$ will have a higher prediction-normalized vote than $\omega_1$ when the true state is $\omega_1$.
\end{example}

\section{More Details about the Experiment}\label{app:experiment}

\subsection{Data and Code Availability}
The paper source files, the raw data used in the experiment, and the full open-source replication code for all reported empirical results are available at \url{https://github.com/malleshpai/robust-aggregation}. In particular, the repository includes the manuscript (\texttt{paper/}), raw data files (\texttt{Replication/data/raw/}), and the replication package with scripts, tests, and generated-table pipeline (\texttt{Replication/open/}).

\subsection{Consent and Ethical Approval}
The studies were all approved by the Department Ethics Review Committee (DERC) of NUS Department of Economics. For all studies, informed consent was obtained from the respondents using a consent form approved by DERC at the beginning of the online survey. Eligibility criteria required participants to be at least 18 years of age. Participants were informed of their rights to voluntary participation, confidentiality and anonymization of their responses, and minimal risks associated with participation.

\subsection{Detailed Survey Design}

\subsubsection{Knowledge Test}
Prior to eliciting participants' beliefs about NFT prices, we designed a brief knowledge assessment comprising six multiple-choice questions. The questions were selected according to two criteria: (1) they do not contain or imply any information about specific NFT prices or the price distribution in the NFT market, and (2) they only cover general concepts accessible to laypersons, avoiding any technical content that would require domain expertise. The full set of questions is provided below.

\textbf{Knowledge Assessment Questions:}

[1/6] What is ``minting'' an NFT?
\begin{enumerate}
    \item Save it to your phone
    \item \textit{Creating it and recording it on the blockchain}
    \item Uploading it to a gallery
    \item Selling it to the highest bidder
\end{enumerate}

[2/6] What is a common platform used for NFT trading and collecting?
\begin{enumerate}
    \item Google Photos
    \item Dropbox
    \item \textit{OpenSea}
    \item MetaMask
\end{enumerate}

[3/6] What are ``royalties'' in the NFT world?
\begin{enumerate}
    \item A reward for being an early buyer
    \item \textit{A way for artists to earn each time their NFT is resold}
    \item A fee paid to the blockchain network
    \item A code used to unlock NFTs
\end{enumerate}

[4/6] Which of the following best describes a ``utility NFT''?
\begin{enumerate}
    \item An NFT with no image or data
    \item An NFT used for mining cryptocurrency
    \item \textit{An NFT that provides additional functions, like access or benefits}
    \item An NFT that expires after 30 days
\end{enumerate}

[5/6] What is a ``PFP'' NFT often used for?
\begin{enumerate}
    \item \textit{Profile pictures and online identity}
    \item Online payments and banking
    \item Legal contracts
    \item Video streaming
\end{enumerate}

[6/6] Which of the following assets is fungible, unlike NFTs?
\begin{enumerate}
    \item A concert ticket
    \item A piece of digital art
    \item \textit{1 ETH (Ether)}
    \item A gaming skin with a unique serial number
\end{enumerate}

\subsubsection{Pedagogical Example}
Before participants begin evaluating NFTs, we present a brief pedagogical example to clarify the belief elicitation process. This is intended to mitigate potential ``learning'' effects. In the example, participants are guided through a simple scenario involving beliefs about ``\textit{Rain}'' versus ``\textit{No Rain}''. We then illustrate how the group average belief would be computed in the case where the population consists of two belief types. This example is intentionally designed to be unrelated to NFTs or price concepts, in order to avoid unintended priming effects or conceptual interference.

\begin{figure}[H]
\centering
\caption{Pedagogical Example on Self Belief}
\includegraphics[width=0.9\linewidth]{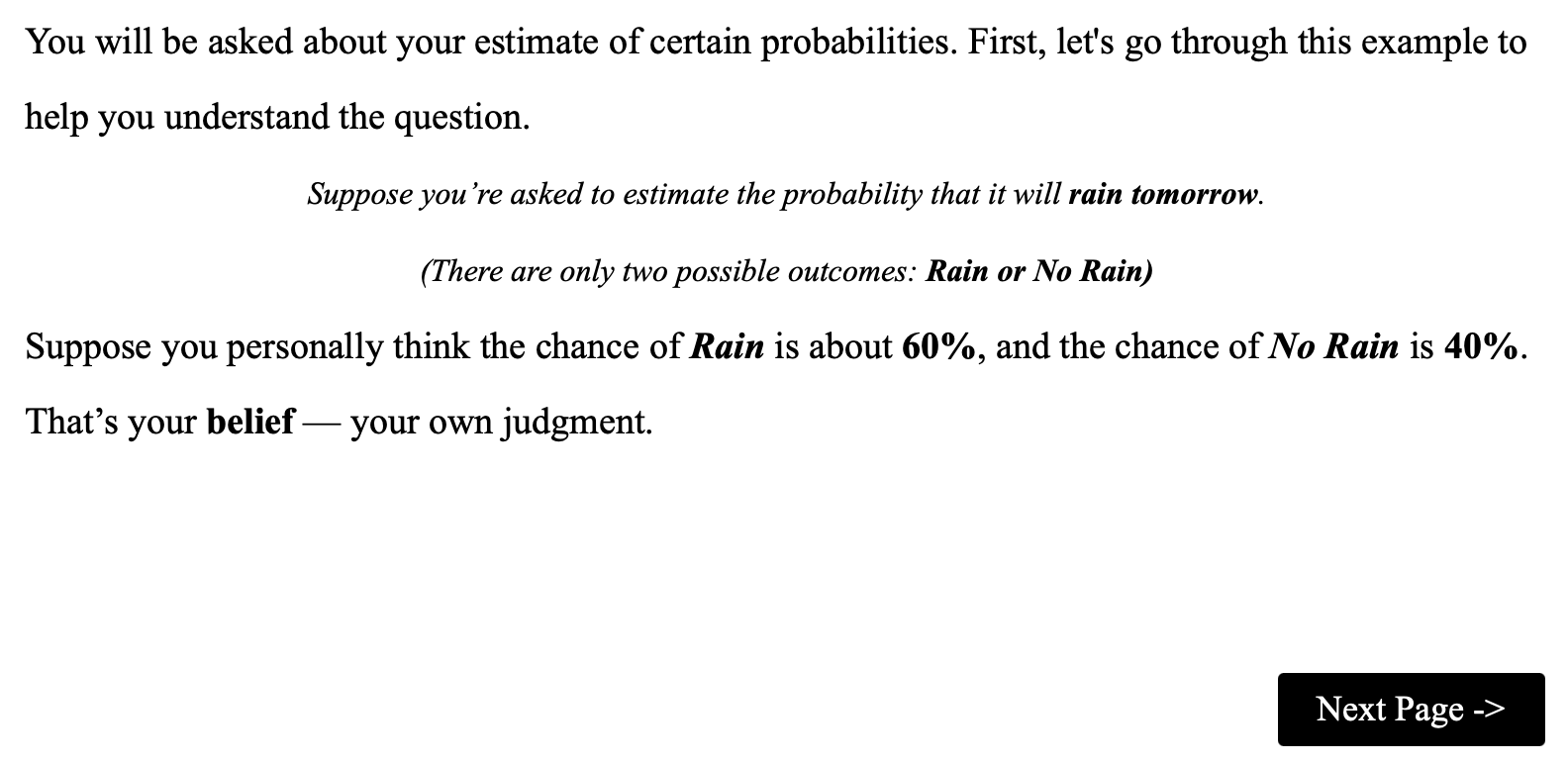} 
\includegraphics[width=0.9\linewidth]{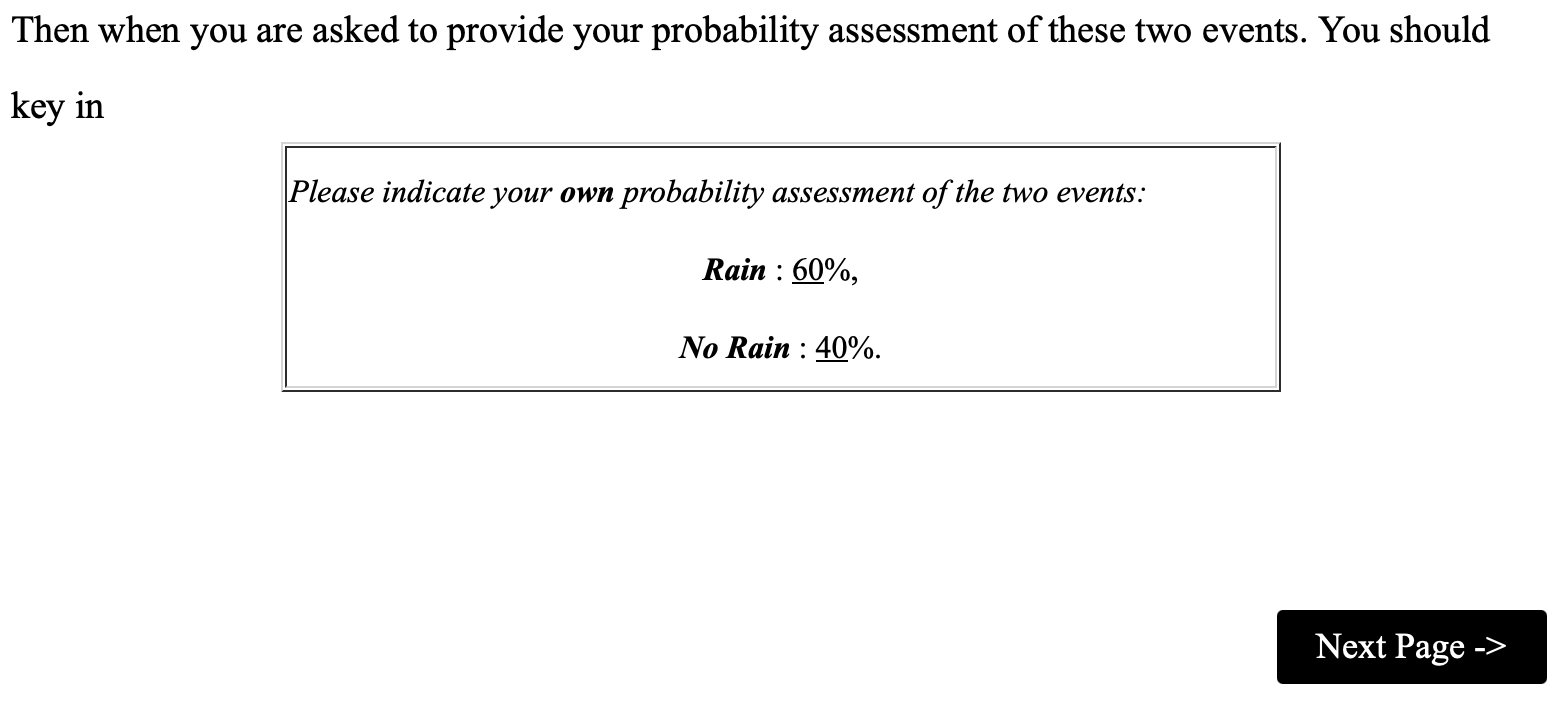}
\end{figure}

\begin{figure}[H]
\centering
\caption{Pedagogical Example on Group Average Belief}

\vspace{0.5em}
\includegraphics[width=0.9\linewidth]{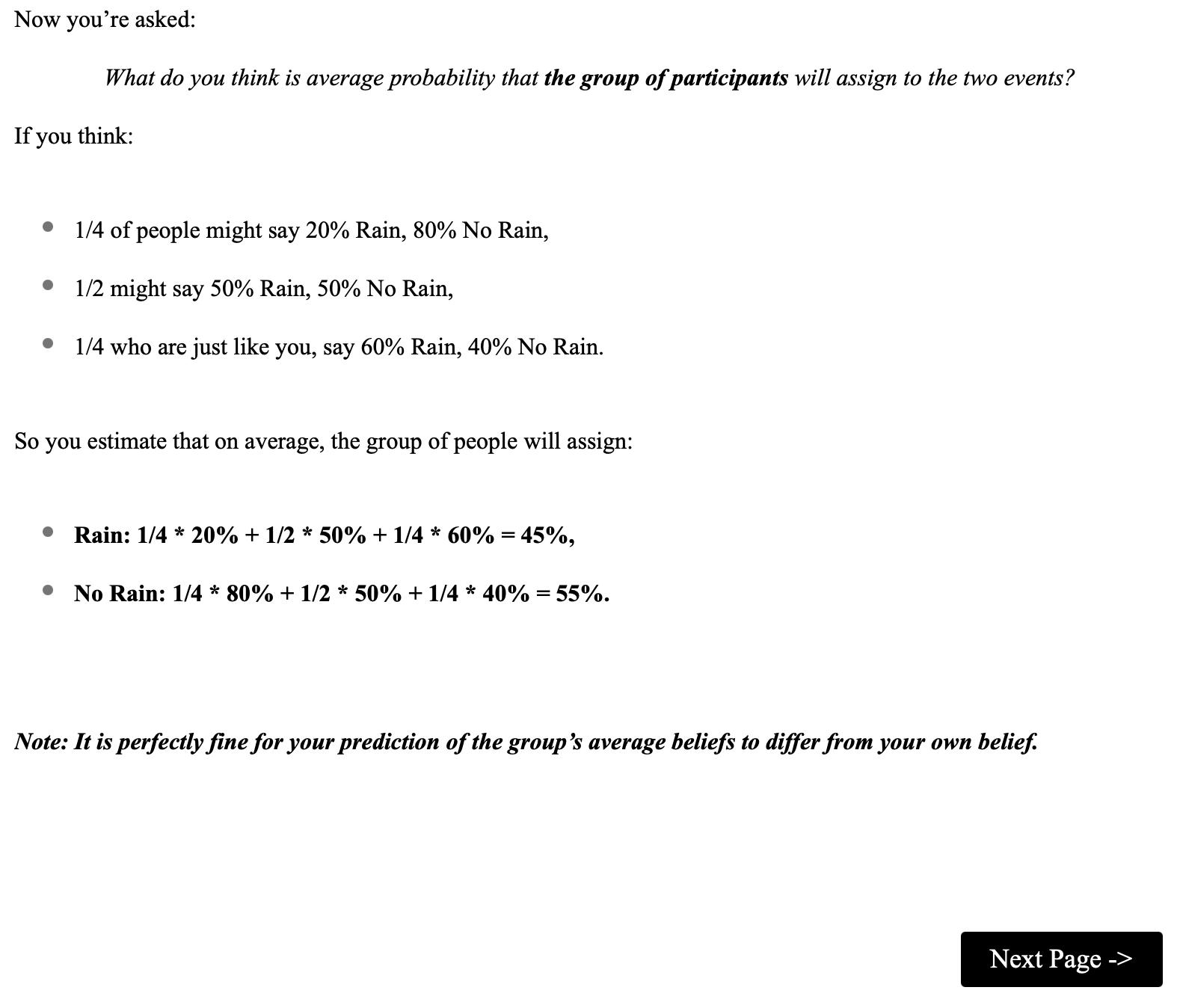} 
\includegraphics[width=0.9\linewidth]{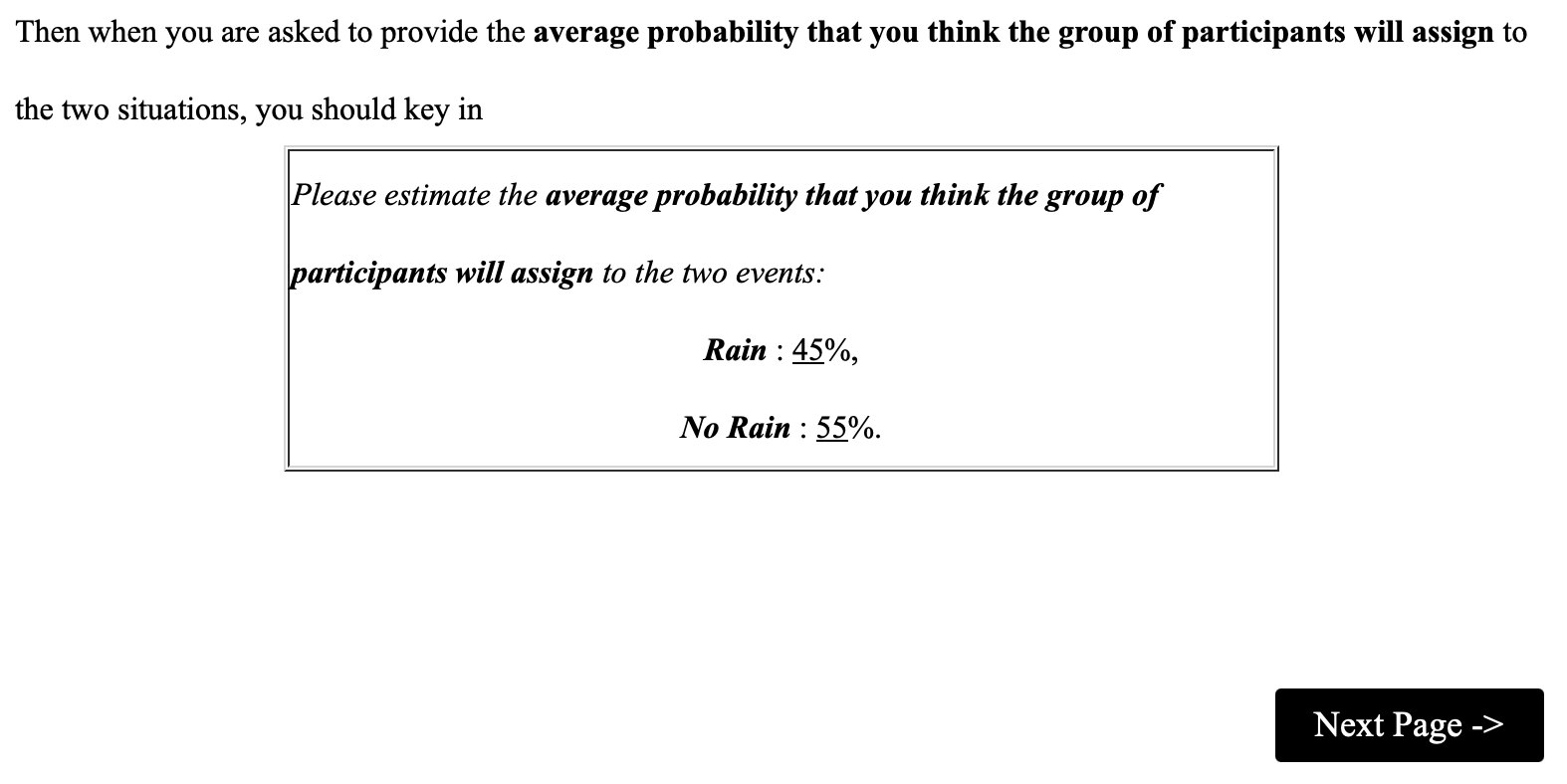}
\end{figure}

\subsubsection{Belief Elicitation Interface}
In the main task, upon presentation of each NFT image, the belief elicitation interface first displays the self-belief question immediately below the image. Participants may freely enter numeric values into the four input fields corresponding to the predefined price ranges. The sum of these inputs is automatically computed and displayed in the ``Total'' row. When participants click the ``Next'' button to proceed to the subsequent question on group belief, the system verifies whether the inputs sum to exactly 100. If not, a warning message is triggered, and participants are required to revise their entries before continuing.

After participants successfully submit their self-belief responses, the NFT image remains on screen while the question below transitions to ask beliefs about the group average. To reduce the risk of inattention to this change, the group belief question is visually highlighted, and a brief explanation is provided to clarify that one's belief about the group may differ from one's own belief.

\begin{figure}[H]
\centering
\caption{Self Belief Elicitation Question} 

\includegraphics[width=0.9\linewidth]{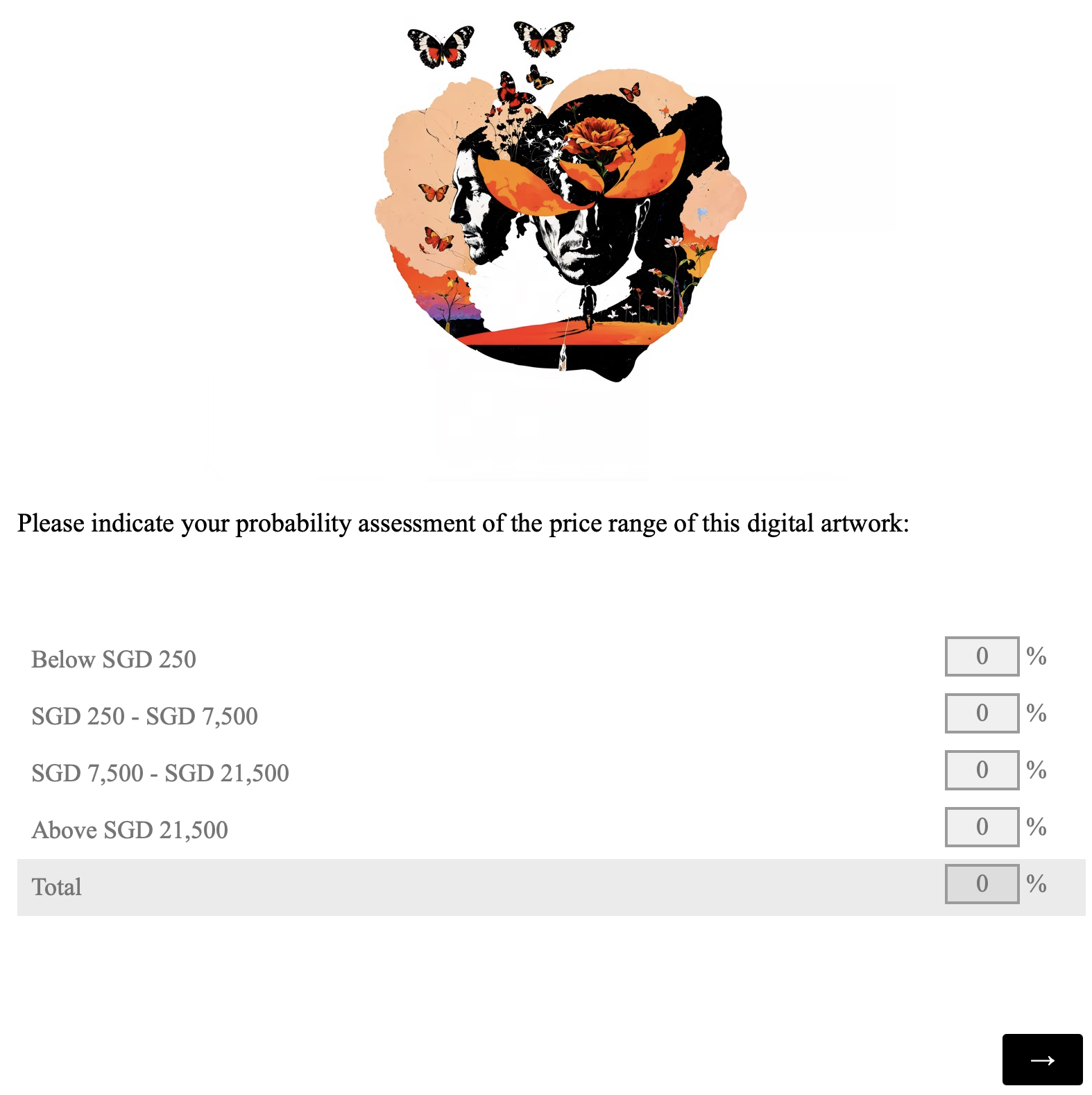}
\end{figure}
\begin{figure}[H]
\centering
\caption{Group Belief Elicitation Question} 

\includegraphics[width=0.9\linewidth]{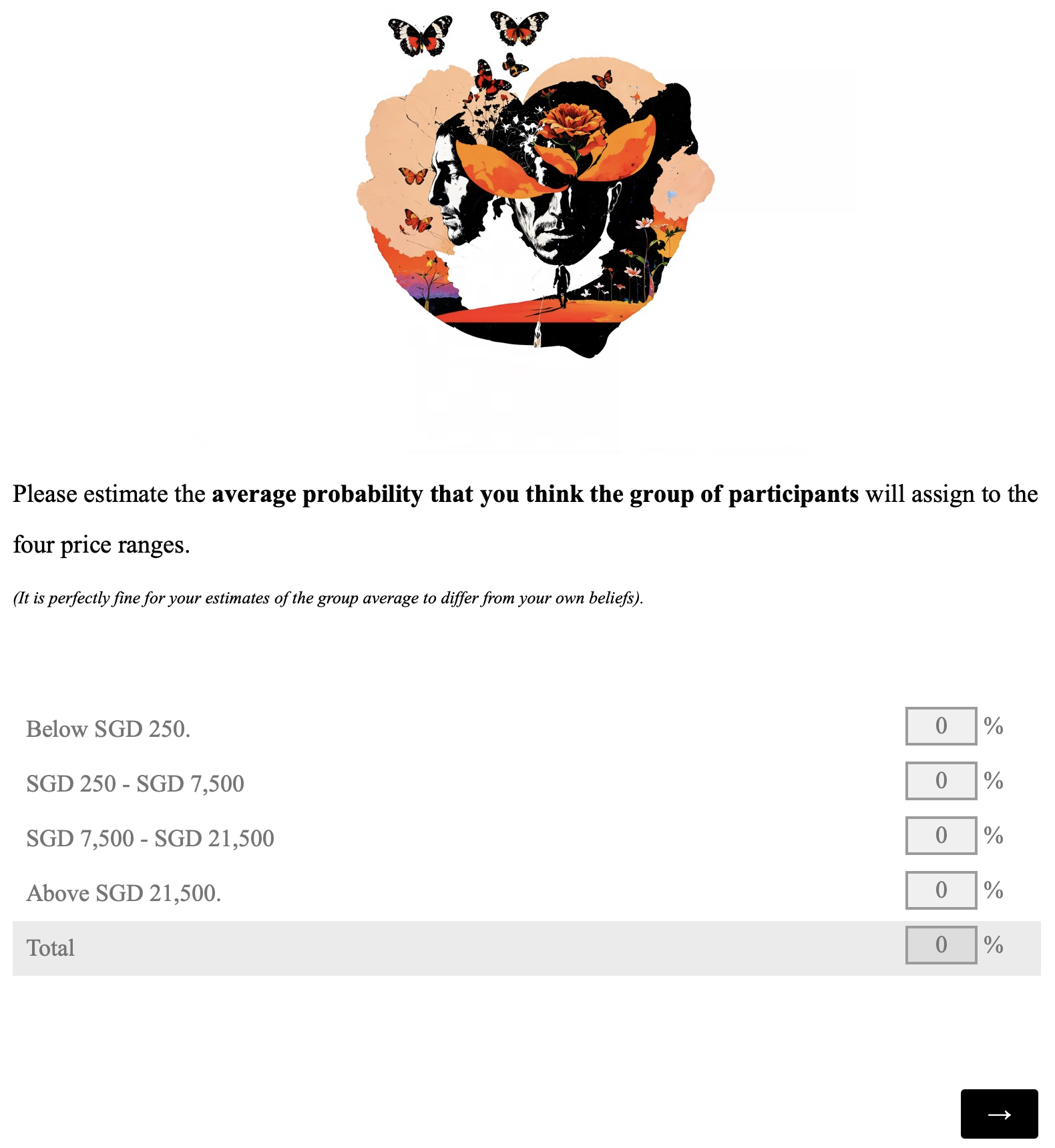}
\end{figure}

\subsection{Robustness Checks}

This section presents alternative specifications to verify the robustness of our main PMBA results. Table~\ref{tab:rob1} compares PMBA predictions across different implementation methods, including specifications with and without instrumental variables.

\subsubsection{Alternative Specifications}
Our baseline specification, reported in the main text, is estimated by OLS-IV with an intercept in both stages. As robustness checks, we vary (i) with/without intercept, (ii) unconstrained (OLS) vs.\ component-wise constrained (CLS), and (iii) IV vs.\ no-IV implementations. These exercises test whether the qualitative findings are sensitive to alternative structural assumptions.

First, omitting the intercept addresses the perfect collinearity implied by the fact that the four independent variables $\widehat{\mu}^i_k$ sum to one. This specification is substantively appropriate and plausible if we do not expect systematic upward or downward bias in second-order belief reports.

Second, to ensure structural interpretability of the coefficients as conditional probability beliefs, we impose the restriction $\bar{\mu}^{l}_k \in [0,1]$. For the no-intercept version, this is written as
\[
\alpha^l_i = \sum_{k=1}^4 \bar{\mu}^{l}_k \widehat{\mu}_i^k + \epsilon_i^l,
\quad \text{subject to} \quad \bar{\mu}^{l}_k \in [0,1].
\]

While constrained least squares (CLS), per se, could in principle be applied, there is no widely used or practically convenient implementation in common econometric toolkits, so we adopt a reparameterization approach. Specifically, we apply a logistic transformation,
\[
\bar{\mu}^{l}_k = \frac{\exp(z^{l}_k)}{1 + \exp(z^{l}_k)},
\]
which maps unconstrained parameters $z^{l}_k \in \mathbf{R}$ into the unit interval.\footnote{This reparameterization enforces component-wise bounds in $[0,1]$, but by itself it does not impose simplex summation. In the with-intercept CLS column, we use the omitted-category linear form (three regressors plus intercept in the four-state case) and recover the omitted coefficient by linear normalization. In the no-intercept CLS column, we estimate four constrained coefficients directly as in the displayed equation above.}

\begin{table}[htbp]
\footnotesize
\centering
\tiny
\begin{threeparttable}
\caption{Robustness Check: PMBA Performance Across Different Implementation Methods}\label{tab:rob1}
\begin{tabular}{cccccccccc}
\toprule
\textbf{NFT} & \textbf{True State} & \multicolumn{2}{c}{Regression With Intercept} & \multicolumn{2}{c}{Regression Without Intercept} & \multicolumn{2}{c}{Regression Without IV} & Majority \\
\cmidrule(lr){3-4} \cmidrule(lr){5-6} \cmidrule(lr){7-8}
& & \textit{OLS} & \textit{CLS} & \textit{OLS} & \textit{CLS} & \textit{OLS} & \textit{CLS} & \\
\midrule
2  & 1 & 1 & 2 & 1 & 2 & 2 & 2 & 1 \\
11 & 1 & 1 & 1 & 1 & 1 & 1 & 4 & 1 \\
17 & 1 & 1 & 1 & 1 & 1 & 1 & 1 & 1 \\
18 & 1 & 2 & 2 & 2 & 2 & 3 & 4 & 3 \\
7  & 2 & 1 & 1 & 1 & 1 & 3 & 1 & 1 \\
8  & 2 & 3 & 2 & 3 & 2 & 3 & 1 & 1 \\
9  & 2 & 1 & 4 & 1 & 1 & 4 & 4 & 1 \\
10 & 2 & 4 & 4 & 4 & 4 & 2 & 2 & 1 \\
13 & 2 & 2 & 2 & 2 & 2 & 3 & 4 & 1 \\
15 & 2 & 1 & 1 & 1 & 1 & 1 & 1 & 1 \\
19 & 2 & 1 & 1 & 1 & 1 & 1 & 1 & 1 \\
1  & 3 & 3 & 4 & 3 & 3 & 2 & 4 & 1 \\
4  & 3 & 1 & 1 & 1 & 4 & 1 & 1 & 1 \\
6  & 3 & 4 & 4 & 4 & 4 & 2 & 2 & 1 \\
12 & 3 & 4 & 4 & 4 & 4 & 2 & 4 & 2 \\
14 & 3 & 1 & 1 & 1 & 1 & 1 & 1 & 1 \\
3  & 4 & 2 & 3 & 2 & 3 & 3 & 4 & 1 \\
5  & 4 & 1 & 1 & 1 & 1 & 1 & 1 & 1 \\
16 & 4 & 1 & 4 & 1 & 3 & 2 & 4 & 1 \\
20 & 4 & 1 & 4 & 1 & 1 & 2 & 2 & 1 \\
\midrule
\multicolumn{2}{r}{Average absolute error by range} & 0.25 & 0.50 & 0.25 & 0.50 & 0.75 & 1.75 & 0.50 \\
& & 1.00 & 1.00 & 1.00 & 0.86 & 1.00 & 1.14 & 1.00 \\
& & 1.20 & 1.40 & 1.20 & 1.00 & 1.40 & 1.40 & 1.80 \\
& & 2.75 & 1.00 & 2.75 & 2.00 & 2.00 & 1.25 & 3.00 \\
\midrule
Paired $t$-test & $p$-value:  & 0.0481 & 0.0331 & 0.0481 & 0.0176 & 0.0675 & 0.3123 & \\
\bottomrule
\end{tabular}

\begin{tablenotes}[flushleft]
\footnotesize
\item \textit{Notes:} The four rows under ``Average absolute error by range'' report the mean absolute error for each true price range (Ranges 1--4). The first four columns use IV-based implementations (with/without intercept, OLS/CLS); the next two columns report no-IV implementations. OLS-IV with intercept and OLS-IV without intercept produce identical state predictions in this sample. Relative to majority voting, several PMBA implementations remain significantly better, while no-IV variants are directionally favorable but statistically weaker.
\end{tablenotes}
\end{threeparttable}
\end{table}

\subsection{NFTs Used in the Experiment}

This section provides the complete list of experimental materials. Table~\ref{tab:nft_list} displays all 20 NFTs used in the experiment, including their collection names, individual identifiers, actual market prices (in both ETH and SGD), and the price ranges they were assigned to for the belief elicitation task.

\begin{table}[htbp]
\centering
\caption{NFTs Used in the Experiment}\label{tab:nft_list}

\rotatebox{90}{%
\begin{minipage}{0.9\textheight} 
\centering
\begin{tabular}{@{}rllrrr@{}}
\toprule
\textbf{ID} & \textbf{Collection} & \textbf{Number} & \textbf{Price (ETH)} & \textbf{Price (SGD)} & \textbf{Range} \\
\midrule
1 & Azuki & 2765 & 4 & 10102.96 & 3 \\
2 & BBKC & 5272 & 0.0003 & 0.76 & 1 \\
3 & Bored Ape & 4491 & 13 & 32834.62 & 4 \\
4 & Citizen & 3769 & 6 & 15154.44 & 3 \\
5 & CryptoPunk & 4297 & 39 & 98503.86 & 4 \\
6 & Doodle & 9355 & 3.4899 & 8814.58 & 3 \\
7 & Enchanter Arabella of the Villa & -- & 0.31 & 782.98 & 2 \\
8 & Human Unreadable & 73 & 1.65 & 4167.47 & 2 \\
9 & One Gravity & 1770 & 1.27 & 3207.69 & 2 \\
10 & Lil Pudgy & 20828 & 1.089 & 2750.53 & 2 \\
11 & Memory & 343 & 0.0758 & 191.45 & 1 \\
12 & Meridian & 942 & 3.5 & 8840.09 & 3 \\
13 & Mutant Ape & 7734 & 2.159 & 5453.07 & 2 \\
14 & Milady & 9026 & 4.49 & 11340.57 & 3 \\
15 & Predator, First Draft & 291 & 1.05 & 2652.03 & 2 \\
16 & Pudgy Penguin & 8693 & 9.88 & 24954.31 & 4 \\
17 & Supduck & 6639 & 0.017 & 42.94 & 1 \\
18 & The Forbidden Museum Hall & -- & 0.0455 & 114.92 & 1 \\
19 & Wealthy Hypio Babies & 5042 & 1.62 & 4091.70 & 2 \\
20 & GigaChad & 95 & 25 & 63143.50 & 4 \\
\bottomrule
\end{tabular}

\vspace{1em}
\noindent\textit{Note:} This table shows all 20 NFTs used in our experiment. Prices are listed in both ETH (Ethereum cryptocurrency) and SGD (Singapore Dollars). The \textbf{Range} column shows which of our four price categories each NFT falls into: 1 (SGD 0--250), 2 (SGD 250--7,500), 3 (SGD 7,500--21,500), or 4 (SGD 21,500+). We verified that all NFTs stayed in their assigned ranges during the experiment.
\end{minipage}
} 

\end{table}

\subsection{Replication of Surprisingly Popular Procedure}

This section replicates the Surprisingly Popular (SP) procedure using our experimental data. Table~\ref{tab:sp_replication} shows the SP predictions for each NFT, comparing the average first-order beliefs with the average second-order beliefs to determine which states are ``surprisingly popular'' and providing statistical tests for the significance of these differences.

\begin{table}[htbp]\centering
\scriptsize
\begin{threeparttable}
\caption{Replication of Surprisingly Popular (SP) Procedure Results}\label{tab:sp_replication}
\sisetup{table-format=2.2, table-number-alignment=center}
\setlength{\tabcolsep}{4pt}
\renewcommand{\arraystretch}{0.95}
\begin{tabular}{ccccc}
\toprule
\textbf{NFT} & \textbf{True State (Range)} & $\hat{\mu}_{(1\&2)}$ & \textbf{Mean of }$\alpha_{(1\&2)}$ & \textbf{SP Prediction}\\
\midrule
2 & 1\&2 (1) & 68.73\% & 69.93\% & 3\&4 \\
   \textit{(N=103)} & & & (0.3139) & \\
  
11 & 1\&2 (1) & 63.32\% & 62.52\% & 1\&2 \\
    \textit{(N=99)} & & & (0.3973) & \\
   
17 & 1\&2 (1) & 80.10\% & 77.34\% & 1\&2 \\
   \textit{(N=94)} & & & (0.1251) & \\
   
18 & 1\&2 (1) & 48.74\% & 45.84\% & 1\&2 \\ 
   \textit{(N=99)} & & & (0.1577) & \\
\midrule
7 & 1\&2 (2) & 77.27\% & 78.16\% & 3\&4 \\
  \textit{(N=97)} & & & (0.3506) & \\

8 & 1\&2 (2) & 65.39\% & 66.99\% & 3\&4 \\
   \textit{(N=96)} & & & (0.2805) & \\

9 & 1\&2 (2) & 65.03\% & 60.72\% & 1\&2 \\
   \textit{(N=93)} & & & (0.0864) & \\

10 & 1\&2 (2) & 75.25\% & 75.73\% & 3\&4 \\
    \textit{(N=97)} & & & (0.4268) & \\

13 & 1\&2 (2) & 59.95\% & 61.73\% & 3\&4 \\
    \textit{(N=102)} & & & (0.2943) & \\

15 & 1\&2 (2) & 75.48\% & 70.10\% & 1\&2 \\
    \textit{(N=100)} & & & (0.0254) & \\

19 & 1\&2 (2) & 77.39\% & 73.41\% & 1\&2 \\
    \textit{(N=104)} & & & (0.0713) & \\
\midrule
1 & 3\&4 (3) & 65.93\% & 64.62\% & 1\&2 \\
   \textit{(N=102)} & & & (0.3261) & \\

4 & 3\&4 (3) & 73.74\% & 75.15\% & 3\&4 \\
   \textit{(N=100)} & & & (0.2875) & \\

6 & 3\&4 (3) & 72.64\% & 72.36\% & 1\&2 \\
   \textit{(N=100)} & & & (0.4562) & \\

12 & 3\&4 (3) & 56.52\% & 51.82\% & 1\&2 \\
    \textit{(N=101)} & & & (0.0586) & \\

14 & 3\&4 (3) & 80.86\% & 80.41\% & 1\&2 \\
   \textit{(N=98)} & & & (0.4291) & \\
\midrule
3 & 3\&4 (4) & 54.24\% & 57.20\% & 3\&4 \\
  \textit{(N=101)} & & & (0.1874) & \\

5 & 3\&4 (4) & 77.07\% & 77.17\% & 3\&4 \\
  \textit{(N=101)} & & & (0.4839) & \\

16 & 3\&4 (4) & 72.01\% & 74.99\% & 3\&4 \\
   \textit{(N=100)} & & & (0.1236) & \\

20 & 3\&4 (4) & 64.76\% & 62.74\% & 1\&2 \\
    \textit{(N=103)} & & & (0.2470) & \\
   
\bottomrule
\end{tabular}

\begin{tablenotes}[flushleft]\footnotesize
\item \textit{Notes:} This table shows how the SP procedure works with our NFT data. Since SP only works with 2 categories, we group our 4 price ranges into 2: ``low prices'' (Ranges 1 \& 2) and ``high prices'' (Ranges 3 \& 4). For each NFT, the number in parentheses is the one-sided $p$-value for testing whether $\hat{\mu}_{(1\&2)}-\text{Mean}(\alpha_{(1\&2)})=0$. Most tests are not statistically significant (high $p$-values), meaning SP struggles to distinguish between the two categories in our setting.
\end{tablenotes}
\end{threeparttable}
\end{table}

\end{document}